\documentclass{article}

\usepackage[english]{babel}
\usepackage[utf8x]{inputenc}
\usepackage[T1]{fontenc}
\usepackage[charter]{mathdesign}
\usepackage{eucal}
\usepackage{microtype}

\usepackage[margin=1in,marginparwidth=1in]{geometry}

\usepackage{amsthm}
\usepackage{amsmath}
\usepackage{graphicx}
\usepackage[linesnumbered]{algorithm2e}
\SetArgSty{}
\usepackage[colorlinks=true, allcolors=blue]{hyperref}
\usepackage{enumitem}

\usepackage{tikz}
\usetikzlibrary{calc,positioning,backgrounds,intersections,patterns}
\usepackage{subcaption}

\newtheorem{theorem}{Theorem}
\newtheorem{lemma}[theorem]{Lemma}
\newtheorem{observation}[theorem]{Observation}

\theoremstyle{definition}
\newtheorem{definition}[theorem]{Definition}

\DeclareMathOperator{\OPT}{OPT}

\begin{document}

\title{Polynomial-Time Algorithms for Phylogenetic Inference Problems involving duplication and reticulation}
\author{Leo van Iersel\footnote{Delft Institute of Applied Mathematics, Delft
    University of Technology, Van Mourik Broekmanweg 6, 2628 XE, Delft, The
    Netherlands, \{L.J.J.vanIersel, R.Janssen-2, M.E.L.Jones,
      Y.Murakami\}@tudelft.nl. Research funded in part by the Netherlands
    Organization for Scientific Research (NWO), including Vidi grant
    639.072.602, and partly by the 4TU Applied Mathematics Institute.} \and
  Remie Janssen\footnotemark[1] \and
  Mark Jones\footnotemark[1] \and
  Yukihiro Murakami\footnotemark[1] \and
  Norbert Zeh\footnote{Faculty of Computer Science, Dalhousie University, 6050
    University Ave, Halifax, NS B3H 1W5, Canada, nzeh@cs.dal.ca.  Research
    funded in part by the Natural Sciences and Engineering Research Council of
    Canada.}} 
\maketitle

\begin{abstract}\noindent
  A common problem in phylogenetics is to try to infer a species phylogeny from
  gene trees. We consider different variants of this problem. The first
  variant, called \textsc{Unrestricted Minimal Episodes Inference}, aims at
  inferring a species tree based on a model with speciation and duplication where
  duplications are clustered in duplication episodes. The goal is to minimize
  the number of such episodes. The second variant, \textsc{Parental
    Hybridization}, aims at inferring a species \emph{network} based on a model
  with speciation and reticulation. The goal is to minimize the number of
  reticulation events. It is a variant of the well-studied
  \textsc{Hybridization Number} problem with a more generous view on which gene
  trees are consistent with a given species network.
  We show that these seemingly different problems are in fact closely related
  and can, surprisingly, both be solved in polynomial time, using a structure
  we call ``beaded trees''. However, we also show that methods based on these
  problems have to be used with care because the optimal species phylogenies
  always have a restricted form. To mitigate this problem, we introduce a
  new variant of \textsc{Unrestricted Minimal Episodes Inference} that
  minimizes the duplication episode depth. We prove that this new variant of
  the problem can also be solved in polynomial time.
\end{abstract}

\section{Introduction}\label{sec:introduction}

\emph{Phylogenetic trees} are commonly used to represent the evolutionary
history of a set of taxa.  The leaves represent extant taxa; internal nodes
represent speciation events that caused lineages to diverge. If we assume that
the only process is speciation and that no incomplete
lineage sorting occurs, then any gene will have a gene tree that is consistent
with the species phylogeny. In such cases, there exist efficient algorithms to
reconstruct a species tree from gene trees.  There are, however, evolutionary
processes beyond vertical inheritance of genetic material and speciation events
that make it more challenging to reconstruct the real evolutionary history.
Examples of such processes are hybridization, horizontal gene transfer, and
duplication. Each of these processes can result in discordance between gene
trees.

This leads to a number of problems in which the task is to minimize the number
of such complicating events.  In \emph{reconciliation problems}, we are given
the gene trees together with the species phylogeny, and the task is to find
optimal embeddings of the gene trees into the species phylogeny.  Such methods
are for example used to estimate dates of duplications, to discover
relationships between duplicate
genes~\cite{chan2013reconciliation,vernot2008reconciliation}, and to
reconstruct the infection history of parasites~\cite{page1994maps}. In
\emph{inference problems}, only the gene trees are given and we aim to find a
species phylogeny that minimizes the discordance with the gene trees. Such
problems are  relevant when the species phylogeny is not yet known with
certainty.

\paragraph{Duplication minimization problems.}

Gene duplications happen as a consequence of errors in the DNA replication
process.  This leads to a species having multiple copies of the same gene.
There exist many types of gene duplication, which depend on the positions of
errors within the replication
process~\cite{panchy2016evolution,reams2015mechanisms}. The scale of gene
duplications is determined by the number of genes that get duplicated.  An
extreme example of a large-scale duplication is \emph{Whole Genome Duplication
  (WGD)}, in which every gene in the genome is duplicated.  This process, also
known as polyploidization, occurs as a result of an error in separation of
chromosomes during gamete production.  It is most common in plants (see, e.g.,
Figure~\ref{fig:WGDExample}) but has also occurred in
animals~\cite{zhang2003evolution}, and there are two WGD events even in the
evolutionary history leading to humans \cite{dehal2005two,ohno1968evolution}.
Large-scale duplications provide species with diversification potential, giving
them the ability to quickly adapt to a changing
environment~\cite{ohno1970book,glasauer2014whole,zhang2003evolution}.

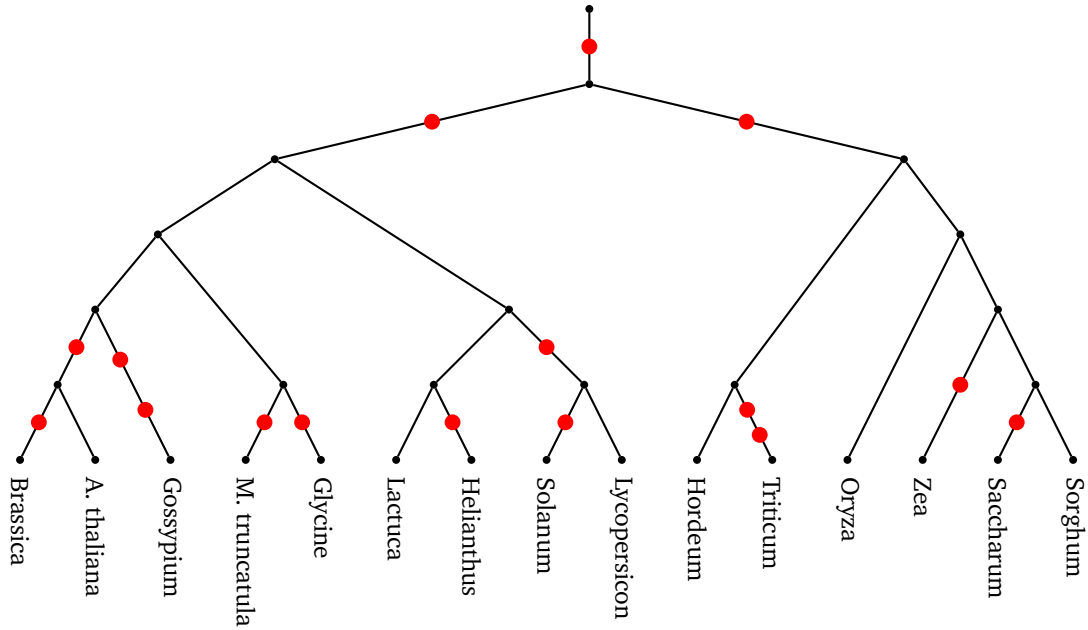
\begin{figure}[t]
  \centering
  \begin{tikzpicture}[
    node/.style={fill=black,circle,inner sep=0pt,minimum size=3pt,outer sep=0pt},
    dup/.style={fill=red,circle,inner sep=0pt,minimum size=6pt},
    edge/.style={draw,thick}]
    \foreach \i in {0,...,14} {
      \node [node] (l\i) at (\i,0) {};
    }
    \node [anchor=west,rotate=270,xshift=3pt] at (l0) {Brassica};
    \node [anchor=west,rotate=270,xshift=3pt] at (l1) {A. thaliana};
    \node [anchor=west,rotate=270,xshift=3pt] at (l2) {Gossypium};
    \node [anchor=west,rotate=270,xshift=3pt] at (l3) {M. truncatula};
    \node [anchor=west,rotate=270,xshift=3pt] at (l4) {Glycine};
    \node [anchor=west,rotate=270,xshift=3pt] at (l5) {Lactuca};
    \node [anchor=west,rotate=270,xshift=3pt] at (l6) {Helianthus};
    \node [anchor=west,rotate=270,xshift=3pt] at (l7) {Solanum};
    \node [anchor=west,rotate=270,xshift=3pt] at (l8) {Lycopersicon};
    \node [anchor=west,rotate=270,xshift=3pt] at (l9) {Hordeum};
    \node [anchor=west,rotate=270,xshift=3pt] at (l10) {Triticum};
    \node [anchor=west,rotate=270,xshift=3pt] at (l11) {Oryza};
    \node [anchor=west,rotate=270,xshift=3pt] at (l12) {Zea};
    \node [anchor=west,rotate=270,xshift=3pt] at (l13) {Saccharum};
    \node [anchor=west,rotate=270,xshift=3pt] at (l14) {Sorghum};
    \coordinate (x01) at (barycentric cs:l0=0.5,l1=0.5);
    \node [node] (p01) at (x01 |- 0,1) {}; \path [edge] (l0) -- (p01) -- (l1);
    \coordinate (x02) at (barycentric cs:p01=0.667,l2=0.333);
    \node [node] (p02) at (x02 |- 0,2) {}; \path [edge] (p01) -- (p02) -- (l2);
    \coordinate (x34) at (barycentric cs:l3=0.5,l4=0.5);
    \node [node] (p34) at (x34 |- 0,1) {}; \path [edge] (l3) -- (p34) -- (l4);
    \coordinate (x04) at (barycentric cs:p02=0.667,p34=0.333);
    \node [node] (p04) at (x04 |- 0,3) {}; \path [edge] (p02) -- (p04) -- (p34);
    \coordinate (x56) at (barycentric cs:l5=0.5,l6=0.5);
    \node [node] (p56) at (x56 |- 0,1) {}; \path [edge] (l5) -- (p56) -- (l6);
    \coordinate (x78) at (barycentric cs:l7=0.5,l8=0.5);
    \node [node] (p78) at (x78 |- 0,1) {}; \path [edge] (l7) -- (p78) -- (l8);
    \coordinate (x58) at (barycentric cs:p56=0.5,p78=0.5);
    \node [node] (p58) at (x58 |- 0,2) {}; \path [edge] (p56) -- (p58) -- (p78);
    \coordinate (x08) at (barycentric cs:p04=0.667,p58=0.333);
    \node [node] (p08) at (x08 |- 0,4) {}; \path [edge] (p04) -- (p08) -- (p58);
    \coordinate (x910) at (barycentric cs:l9=0.5,l10=0.5);
    \node [node] (p910) at (x910 |- 0,1) {}; \path [edge] (l9) -- (p910) -- (l10);
    \coordinate (x1314) at (barycentric cs:l13=0.5,l14=0.5);
    \node [node] (p1314) at (x1314 |- 0,1) {}; \path [edge] (l13) -- (p1314) -- (l14);
    \coordinate (x1214) at (barycentric cs:l12=0.333,p1314=0.667);
    \node [node] (p1214) at (x1214 |- 0,2) {}; \path [edge] (l12) -- (p1214) -- (p1314);
    \coordinate (x1114) at (barycentric cs:l11=0.25,p1214=0.75);
    \node [node] (p1114) at (x1114 |- 0,3) {}; \path [edge] (l11) -- (p1114) -- (p1214);
    \coordinate (x914) at (barycentric cs:p910=0.25,p1114=0.75);
    \node [node] (p914) at (x914 |- 0,4) {}; \path [edge] (p910) -- (p914) -- (p1114);
    \coordinate (x014) at (barycentric cs:p08=0.5,p914=0.5);
    \node [node] (p014) at (x014 |- 0,5) {}; \path [edge] (p08) -- (p014) -- (p914);
    \node [node] (r) at (p014 |- 0,6) {}; \path [edge] (p014) -- (r);
    \node [dup] at (barycentric cs:r=0.5,p014=0.5) {};
    \node [dup] at (barycentric cs:l0=0.5,p01=0.5) {};
    \node [dup] at (barycentric cs:p01=0.5,p02=0.5) {};
    \node [dup] at (barycentric cs:l2=0.333,p02=0.667) {};
    \node [dup] at (barycentric cs:l2=0.667,p02=0.333) {};
    \node [dup] at (barycentric cs:l3=0.5,p34=0.5) {};
    \node [dup] at (barycentric cs:l4=0.5,p34=0.5) {};
    \node [dup] at (barycentric cs:l6=0.5,p56=0.5) {};
    \node [dup] at (barycentric cs:l7=0.5,p78=0.5) {};
    \node [dup] at (barycentric cs:p78=0.5,p58=0.5) {};
    \node [dup] at (barycentric cs:l10=0.333,p910=0.667) {};
    \node [dup] at (barycentric cs:l10=0.667,p910=0.333) {};
    \node [dup] at (barycentric cs:l12=0.5,p1214=0.5) {};
    \node [dup] at (barycentric cs:l13=0.5,p1314=0.5) {};
    \node [dup] at (barycentric cs:p08=0.5,p014=0.5) {};
    \node [dup] at (barycentric cs:p914=0.5,p014=0.5) {};
  \end{tikzpicture}
  \caption{An example of a plant phylogeny published
    in~\cite{adams2005polyploidy}. Red dots indicate known large-scale
    duplication events.}
  \label{fig:WGDExample}
\end{figure}

In their seminal paper \cite{goodman1979fitting}, Goodman et al.\ pioneered the
parsimony approach to reconciling gene trees with species trees. This has
motivated researchers to explore reconciliation through different models
whilst optimizing some measure of the number of duplication events.

The studied problems can be categorized according to how duplication events are
clustered to form duplication episodes and which restrictions are put on the
possible locations of duplications~\cite{paszek2017efficient}. We focus on the
\emph{minimal episodes (ME)} clustering where duplications can be clustered if they
occur on the same branch of the species phylogeny and have no
ancestor-descendant relationship in any gene tree (see
Figure~\ref{fig:DNADuplication}). We believe this way of clustering to be most relevant
since it can cluster duplications that can be part of a single (large-scale)
duplication event. We consider the \emph{unrestricted ME} (called the FHS-model in \cite{paszek2017efficient}),
which does not put any restrictions on the locations of gene duplications.

\begin{figure}[t]
  \hspace*{\stretch{1}}%
  \subcaptionbox{}{%
    \begin{tikzpicture}[
      node/.style={fill=black,circle,inner sep=0pt,minimum size=3pt,outer sep=0pt},
      gene/.style={fill=black!35,rectangle,draw=black,minimum height=0.6cm},
      edge/.style={draw,thick}]
      \path [fill=black!20] (-0.5,0) rectangle (3.5,6);
      \path [draw] (-0.5,0) -- (-0.5,6) (3.5,0) -- (3.5,6);
      \node [node,label=below:$A_{11}$] (a11) at (0,0) {};
      \node [node,label=below:$A_{12}$] (a12) at (1,0) {};
      \node [node,label=below:$A_{21}$] (a21) at (2,0) {};
      \node [node,label=below:$A_{22}$] (a22) at (3,0) {};
      \node [node] (a1) at (0.5,2) {};
      \node [node] (a2) at (2.5,2) {};
      \node [node] (a) at (1.5,4) {};
      \node [node] (r) at (1.5,6) {};
      \path [edge] (a11) -- (a1) -- (a12) (a21) -- (a2) -- (a22)
      (a1) -- (a) -- (a2) (a) -- (r);
      \path [draw] (0.2,1.8) rectangle (2.8,2.2);
      \path [draw] (1.2,3.8) rectangle (1.8,4.2);
      \path (0,-1);
    \end{tikzpicture}}%
  \hspace*{\stretch{1}}%
  \subcaptionbox{}{%
    \begin{tikzpicture}[
      node/.style={fill=black,circle,inner sep=0pt,minimum size=3pt,outer sep=0pt},
      gene/.style={fill=black!35,rectangle,draw=black,minimum height=0.6cm},
      edge/.style={draw,thick}]
      \begin{scope}[local bounding box=bb1]
        \node [gene] (g11) at (0,0) {\parbox{1.25cm}{\raggedright$A_{11}$}}; \node [node] (a11) at (0,0) {};
        \node [gene] (g21) at (2,0) {\parbox{1.25cm}{\raggedright$A_{21}$}}; \node [node] (a21) at (2,0) {};
      \end{scope}
      \begin{scope}[on background layer]
        \coordinate [xshift=-0.2cm,yshift=0.2cm] (nw) at (bb1.north west);
        \coordinate [xshift=0.2cm,yshift=-0.2cm] (se) at (bb1.south east);
        \path [draw=black,fill=black!20] (nw) rectangle (se);
        \node [anchor=north] at (1,0 |- se) {$B_2$};
      \end{scope}
      \begin{scope}[local bounding box=bb2]
        \node [gene] (g12) at (4.5,0) {\parbox{1.3cm}{\raggedleft$A_{12}$}}; \node [node] (a12) at (4.5,0) {};
        \node [gene] (g22) at (6.5,0) {\parbox{1.3cm}{\raggedleft$A_{22}$}}; \node [node] (a22) at (6.5,0) {};
      \end{scope}
      \begin{scope}[on background layer]
        \coordinate [xshift=-0.2cm,yshift=0.2cm] (nw) at (bb2.north west);
        \coordinate [xshift=0.2cm,yshift=-0.2cm] (se) at (bb2.south east);
        \path [draw=black,fill=black!20] (nw) rectangle (se);
        \node [anchor=north] at (5.5,0 |- se) {$B_2$};
      \end{scope}
      \begin{scope}[local bounding box=bb3]
        \node [gene] (g1) at (2.25,3) {\parbox{1.3cm}{\raggedright$A_1$}}; \node [node] (a1) at (2.25,3) {};
        \node [gene] (g2) at (4.25,3) {\parbox{1.3cm}{\raggedleft$A_2$}}; \node [node] (a2) at (4.25,3) {};
      \end{scope}
      \begin{scope}[on background layer]
        \coordinate [xshift=-0.2cm,yshift=0.2cm] (nw) at (bb3.north west);
        \coordinate [xshift=0.2cm,yshift=-0.2cm] (se) at (bb3.south east);
        \path [draw=black,fill=black!20] (nw) rectangle (se);
        \node [anchor=north] at (3.25,0 |- se) {$B$};
      \end{scope}
      \node [gene] (g) at (3.25,6) {\parbox{1.3cm}{\raggedright$A$}}; \node [node] (a) at (3.25,6) {};
      \coordinate (y1) at (barycentric cs:a=0.4,a1=0.6);
      \node [node] (v1) at (y1 -| a) {};
      \node [node] (v2) at (barycentric cs:a1=1,v1=-0.5) {};
      \node [node] (v3) at (barycentric cs:a2=1,v1=-0.5) {};
      \path [edge] (a) -- (v1) -- (v2) -- (a11) (v2) -- (a12) (v1) -- (v3) -- (a21) (v3) -- (a22);
      \path [draw] (g11) +(-1.5,0) -- (g11) -- (g21) -- (g12) -- (g22) -- +(1.5,0);
      \path [draw] (g1) +(-1.5,0) -- (g1) -- (g2) -- +(1.5,0);
      \path [draw] (a) +(-1.5,0) -- (g) -- +(1.5,0);
    \end{tikzpicture}}%
  \hspace*{\stretch{1}}%
  \caption{(a) A gene tree embedded into a (branch of) a species tree with
    duplications clustered as in ME clustering. Duplication clusters are shown
    as rectangles. (b) A representation of the DNA of the species at
    different points in the species tree (at corresponding heights). In the
    first duplication, the gene $A$ (dark rectangle) is duplicated, forming
    $A_1$ and $A_2$. In the second duplication, the block $B$ (light rectangle)
    comprising $A_1$ and $A_2$ is duplicated. This results in four homologous
    copies of gene $A$ using only two duplication episodes. The gene tree is
    also drawn through the depictions of the DNA.}
  \label{fig:DNADuplication}
\end{figure}
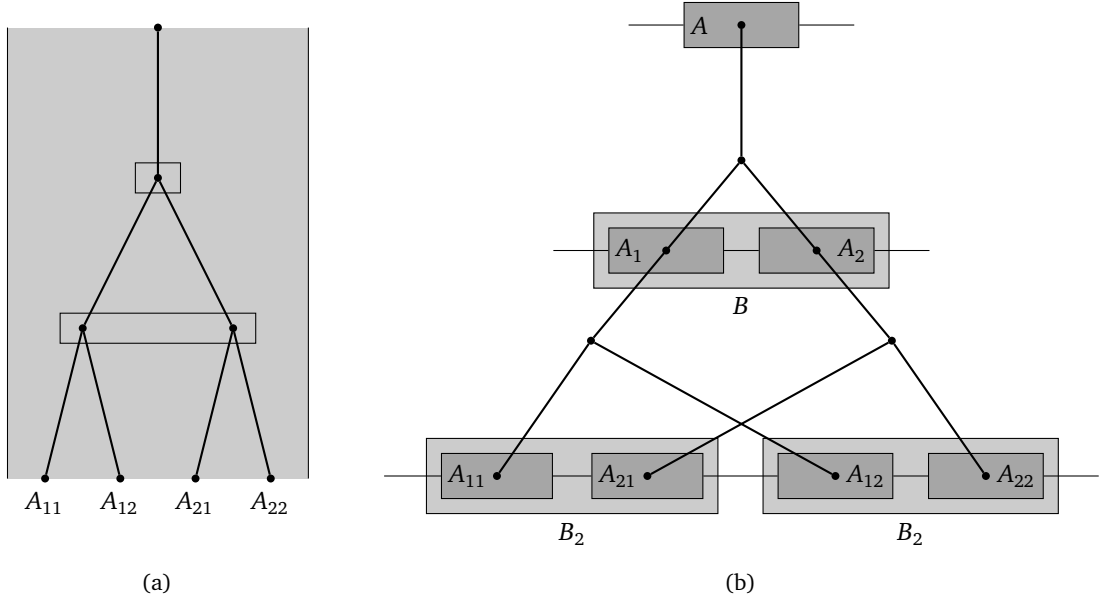

Reconciliation problems have been studied intensively, especially 
without clustering of duplication
events~\cite{page1994maps,page1997gene,doyon2009space}. Several reconciliation
problems with clustering of duplication events have been proven to be
computationally intractable \cite{fellows1998multiple,ma2000gene}, whereas for
others there are polynomial-time~\cite{bansal2008multiple,burleigh2008locating}
or even
linear-time~\cite{mettanant2008linear,luo2011linear,paszek2017efficient}
algorithms. For unrestricted ME reconciliation, which was recently shown to be NP-hard~\cite{dondi2019reconciling},
there only exists an
exponential-time algorithm~\cite{paszek2017efficient}.

It has also been attempted to use reconciliation as a basis for inferring
species phylogenies. For the unrestricted ME inference problem,
\cite{burleigh2010inferring} used a brute-force approach on all possible
species phylogenies. It was observed that unrestricted ME fails to
rank the true species tree among the top third of all topologies (for real data with a well accepted species phylogeny). It was
suggested that a possible reason for this anomaly is that duplication episodes
near the root are overly powerful under this criterion. A similar observation was
made in a more recent reconciliation study~\cite{paszek2017efficient}. However,
neither article gives a mathematical explanation for this phenomenon. It should
also be noted that, since the number of possible species phylogenies grows
extremely quickly with the number of species, brute-force approaches are only
feasible for very small data sets.

Inference problems are generally assumed to be computationally intractable.
However, NP-hardness has been proven only for some restricted inference problem
without clustering of duplication events~\cite{ma2000gene}. For an inference
problem with restricted clustering (called gene duplication (GD) clustering
in~\cite{paszek2017efficient}), NP-hardness was suggested
in~\cite{fellows1998multiple} but not proven. Because of the suspected
intractability of these problems, some heuristic inference approaches have been
attempted using efficient algorithms for reconciliation (see,
e.g.,~\cite{guigo1996reconstruction}).

\paragraph{Reticulation minimization problems.}

Another possible cause of discordance between gene trees is \emph{reticulate
  evolution}, such as hybridization or horizontal gene transfer.  In such
cases, the evolutionary history is represented by a \emph{phylogenetic network}
rather than a tree. 

Reticulate evolution can occur in nature when genetic material from one species
is transmitted to some other species. In asexual species, such transfers are
called \emph{horizontal gene transfers (HGT)}. In bacteria, for example, this
happens in nature by transformation (take-up from the environment) or
conjugation (transmission from another bacterium). In sexual species, a cause
for such transmissions can be \emph{hybridization}, where individuals from
different but related taxa mate. There is also evidence that horizontal gene
transfers occur between multicellular sexual species.  An example is the
transfer of a phototropin gene from Hornworts to Ferns
(Figure~\ref{fig:NetworkExample}). HGT can even happen between more distant
species.

\begin{figure}[t]
  \hspace*{\stretch{1}}%
  \subcaptionbox{}{%
    \begin{tikzpicture}[
      node/.style={fill=black,circle,inner sep=0pt,minimum size=3pt,outer sep=0pt},
      wide node/.style={fill=black,rectangle,inner sep=0pt,minimum height=3pt,rounded corners=1.5pt},
      edge/.style={draw,thick}]
      \foreach \i in {0,...,6} {
        \node [node] (l\i) at (\i,0) {};
      }
      \coordinate (x12) at (barycentric cs:l1=0.5,l2=0.5);
      \node [node] (p12) at (x12 |- 0,1) {}; \path [edge] (l1) -- (p12) -- (l2);
      \coordinate (x02) at (barycentric cs:l0=0.333,p12=0.667);
      \node [node] (p02) at (x02 |- 0,2) {}; \path [edge] (l0) -- (p02) -- (p12);
      \coordinate (x34) at (barycentric cs:l3=0.5,l4=0.5);
      \node [node] (p34) at (x34 |- 0,1) {}; \path [edge] (l3) -- (p34) -- (l4);
      \coordinate (x56) at (barycentric cs:l5=0.5,l6=0.5);
      \node [node] (p56) at (x56 |- 0,1) {}; \path [edge] (l5) -- (p56) -- (l6);
      \coordinate (x36) at (barycentric cs:p34=0.5,p56=0.5);
      \node [node] (p36) at (x36 |- 0,2) {}; \path [edge] (p34) -- (p36) -- (p56);
      \coordinate (x06) at (barycentric cs:p02=0.5,p36=0.5);
      \node [node] (p06) at (x06 |- 0,3) {}; \path [edge] (p02) -- (p06) -- (p36);
      \node [node] (r) at (p06 |- 0,4) {}; \path [edge] (p06) -- (r);
      \node [anchor=west,rotate=270,xshift=3pt] at (l0) {H PHOT};
      \node [anchor=west,rotate=270,xshift=3pt] at (l1) {H NEO};
      \node [anchor=west,rotate=270,xshift=3pt] at (l2) {F NEO};
      \node [anchor=west,rotate=270,xshift=3pt] at (l3) {F PHOT1};
      \node [anchor=west,rotate=270,xshift=3pt] at (l4) {F PHOT2};
      \node [anchor=west,rotate=270,xshift=3pt] at (l5) {S PHOT1};
      \node [anchor=west,rotate=270,xshift=3pt] at (l6) {S PHOT2};
    \end{tikzpicture}}%
  \hspace*{\stretch{1}}%
  \subcaptionbox{}{%
    \begin{tikzpicture}[
      node/.style={fill=black,circle,inner sep=0pt,minimum size=3pt,outer sep=0pt},
      wide node/.style={fill=black,rectangle,inner sep=0pt,minimum height=3pt,rounded corners=1.5pt},
      edge/.style={draw,thick}]
      \coordinate (l12) at (0,0);
      \coordinate (l22) at (2,0);
      \coordinate (l32) at (4,0);
      \path (l12) +(60:0.6) coordinate (p12);
      \path (l12) +(60:2.3) coordinate (pp12);
      \path (l12) +(60:4) coordinate (p132);
      \path (l22) +(60:0.6) coordinate (p22);
      \path (l22) +(60:1.1) coordinate (pp22);
      \path (l22) +(60:2) coordinate (p232);
      \path (l32) +(120:1) coordinate (p32);
      \path (p132) +(90:1) coordinate (r2);
      \foreach \r in {r,l1,p1,pp1,l2,p2,pp2,l3,p3,p23,p13} {
        \path (\r 2)
        +(180:0.125) coordinate (\r 1)
        +(0:0.125) coordinate (\r 3)
        +(180:0.25) coordinate (\r l)
        +(0:0.25) coordinate (\r r);
      }
      \coordinate (pp2m) at (barycentric cs:pp22=0.5,pp23=0.5);
      \path [name path=path1] (p132) -- (l32);
      \path [name path=path2] (pp2m) -- +(60:1);
      \path [name intersections={of=path1 and path2}] coordinate (p23m) at (intersection-1);
      \path [name path=path1] (l1r) -- +(60:4);
      \path [name path=path2] (l3l) -- +(120:4);
      \path [name intersections={of=path1 and path2}] coordinate (p13b) at (intersection-1);
      \path [name path=path1] (l2r) -- +(60:2);
      \path [name intersections={of=path1 and path2}] coordinate (p23b) at (intersection-1);
      \path [name path=path1] (p13) ++(270:0.2) -- +(0:2);
      \path [name path=path2] (l1r) -- +(60:1);
      \path [name intersections={of=path1 and path2}] coordinate (p1br) at (intersection-1);
      \path [name path=path2] (l2l) -- +(60:1);
      \path [name intersections={of=path1 and path2}] coordinate (p2bl) at (intersection-1);
      \path [name path=path1] (p13) ++(90:0.2) -- +(0:2);
      \path [name path=path2] (l1r) -- +(60:1);
      \path [name intersections={of=path1 and path2}] coordinate (p1tr) at (intersection-1);
      \path [name path=path2] (l2l) -- +(60:1);
      \path [name intersections={of=path1 and path2}] coordinate (p2tl) at (intersection-1);
      \path [fill=black!20] (rl) -- (p13l) -- (l1l) -- (l1r) -- (p1br) -- (p2bl) -- (l2l)
      -- (l2r) -- (p23b) -- (l3l) -- (l3r) -- (p13r) -- (rr) -- cycle;
      \path [draw] (rl) -- (p13l) -- (l1l) (rr) -- (p13r) -- (l3r) (l2r) -- (p23b) -- (l3l)
      (l1r) -- (p1br) -- (p2bl) -- (l2l);
      \path [draw,fill=white] (p1tr) -- (p2tl) -- (p23l) -- (p13b) -- cycle;
      \path [edge] (r2) -- (p132) -- (p32) (p31) -- (l31) (p33) -- (l33)
      (p132) -- (pp12) (pp11) -- (l11) (pp13) -- (l13) (p13) -- (p21) -- (l21)
      (pp22) -- (l22) (pp23) -- (l23) (pp2m) -- (p23m);
      \node [node] at (r2) {};
      \node [node] at (p132) {};
      \node [node] at (l11) {};
      \node [node] at (l13) {};
      \node [node] at (p13) {};
      \node [node] at (l21) {};
      \node [node] at (l22) {};
      \node [node] at (l23) {};
      \node [node] at (l31) {};
      \node [node] at (l33) {};
      \node [node] at (p23m) {};
      \node [wide node,minimum width=0.35cm] at (pp12) {};
      \node [wide node,minimum width=0.225cm] at (pp2m) {};
      \node [wide node,minimum width=0.35cm] at (p32) {};
      \node [anchor=north] at (l12) {Hornwort (H)};
      \node [anchor=north] at (l22) {Ferns (F)};
      \node [anchor=north] at (l32) {Seedplants (S)};
      \path (0,-1);
    \end{tikzpicture}}%
  \hspace*{\stretch{1}}%
  \caption{A simplified real example of a gene tree in which multiple
    complicating factors play a role, and a likely reconciliation with the
    phylogeny \cite{li2014horizontal,li2015origin}. (a) A gene tree for the
    phototropin gene in plants. (b) the phylogeny for those plant groups.
    Note that the horizontal branch of the phylogeny from Hornworts to Ferns
    represents a horizontal gene transfer.}
  \label{fig:NetworkExample}
\end{figure}
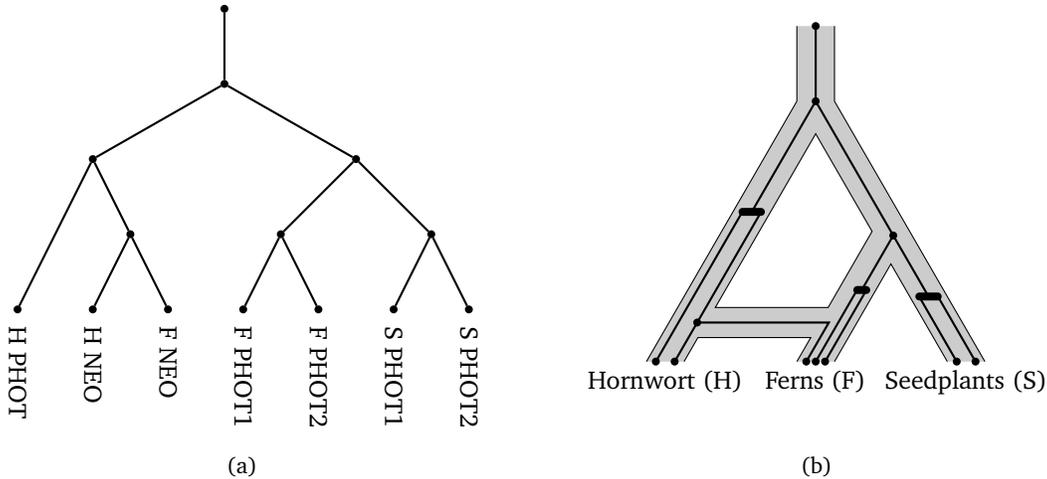

Gene trees that appear to be inconsistent may in fact simply take different
paths through the network.  This leads to a family of inference problems in
which the aim is to find a phylogenetic network that is consistent with the
gene trees and has the minimum number of \emph{reticulation events} (nodes in
the network with two ancestral branches).  A phylogenetic network is often
taken to be consistent with a gene tree if that tree is \emph{displayed} by the
network, which, roughly speaking, means that the gene tree can be drawn inside
the network in such a way that each network branch contains at most one lineage
of the gene tree.  A more generous definition is to count a network as
consistent with a gene tree if the tree is \emph{weakly displayed} by the
network \cite{huber2016folding,zhu2016light}.  Roughly speaking, this means
that different lineages of the gene tree may ``travel down'' the same branch of
the network, as long as any branching node in the tree coincides with a
branching node in the network.  In this case, the tree is also called a
\emph{parental tree} of the network.  This models situations where a species
has individuals carrying multiple homologous copies of a gene.

The \textsc{Hybridization Number} problem, in which we seek a network with the
minimum number of reticulations displaying all input trees, has been
well-studied. It has been shown that \textsc{Hybridization Number} is NP-hard
already when the input consists of only two gene
trees~\cite{bordewich2007computing}. Furthermore, there are theoretical FPT
algorithms for any fixed number of gene trees, but there are no practical
algorithms that can handle instances with more than two input trees unless the
number of taxa is extremely small
\cite{bordewich2007reduction,van2016hybridization}.

In contrast, the \textsc{Parental Hybridization} problem, in which we seek a
network with the minimum number of reticulations that weakly displays each
input tree, was introduced only recently~\cite{zhu2016light} and its
computational complexity was open prior to this article.  Our motivation for
studying this problem is threefold:
\begin{enumerate}[label=(\roman{*})]
  \item Since \textsc{Hybridization Number} is NP-hard, it is interesting
    whether relaxing the notion of a tree displayed by a network leads to an
    easier problem.
  \item Since reticulation can lead to multiple homologous copies of a gene in
    a species, requiring that each gene tree is displayed by the network may
    lead us to overestimate the number of reticulations.
  \item The problem of finding an optimal network that weakly displays a set of
    phylogenies arises as a crucial subproblem in a recent heuristic approach
    for constructing phylogenetic networks in the presence of hybridization and
    incomplete lineage sorting \cite{zhu2016light}.
\end{enumerate}

\paragraph{Structural assumptions.}

In this paper, as is common in the literature, we assume that all networks and
trees are \emph{binary}, that is, every node except the root and the leaves has
total degree exactly~$3$. Our results should easily generalize to nonbinary
trees and networks, but we do not verify this here.

We note that, unlike many papers in this area, we allow a network to contain
\emph{parallel arcs}, that is, pairs of arcs that join the same pair of nodes.
Parallel arcs are normally omitted because, for most problems, it can either be
shown that there exists an optimal solution without parallel arcs or it can be
assumed that a realistic solution contains no parallel arcs.  For example, any
set of gene trees is displayed by an optimal hybridization network without
parallel arcs.  For the problems studied in this paper, however, an optimal
solution may require parallel arcs.  Considering this problem with the added
restriction that parallel arcs are forbidden may be an interesting mathematical
challenge; however, we do not believe it is biologically meaningful.

Explicit reasons to allow parallel arcs in networks are abundant. We give
three: First, if one restricts a large network to a subset of the taxa, the
natural restriction could have parallel arcs. Second, phylogenetic Markov
models for character evolution behave differently if parallel arcs are
suppressed. Third, polyploidization events often result from a sort of
interspecific or intraspecific hybridization \cite{albertin2012polyploidy}; an
intraspecific hybridization is most naturally represented by parallel arcs in
the network.

Throughout this paper, we allow input trees to be multi-labelled, that is, each
species may appear as a label of multiple leaves in a tree.  This is natural
for the problems we study, as gene duplication and reticulation can both lead
to multiple homologous genes appearing in the genome of a single species.

\paragraph{Our contributions.}

We show that both \textsc{Unrestricted Minimal Episodes Inference} and
\textsc{Parental Hybridization} reduce to the problem \textsc{Beaded Tree},
which we introduce in this paper.  Using this reduction, we show that both
problems can be solved in polynomial time by adapting Aho et al.'s classic
algorithm for testing gene tree consistency~\cite{Aho1981InferringAT}.
Thereby, we provide the first polynomial-time algorithm for an inference
problem with duplication clustering. Furthermore, we provide the first
polynomial-time algorithm for constructing a phylogenetic \emph{network} with a minimum number of reticulations from
gene trees.

We also show that optimal solutions to \textsc{Beaded Tree} have a restricted
structure and this has corresponding implications for the optimal solutions to
\textsc{Unrestricted Minimal Episodes Inference} and \textsc{Parental
  Hybridization} that our algorithms produce. Moreover, we show that, in fact,
\emph{all} optimal solutions to \textsc{Unrestricted Minimal Episodes
  Inference} have a particular structure. Therefore, this problem should be used
with care. For this reason, we introduce a variation of \textsc{Unrestricted
  Minimal Episodes Inference}, in which the aim is not to minimize the total
number of duplication episodes but to minimize instead the maximum number of
duplication episodes on any path from the root to a leaf in the output tree.
We show that this problem can also be solved in polynomial time via reduction
to a variant of \textsc{Beaded Tree}, which we call \textsc{Beaded Tree Depth}.

\paragraph{Structure of the paper.}

In Section~\ref{sec:prelim}, we introduce the main definitions, including
formal problem definitions.  In Section~\ref{sec:reduction}, we show that both
\textsc{Unrestricted Minimal Episodes Inference} and \textsc{Parental
  Hybridization} reduce to the problem \textsc{Beaded Tree}.  In
Section~\ref{sec:structure}, we prove structural properties of optimal
solutions to \textsc{Beaded Tree}.  In Section~\ref{sec:algorithm}, we provide
a polynomial-time algorithm for \textsc{Beaded Tree} and prove its correctness
and running time.  In Section~\ref{sec:depth}, we provide a polynomial-time
algorithm for \textsc{Beaded Tree Depth}.  Finally, in
Section~\ref{sec:conclusion}, we discuss our results and possibilities for
further research.

\section{Preliminaries and Definitions}\label{sec:prelim}

We begin by defining \emph{multi-labelled trees}, which form the input for all
problems considered in this paper.

\begin{definition}
  Let $X$ be a set of species.  A \emph{multi-labelled tree (MUL-tree)} on $X$
  is a directed acyclic graph with one node of in-degree 0 and out-degree 1
  (the \emph{root}) and with all other nodes having either in-degree 1 and
  out-degree~2 (\emph{tree nodes}) or in-degree 1 and out-degree 0
  (\emph{leaves}).  Each leaf is labelled with an element of $X$.  If each
  element of $X$ labels at most one leaf, we call the MUL-tree a \emph{tree}.
\end{definition}

Note that we will often refer to a labelled node by its label; for example, we
may say that $x \in X$ is a leaf in a MUL-tree $T$ if one of the leaves of $T$
is labelled with $x$.

The notation introduced in the next definition is common to all structures
considered in this paper, that is, not just to MUL-trees but also to
duplication trees, phylogenetic networks, and beaded trees, defined later in
this section.

\begin{definition}
  Given a directed acyclic graph $G$, let $V(G)$ denote the nodes, and $E(G)$
  the edges of $G$.  Let $L(G)$ denote the leaves (i.e., nodes of out-degree
  $0$) of~$G$.  We refer to the non-leaf nodes of $G$ as the \emph{internal
    nodes} of~$G$.  Given an edge $xy$ in $G$, we say that $x$ is a
  \emph{parent} of $y$ and $y$ is a \emph{child} of $x$.  We say a node $x$ is
  an \emph{ancestor} of a node $y$ (and $y$ is a \emph{descendant} of $x$) if
  there is a path from $x$ to $y$ in $G$ (including if $x=y$). If in addition
  $x \neq y$, we say $x$ is a \emph{strict} ancestor of $y$ (and $y$ is a
  \emph{strict} descendant of $x$).  A node $x$ is a \emph{least common
    ancestor} of two nodes $y$ and $z$ if it is an ancestor of both $y$ and $z$
  and no strict descendant of $x$ is an ancestor of both $y$ and $z$.  If $G$
  is a tree, then the LCA of any two nodes is unique; otherwise, it may not be
  unique.
\end{definition}

\subsection{Duplication Episodes}

The evolutionary history of a set of species, including points at which
duplication events occurred, can be modelled by a duplication tree, defined as
follows:

\begin{definition}
  Let $X$ be a set of species.  A \emph{duplication tree} on $X$ is a directed
  acyclic graph $D$ with one node of in-degree 0 and out-degree 1 (the
  \emph{root}), $|X|$ nodes of in-degree 1 and out-degree 0 (\emph{leaves}),
  and all other nodes having either in-degree 1 and out-degree 2 (\emph{tree
    nodes}) or in-degree 1 and out-degree 1 (\emph{duplication nodes}).  The
  leaves are bijectively labelled with the elements of $X$.  The
  \emph{duplication number} of $D$ is the number of duplication nodes it
  contains.
\end{definition}

We note that, in contrast to MUL-trees, each species in $X$ appears as the
label of exactly one leaf in a duplication tree.  Informally, a
  MUL-tree~$T$ is \emph{consistent} with a duplication tree~$D$ if~$T$
  can be drawn inside~$D$ so that branches of $T$ duplicate only at duplication
  nodes of $D$, in the sense that both out-edges of a node of~$T$ may follow
  the same out-edge of the duplication node (see
  Figure~\ref{fig:duplication_tree}). We formalize this as follows:

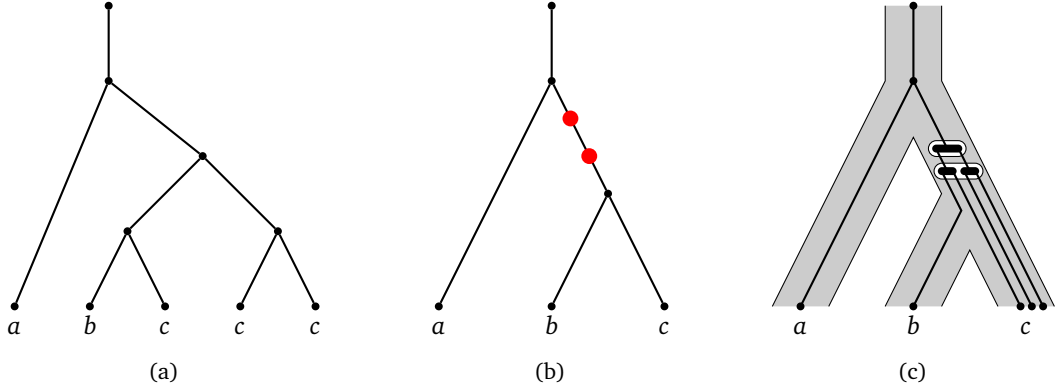
\begin{figure}[t]
  \hspace*{\stretch{1}}%
  \subcaptionbox{}{%
    \begin{tikzpicture}[
      node/.style={fill=black,circle,inner sep=0pt,minimum size=3pt,outer sep=0pt},
      dup/.style={fill=red,circle,inner sep=0pt,minimum size=6pt},
      wide node/.style={fill=black,rectangle,inner sep=0pt,minimum height=3pt,rounded corners=1.5pt},
      episode/.style={draw=black,fill=white,rectangle,inner sep=0pt,minimum height=6pt,rounded corners=3pt},
      edge/.style={draw,thick}]
      \foreach \i in {0,...,4} {
        \node [node] (l\i) at (\i,0) {};
      }
      \coordinate (x12) at (barycentric cs:l1=0.5,l2=0.5);
      \node [node] (p12) at (x12 |- 0,1) {}; \path [edge] (l1) -- (p12) -- (l2);
      \coordinate (x34) at (barycentric cs:l3=0.5,l4=0.5);
      \node [node] (p34) at (x34 |- 0,1) {}; \path [edge] (l3) -- (p34) -- (l4);
      \coordinate (x14) at (barycentric cs:p12=0.5,p34=0.5);
      \node [node] (p14) at (x14 |- 0,2) {}; \path [edge] (p12) -- (p14) -- (p34);
      \coordinate (x04) at (barycentric cs:l0=0.5,p14=0.5);
      \node [node] (p04) at (x04 |- 0,3) {}; \path [edge] (l0) -- (p04) -- (p14);
      \path (p04) +(90:1) node [node] (r) {}; \path [edge] (r) -- (p04);
      \node [anchor=north,text height=height("$b$")] at (l0) {$a$};
      \node [anchor=north,text height=height("$b$")] at (l1) {$b$};
      \node [anchor=north,text height=height("$b$")] at (l2) {$c$};
      \node [anchor=north,text height=height("$b$")] at (l3) {$c$};
      \node [anchor=north,text height=height("$b$")] at (l4) {$c$};
    \end{tikzpicture}}%
  \hspace*{\stretch{1}}%
  \subcaptionbox{}{%
    \begin{tikzpicture}[x=1.5cm,y=1.5cm,
      node/.style={fill=black,circle,inner sep=0pt,minimum size=3pt,outer sep=0pt},
      dup/.style={fill=red,circle,inner sep=0pt,minimum size=6pt},
      wide node/.style={fill=black,rectangle,inner sep=0pt,minimum height=3pt,rounded corners=1.5pt},
      episode/.style={draw=black,fill=white,rectangle,inner sep=0pt,minimum height=6pt,rounded corners=3pt},
      edge/.style={draw,thick}]
      \foreach \i in {0,1,2} {
        \node [node] (l\i) at (\i,0) {};
      }
      \coordinate (x12) at (barycentric cs:l1=0.5,l2=0.5);
      \node [node] (p12) at (x12 |- 0,1) {}; \path [edge] (l1) -- (p12) -- (l2);
      \coordinate (x02) at (barycentric cs:l0=0.333,p12=0.667);
      \node [node] (p02) at (x02 |- 0,2) {}; \path [edge] (l0) -- (p02) -- (p12);
      \path (p02) +(90:0.667) node [node] (r) {}; \path [edge] (r) -- (p02);
      \node [dup] at (barycentric cs:p02=0.333,p12=0.667) {};
      \node [dup] at (barycentric cs:p02=0.667,p12=0.333) {};
      \node [anchor=north,text height=height("$b$")] at (l0) {$a$};
      \node [anchor=north,text height=height("$b$")] at (l1) {$b$};
      \node [anchor=north,text height=height("$b$")] at (l2) {$c$};
    \end{tikzpicture}}%
  \hspace*{\stretch{1}}%
  \subcaptionbox{}{%
    \begin{tikzpicture}[x=1.5cm,y=1.5cm,
      node/.style={fill=black,circle,inner sep=0pt,minimum size=3pt,outer sep=0pt},
      dup/.style={fill=red,circle,inner sep=0pt,minimum size=6pt},
      wide node/.style={fill=black,rectangle,inner sep=0pt,minimum height=3pt,rounded corners=1.5pt},
      episode/.style={draw=black,fill=white,rectangle,inner sep=0pt,minimum height=6pt,rounded corners=3pt},
      edge/.style={draw,thick}]
      \foreach \i in {0,...,2} {
        \coordinate (l\i) at (\i,0);
      }
      \coordinate (x12) at (barycentric cs:l1=0.5,l2=0.5);
      \coordinate (p12) at (x12 |- 0,1);
      \coordinate (x02) at (barycentric cs:l0=0.333,p12=0.667);
      \coordinate (p02) at (x02 |- 0,2);
      \path (p02) +(90:0.667) coordinate (r);
      \coordinate (pp12) at (barycentric cs:p12=0.8,p02=0.2);
      \coordinate (ppp12) at (barycentric cs:p12=0.6,p02=0.4);
      \foreach \v in {l0,l1,l2,p12,pp12,ppp12,p02,r} {
        \path (\v)
        +(180:0.05) coordinate (\v 2)
        +(180:0.15) coordinate (\v 1)
        +(180:0.25) coordinate (\v l)
        +(0:0.05) coordinate (\v 3)
        +(0:0.15) coordinate (\v 4)
        +(0:0.25) coordinate (\v r);
      }
      \path [name path=path1] (l0r) -- +(1,2);
      \path [name path=path2] (l2l) -- +(-1,2);
      \path [name intersections={of=path1 and path2}] coordinate (p02b) at (intersection-1);
      \path [name path=path1] (l1r) -- +(0.5,1);
      \path [name path=path2] (l2l) -- +(-0.5,1);
      \path [name intersections={of=path1 and path2}] coordinate (p12b) at (intersection-1);
      \path [name path=path1] (l1) -- +(0.5,1);
      \path [name path=path2] (l21) -- +(-0.5,1);
      \path [name intersections={of=path1 and path2}] coordinate (p1) at (intersection-1);
      \coordinate (ppp12a) at (barycentric cs:ppp121=0.5,ppp122=0.5);
      \coordinate (ppp12b) at (barycentric cs:ppp123=0.5,ppp124=0.5);
      \coordinate (pp12a) at (barycentric cs:pp121=0.5,pp122=0.5);
      \coordinate (pp12b) at (barycentric cs:pp123=0.5,pp124=0.5);
      \path [fill=black!20] (rl) -- (p02l) -- (l0l) -- (l0r) -- (p02b) -- (p12l) --
      (l1l) -- (l1r) -- (p12b) -- (l2l) -- (l2r) -- (p02r) -- (rr) -- cycle;
      \node [episode,minimum width=0.5cm] at (ppp12) {};
      \node [episode,minimum width=0.65cm] at (pp12) {};
      \path [draw] (rl) -- (p02l) -- (l0l) (rr) -- (p02r) -- (l2r)
      (l0r) -- (p02b) -- (p12l) -- (l1l) (l1r) -- (p12b) -- (l2l);
      \path [edge] (r) -- (p02) (l0) -- (p02) -- (ppp12) (pp122) -- (l22) (pp123) -- (l23)
      (pp124) -- (l24) (pp121) -- (p1) -- (l1) (ppp12a) -- (pp12a) (ppp12b) -- (pp12b);
      \node [node] at (l0) {};
      \node [node] at (l1) {};
      \node [node] at (l22) {};
      \node [node] at (l23) {};
      \node [node] at (l24) {};
      \node [node] at (p02) {};
      \node [node] at (r) {};
      \node [wide node,minimum width=0.25cm] at (pp12a) {};
      \node [wide node,minimum width=0.25cm] at (pp12b) {};
      \node [wide node,minimum width=0.4cm] at (ppp12) {};
      \node [anchor=north,text height=height("$b$")] at (l0) {$a$};
      \node [anchor=north,text height=height("$b$")] at (l1) {$b$};
      \node [anchor=north,text height=height("$b$")] at (l2) {$c$};
    \end{tikzpicture}}%
  \hspace*{\stretch{1}}%
  \caption{(a) A MUL-tree $T$ on $X = \{a, b, c\}$. (b) A duplication tree $D$
    that is consistent with $T$. (c) An illustration showing how $T$ can be
    drawn inside $D$. This shows how two or more incoming branches may
    duplicate simultaneously at a duplication node (according to the Minimal
    Episodes clustering).}
  \label{fig:duplication_tree}
\end{figure}

\begin{definition}
 Given a MUL-tree $T$ on $X$ and a duplication tree $D$ on $X$, a
 \emph{duplication mapping} from $T$ to $D$ is a function $M:V(T) \rightarrow
 V(D)$ such that
 \begin{itemize}
   \item For each leaf $l \in L(T)$, $M(l)$ is a leaf of $D$ labelled with the
     same species as $l$,
   \item For each edge $uv \in E(T)$, $M(u)$ is a strict ancestor of $M(v)$,
     and
   \item For each internal node $u$ of $T$ with children $v, v'$, either $M(u)$
     is the least common ancestor of $M(v)$ and $M(v')$, or $M(u)$ is a
     duplication node.
 \end{itemize}
 This is illustrated in Figure~\ref{fig:duplication_tree}.
 We say that $D$ \emph{is consistent with} $T$ if there is a duplication
 mapping from $T$ to~$D$.
\end{definition}

Let $S$ be the species tree derived from $D$ by suppressing duplication nodes.
Then a duplication mapping from $T$ to $D$ represents a reconciliation of $T$
with $S$ with Minimal Episodes clustering.  Each duplication node in $D$
represents a cluster of duplications, which is called a \emph{duplication
  episode}.  Internal nodes in $T$ are treated as duplications if they are
mapped to duplication nodes of $D$, and as speciations otherwise.  Duplications
are clustered together if they are mapped to the same duplication node of $D$.
The properties of a duplication tree and duplication mapping ensure that
duplications that are clustered occur on the same branch of the species
phylogeny and have no ancestor-descendant relationship in a gene tree, as
required by Minimal Episodes clustering.  We are now ready to define the
following problem:

\medskip

\noindent \textsc{Unrestricted Minimal Episodes Inference}\\
\noindent \textbf{Input}: A set $\mathcal{T} = \{T_1, \ldots, T_t\}$ of
MUL-trees with label sets $X_1, \ldots, X_t \subseteq X$.\\
\noindent \textbf{Output}:  A duplication tree $D$ on $X$ with minimum
duplication number and such that $D$ is consistent with each tree
in~$\mathcal{T}$.

\medskip

For this and other optimization problems, we use the term \emph{solution} to refer to an object that satisfies the requirements specified in the description of the output except that it does not necessarily need to optimize the optimization criterion. An \emph{optimal solution} is a solution that optimizes the optimization criterion. For example, for \textsc{Unrestricted Minimal Episodes Inference}, a solution is a duplication tree on $X$ that is consistent with each tree in~$\mathcal{T}$. It is an optimal solution if, in addition, it has minimum duplication number over all such duplication trees.

\medskip

We note that for any MUL-tree $T$ on $X$ and any duplication tree $D$ on $X$
that has at least $|V(T)|$ duplication nodes as ancestors of every tree node,
$D$ is consistent with $T$. It follows that every instance of
\textsc{Unrestricted Minimal Episodes Inference} has a solution (and therefore an optimal solution).

\subsection{Parental Hybridization}

\emph{Phylogenetic networks} are an appropriate mathematical model used for describing evolutionary histories
that include reticulation events and are central to the problem
\textsc{Parental Hybridization}, defined below.

\begin{definition}
  Let $X$ be a set of species.  A \emph{(rooted binary) phylogenetic network}
  $N$ on $X$ is a directed acyclic multigraph with one node of in-degree 0 and
  out-degree 1 (the \emph{root}), $|X|$ nodes of in-degree 1 and out-degree~0
  (\emph{leaves}), and all other nodes having either in-degree 1 and out-degree
  2 (\emph{tree nodes}) or in-degree~2 and out-degree 1 (\emph{reticulation
    nodes}).  The leaves are bijectively labelled with the elements of $X$.
  The \emph{reticulation number} of $N$ is the number of reticulation nodes it
  contains.  If $N$ contains no reticulation nodes, then $N$ is a \emph{tree}.
\end{definition}

We note that the key distinctions between a phylogenetic network and a MUL-tree
are that a phylogenetic network may contain reticulation nodes but each label
in $X$ may appear only once, whereas a MUL-tree has no reticulations but each
label can appear multiple times. Also note that, due to the degree
restrictions, there can be at most two edges between any pair of nodes in a
phylogenetic network, and there are no loops.

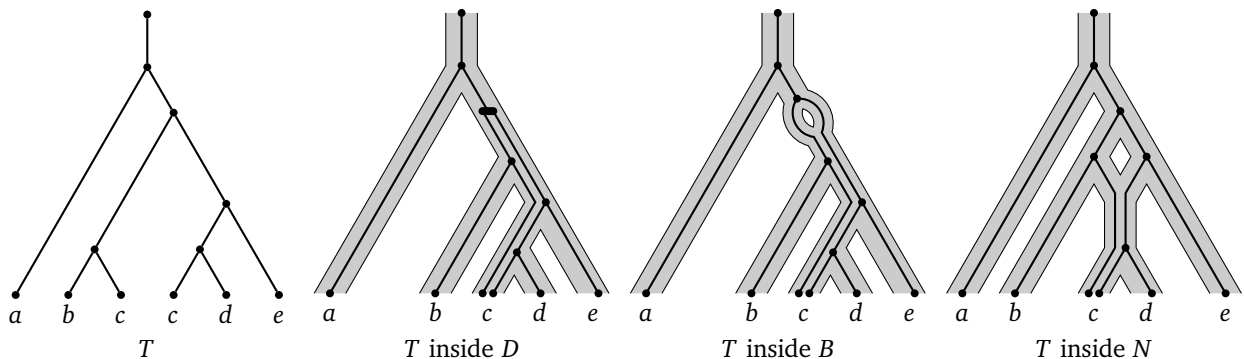
\begin{figure}[t]
  \begin{tikzpicture}[
    node/.style={fill=black,circle,inner sep=0pt,minimum size=3pt,outer sep=0pt},
    dup/.style={fill=red,circle,inner sep=0pt,minimum size=6pt},
    wide node/.style={fill=black,rectangle,inner sep=0pt,minimum height=3pt,rounded corners=1.5pt},
    edge/.style={draw,thick},
    x=0.7cm,y=0.7cm]
    \begin{scope}[local bounding box=T]
      \foreach \i in {0,...,5} {
        \node [node] (l\i) at (\i,0) {};
      }
      \path (l0) +(60:5) node [node] (p05) {};
      \path (l1) +(60:1) node [node] (p12) {};
      \path (l1) +(60:4) node [node] (p15) {};
      \path (l3) +(60:1) node [node] (p34) {};
      \path (l3) +(60:2) node [node] (p35) {};
      \path (p05) +(90:1) node [node] (r) {};
      \path [edge] (l1) -- (p12) -- (l2) (l3) -- (p34) -- (l4) (p34) -- (p35) -- (l5)
      (p12) -- (p15) -- (p35) (l0) -- (p05) -- (p15) (p05) -- (r);
      \node [anchor=north,text height=height("$d$")] at (l0) {$a$};
      \node [anchor=north,text height=height("$d$")] at (l1) {$b$};
      \node [anchor=north,text height=height("$d$")] at (l2) {$c$};
      \node [anchor=north,text height=height("$d$")] at (l3) {$c$};
      \node [anchor=north,text height=height("$d$")] at (l4) {$d$};
      \node [anchor=north,text height=height("$d$")] at (l5) {$e$};
    \end{scope}
    \node [anchor=north] at (T.south) {$T$};
  \end{tikzpicture}%
  \hspace*{\stretch{1}}%
  \begin{tikzpicture}[
    node/.style={fill=black,circle,inner sep=0pt,minimum size=3pt,outer sep=0pt},
    dup/.style={fill=red,circle,inner sep=0pt,minimum size=6pt},
    wide node/.style={fill=black,rectangle,inner sep=0pt,minimum height=3pt,rounded corners=1.5pt},
    edge/.style={draw,thick},
    x=0.7cm,y=0.7cm]
    \begin{scope}[local bounding box=D]
      \foreach \i/\x in {0/0,1/2,2/3,3/4,4/5} {
        \coordinate (l\i) at (\x,0);
      }
      \path (l0) +(60:5) coordinate (p04);
      \path (l1) +(60:3) coordinate (p14);
      \path (l2) +(60:1) coordinate (p23);
      \path (l2) +(60:2) coordinate (p24);
      \path (p04) +(90:1) coordinate (r);
      \coordinate (pp14) at (barycentric cs:p14=0.5,p04=0.5);
      \foreach \v in {l0,l1,l2,l3,l4,p04,p14,pp14,p23,p24,r} {
        \path (\v)
        +(180:0.1) coordinate (\v 1)
        +(180:0.3) coordinate (\v l)
        +(0:0.1) coordinate (\v 2)
        +(0:0.3) coordinate (\v r);
      }
      \path [name path=path1] (l0r) -- +(60:5);
      \path [name path=path2] (l4l) -- +(120:5);
      \path [name intersections={of=path1 and path2}] coordinate (p04b) at (intersection-1);
      \path [name path=path1] (l1r) -- +(60:3);
      \path [name intersections={of=path1 and path2}] coordinate (p14b) at (intersection-1);
      \path [name path=path1] (l2r) -- +(60:2);
      \path [name intersections={of=path1 and path2}] coordinate (p24b) at (intersection-1);
      \path [name path=path2] (l3l) -- +(120:1);
      \path [name intersections={of=path1 and path2}] coordinate (p23b) at (intersection-1);
      \path [name path=path1] (l22) -- +(60:1);
      \path [name path=path2] (l3) -- +(120:1);
      \path [name intersections={of=path1 and path2}] coordinate (p23m) at (intersection-1);
      \path [name path=path1] (l1) -- +(60:3);
      \path [name path=path2] (l41) -- +(120:3);
      \path [name intersections={of=path1 and path2}] coordinate (p14m) at (intersection-1);
      \path [fill=black!20] (rl) -- (p04l) -- (l0l) -- (l0r) -- (p04b) -- (p14l) -- (l1l) -- (l1r)
      -- (p14b) -- (p24l) -- (l2l) -- (l2r) -- (p23b) -- (l3l) -- (l3r) -- (p23r) -- (p24b)
      -- (l4l) -- (l4r) -- (p04r) -- (rr) -- cycle;
      \path [draw] (rl) -- (p04l) -- (l0l) (l0r) -- (p04b) -- (p14l) -- (l1l)
      (l1r) -- (p14b) -- (p24l) -- (l2l) (l2r) -- (p23b) -- (l3l)
      (l3r) -- (p23r) -- (p24b) -- (l4l) (rr) -- (p04r) -- (l4r);
      \path [edge] (l22) -- (p23m) -- (l3) (p23m) -- (p242) -- (l42) (l0) -- (p04) -- (pp14)
      (p04) -- (r) (l1) -- (p14m) -- (p241) -- (l21) (pp141) -- (p14m) (pp142) -- (p242);
      \node [node] at (l0) {};
      \node [node] at (l1) {};
      \node [node] at (l21) {};
      \node [node] at (l22) {};
      \node [node] at (l3) {};
      \node [node] at (l42) {};
      \node [node] at (p242) {};
      \node [node] at (p23m) {};
      \node [node] at (p04) {};
      \node [node] at (r) {};
      \node [node] at (p14m) {};
      \node [wide node,minimum width=0.25cm] at (pp14) {};
      \node [anchor=north,text height=height("$d$")] at (l0) {$a$};
      \node [anchor=north,text height=height("$d$")] at (l1) {$b$};
      \node [anchor=north,text height=height("$d$")] at (l2) {$c$};
      \node [anchor=north,text height=height("$d$")] at (l3) {$d$};
      \node [anchor=north,text height=height("$d$")] at (l4) {$e$};
    \end{scope}
    \node [anchor=north] at (D.south) {$T$ inside~$D$};
  \end{tikzpicture}%
  \hspace*{\stretch{1}}%
  \begin{tikzpicture}[
    node/.style={fill=black,circle,inner sep=0pt,minimum size=3pt,outer sep=0pt},
    dup/.style={fill=red,circle,inner sep=0pt,minimum size=6pt},
    wide node/.style={fill=black,rectangle,inner sep=0pt,minimum height=3pt,rounded corners=1.5pt},
    edge/.style={draw,thick},
    x=0.7cm,y=0.7cm]
    \begin{scope}[local bounding box=B]
      \foreach \i/\x in {0/0,1/2,2/3,3/4,4/5} {
        \coordinate (l\i) at (\x,0);
      }
      \path (l0) +(60:5) coordinate (p04);
      \path (l1) +(60:3) coordinate (p14);
      \path (l2) +(60:1) coordinate (p23);
      \path (l2) +(60:2) coordinate (p24);
      \path (p04) +(90:1) coordinate (r);
      \coordinate (pp14) at (barycentric cs:p14=0.575,p04=0.425);
      \foreach \v in {l0,l1,l2,l3,l4,p04,p14,pp14,p23,p24,r} {
        \path (\v)
        +(180:0.1) coordinate (\v 1)
        +(180:0.3) coordinate (\v l)
        +(0:0.1) coordinate (\v 2)
        +(0:0.3) coordinate (\v r);
      }
      \path [name path=path1] (l0r) -- +(60:5);
      \path [name path=path2] (l4l) -- +(120:5);
      \path [name intersections={of=path1 and path2}] coordinate (p04b) at (intersection-1);
      \path [name path=path1] (l1r) -- +(60:3);
      \path [name intersections={of=path1 and path2}] coordinate (p14b) at (intersection-1);
      \path [name path=path1] (l2r) -- +(60:2);
      \path [name intersections={of=path1 and path2}] coordinate (p24b) at (intersection-1);
      \path [name path=path2] (l3l) -- +(120:1);
      \path [name intersections={of=path1 and path2}] coordinate (p23b) at (intersection-1);
      \path [name path=path1] (l22) -- +(60:1);
      \path [name path=path2] (l3) -- +(120:1);
      \path [name intersections={of=path1 and path2}] coordinate (p23m) at (intersection-1);
      \path [name path=path1] (l1) -- +(60:3);
      \path [name path=path2] (l41) -- +(120:3);
      \path [name intersections={of=path1 and path2}] coordinate (p14m) at (intersection-1);
      \path (pp14) +(120:0.2) coordinate (pp14it) +(-60:0.2) coordinate (pp14ib);
      \path [name path=path1] (pp14it) -- +(-15:1);
      \path [name path=path2] (pp14ib) -- +(75:1);
      \path [name intersections={of=path1 and path2}] coordinate (beadcr) at (intersection-1);
      \path [name path=path1] (pp14it) -- +(255:1);
      \path [name path=path2] (pp14ib) -- +(165:1);
      \path [name intersections={of=path1 and path2}] coordinate (beadcl) at (intersection-1);
      \path [name path=path1] (beadcr) -- +(165:1.5);
      \path [name path=path2] (l4l) -- +(120:5);
      \path [name intersections={of=path1 and path2}] coordinate (pp14otl) at (intersection-1);
      \path [name path=path1] (beadcl) -- +(-15:1.5);
      \path [name path=path2] (l4r) -- +(120:5);
      \path [name intersections={of=path1 and path2}] coordinate (pp14obr) at (intersection-1);
      \path [name path=path1]
      let \p1 = ($(pp14it) - (beadcr)$), \n1 = {veclen(\x1,\y1)},
          \p2 = ($(pp14otl) - (beadcr)$), \n2 = {veclen(\x2,\y2)},
          \n3 = {(\n2 + \n1) / 2}
      in (beadcr) circle (\n3);
      \path [name path=path2] (l4) -- +(120:5);
      \path [name intersections={of=path1 and path2}] coordinate (pp14t) at (intersection-1);
      \path [name path=path2] (l41) -- +(120:5);
      \path [name intersections={of=path1 and path2}] coordinate (pp14bl) at (intersection-2);
      \path [name path=path1]
      let \p1 = ($(pp14it) - (beadcl)$), \n1 = {veclen(\x1,\y1)},
          \p2 = ($(pp14obr) - (beadcl)$), \n2 = {veclen(\x2,\y2)},
          \n3 = {(\n2 + \n1) / 2}
      in (beadcl) circle (\n3);
      \path [name path=path2] (l42) -- +(120:5);
      \path [name intersections={of=path1 and path2}] coordinate (pp14br) at (intersection-2);
      \path [fill=black!20] let \p1 = ($(beadcr) - (pp14otl)$), \n1 = {veclen(\x1,\y1)} in
      (rl) -- (p04l) -- (l0l) -- (l0r) -- (p04b) -- (pp14otl) arc (165:255:\n1)
      -- (p14l) -- (l1l) -- (l1r) -- (p14b) -- (p24l) -- (l2l) -- (l2r) -- (p23b) -- (l3l) -- (l3r)
      -- (p23r) -- (p24b) -- (l4l) -- (l4r) -- (pp14obr) arc (-15:75:\n1) -- (p04r) -- (rr)
      -- cycle;
      \path [draw,fill=white] let \p1 = ($(beadcr) - (pp14it)$), \n1 = {veclen(\x1,\y1)}
      in (pp14it) arc (165:255:\n1) arc (-15:75:\n1) -- cycle;
      \path [draw] let \p1 = ($(beadcr) - (pp14otl)$), \n1 = {veclen(\x1,\y1)} in
      (rl) -- (p04l) -- (l0l) (l0r) -- (p04b) -- (pp14otl) arc (165:255:\n1) -- (p14l) -- (l1l)
      (l1r) -- (p14b) -- (p24l) -- (l2l) (l2r) -- (p23b) -- (l3l)
      (l3r) -- (p23r) -- (p24b) -- (l4l) (l4r) -- (pp14obr) arc (-15:75:\n1) -- (p04r) -- (rr);
      \path [edge] (l22) -- (p23m) -- (l3) (p23m) -- (p242) -- (l42) (l0) -- (p04) -- (pp14t)
      (p04) -- (r) (l1) -- (p14m) -- (p241) -- (l21);
      \path [edge] let \p1 = ($(pp14t) - (beadcr)$), \n1 = {veclen(\x1,\y1)}, \n2 = {atan2(\y1,\x1)},
      \p2 = ($(pp14bl) - (beadcr)$), \n3 = {360 + atan2(\y2,\x2)}
      in (pp14t) arc (\n2:\n3:\n1) -- (p14m);
      \path [edge] let \p1 = ($(pp14t) - (beadcl)$), \n1 = {veclen(\x1,\y1)}, \n2 = {atan2(\y1,\x1)},
      \p2 = ($(pp14br) - (beadcl)$), \n3 = {atan2(\y2,\x2)}
      in (pp14t) arc (\n2:\n3:\n1) -- (p242);
      \node [node] at (l0) {};
      \node [node] at (l1) {};
      \node [node] at (l21) {};
      \node [node] at (l22) {};
      \node [node] at (l3) {};
      \node [node] at (l42) {};
      \node [node] at (p242) {};
      \node [node] at (p23m) {};
      \node [node] at (p04) {};
      \node [node] at (r) {};
      \node [node] at (p14m) {};
      \node [node] at (pp14t) {};
      \node [anchor=north,text height=height("$d$")] at (l0) {$a$};
      \node [anchor=north,text height=height("$d$")] at (l1) {$b$};
      \node [anchor=north,text height=height("$d$")] at (l2) {$c$};
      \node [anchor=north,text height=height("$d$")] at (l3) {$d$};
      \node [anchor=north,text height=height("$d$")] at (l4) {$e$};
    \end{scope}
    \node [anchor=north] at (B.south) {$T$ inside~$B$};
  \end{tikzpicture}%
  \hspace*{\stretch{1}}%
  \begin{tikzpicture}[
    node/.style={fill=black,circle,inner sep=0pt,minimum size=3pt,outer sep=0pt},
    dup/.style={fill=red,circle,inner sep=0pt,minimum size=6pt},
    wide node/.style={fill=black,rectangle,inner sep=0pt,minimum height=3pt,rounded corners=1.5pt},
    edge/.style={draw,thick},
    x=0.7cm,y=0.7cm]
    \begin{scope}[local bounding box=N]
      \foreach \i/\x in {0/0,1/1,2/2.5,3/3.5,4/5} {
        \coordinate (l\i) at (\x,0);
      }
      \path (l0) +(60:5) coordinate (p04);
      \path (l1) +(60:4) coordinate (p14);
      \path (l1) +(60:3) coordinate (p13);
      \path (l2) +(60:1) coordinate (p23);
      \path (l4) +(120:3) coordinate (p24);
      \path (p13) +(-60:1) coordinate (pp23);
      \path (p04) +(90:1) coordinate (r);
      \foreach \v in {l0,l1,l2,l3,l4,p04,p13,p14,p23,pp23,p24,r} {
        \path (\v)
        +(180:0.1) coordinate (\v 1)
        +(180:0.3) coordinate (\v l)
        +(0:0.1) coordinate (\v 2)
        +(0:0.3) coordinate (\v r);
      }
      \path [name path=path1] (p13) -- +(-60:1);
      \path [name path=path2] (pp231) -- +(90:1);
      \path [name intersections={of=path1 and path2}] coordinate (pp231m) at (intersection-1);
      \path [name path=path1] (pp232) -- +(90:1);
      \path [name path=path2] (p24) -- +(-120:1);
      \path [name intersections={of=path1 and path2}] coordinate (pp232m) at (intersection-1);
      \path [name path=path1] (l0r) -- +(60:5);
      \path [name path=path2] (l4l) -- +(120:5);
      \path [name intersections={of=path1 and path2}] coordinate (p04b) at (intersection-1);
      \path [name path=path1] (pp23r) -- +(60:1);
      \path [name path=path2] (l4l) -- +(120:3);
      \path [name intersections={of=path1 and path2}] coordinate (p24b) at (intersection-1);
      \path [name path=path1] (l2r) -- +(60:1);
      \path [name path=path2] (l3l) -- +(120:1);
      \path [name intersections={of=path1 and path2}] coordinate (p23b) at (intersection-1);
      \path [name path=path1] (l1r) -- +(60:3);
      \path [name path=path2] (pp23l) -- +(120:1);
      \path [name intersections={of=path1 and path2}] coordinate (p13b) at (intersection-1);
      \path [name path=path1] (p13r) -- +(60:1);
      \path [name path=path2] (p24l) -- +(120:1);
      \path [name intersections={of=path1 and path2}] coordinate (p14b) at (intersection-1);
      \path [name path=path1] (p13r) -- +(-60:1);
      \path [name path=path2] (p24l) -- +(-120:1);
      \path [name intersections={of=path1 and path2}] coordinate (pp23t) at (intersection-1);
      \path [fill=black!20] (rl) -- (p04l) -- (l0l) -- (l0r) -- (p04b) -- (p14l) -- (l1l) -- (l1r)
      -- (p13b) -- (pp23l) -- (p23l) -- (l2l) -- (l2r) -- (p23b) -- (l3l) -- (l3r) -- (p23r)
      -- (pp23r) -- (p24b) -- (l4l) -- (l4r) -- (p04r) -- (rr) -- cycle;
      \path [draw] (rl) -- (p04l) -- (l0l) (rr) -- (p04r) -- (l4r)
      (l0r) -- (p04b) -- (p14l) -- (l1l) (l3r) -- (p23r) -- (pp23r) -- (p24b) -- (l4l)
      (l2r) -- (p23b) -- (l3l) (l1r) -- (p13b) -- (pp23l) -- (p23l) -- (l2l);
      \path [draw,fill=white] (p13r) -- (p14b) -- (p24l) -- (pp23t) -- cycle;
      \path [edge] (l0) -- (p04) -- (p14) (p04) -- (r) (l1) -- (p13) -- (pp231m) -- (p231) -- (l21)
      (l22) -- (p232) -- (l32) (p232) -- (pp232m) -- (p24) -- (l4) (p13) -- (p14) -- (p24);
      \node [node] at (l0) {};
      \node [node] at (l1) {};
      \node [node] at (l21) {};
      \node [node] at (l22) {};
      \node [node] at (l32) {};
      \node [node] at (l4) {};
      \node [node] at (p04) {};
      \node [node] at (r) {};
      \node [node] at (p232) {};
      \node [node] at (p13) {};
      \node [node] at (p14) {};
      \node [node] at (p24) {};
      \node [anchor=north,text height=height("$d$")] at (l0) {$a$};
      \node [anchor=north,text height=height("$d$")] at (l1) {$b$};
      \node [anchor=north,text height=height("$d$")] at (l2) {$c$};
      \node [anchor=north,text height=height("$d$")] at (l3) {$d$};
      \node [anchor=north,text height=height("$d$")] at (l4) {$e$};
    \end{scope}
    \node [anchor=north] at (N.south) {$T$ inside~$N$};
  \end{tikzpicture}%
  \caption{A MUL-tree~$T$ and illustrations of a duplication mapping from $T$ to a duplication tree
    $D$, and of weak embeddings of $T$ into a beaded tree $B$ and into a
    phylogenetic network $N$ that is not a beaded tree.}
  \label{fig:trees_networks}
\end{figure}

\begin{definition}
  Given a set $X$ of species, let $N$ be a phylogenetic network, and $T$ a
  MUL-tree on $X$.  A \emph{weak embedding} of $T$ into $N$ is a function $h$
  that maps every node of $T$ to a node of $N$, and every edge in $T$ to a
  directed path in $N$ such that
  \begin{itemize}
    \item For each leaf $l \in L(T)$, $h(l)$ is a leaf of $N$ labelled with the
      same species,
    \item For each edge $xy \in E(T)$, the path $h(xy)$ is a path from $h(x)$
      to $h(y)$ in $N$, and
    \item For each internal node $x$ in $T$ with children $y,y'$, the paths
      $h(xy)$ and $h(xy')$ start with different out-edges of $h(x)$.
  \end{itemize}
  This is illustrated in Figure~\ref{fig:trees_networks}.
  We say that $N$ \emph{weakly displays} $T$ if there is a weak embedding of
  $T$ into $N$.
\end{definition}

We note that $N$ weakly displays $T$ if and only if $T$ is a \emph{parental
  tree inside $N$} as defined in~\cite{zhu2016light}, hence the name
\textsc{Parental Hybridization}. The notion of a tree \emph{weakly displayed}
by a network was first introduced in~\cite{huber2016folding}, where it was
shown that $T$ is weakly displayed by $N$ if and only if there exists a
\emph{locally separated reconciliation} from $T$  to $N$, which is equivalent
to our definition of a weak embedding.

We now define the \textsc{Parental Hybridization} problem:

\medskip

\noindent \textsc{Parental Hybridization}\\
\noindent \textbf{Input}: A set $\mathcal{T} = \{T_1, \ldots, T_t\}$ of MUL-trees
with label sets $X_1, \ldots, X_t \subseteq X$.\\
\noindent \textbf{Output}:  A phylogenetic network $N$ on $X$ with minimum
reticulation number and such that $N$ weakly displays all MUL-trees in
$\mathcal{T}$.

\medskip

Even though we do not use it in this paper, it is worth noting the relationship
between weak embeddings, weakly displayed trees, and \textsc{Parental
  Hybridization} on one hand and embeddings, displayed trees, and
\textsc{Hybridization Number} on the other hand.  An \emph{embedding} of a tree
$T$ into a network $N$ is a weak embedding $h$ of $T$ into $N$ with the added
condition that the paths $h(e)$ and $h(e')$ are edge-disjoint for every pair of
edges $e \ne e' \in T$.  (Note that this also implies that the paths are
node-disjoint unless $e$ and $e'$ have a node in common.) If such an embedding
exists, then $N$ \emph{displays} $T$.  Similarly to \textsc{Parental
  Hybridization}, the \textsc{Hybridization Number} problem for a set of
phylogenetic trees $\mathcal{T}$ asks for a phylogenetic network $N$ with the
minimum reticulation number and such that $N$ displays all trees in
$\mathcal{T}$.

\subsection{Beaded Trees}

The key to solving both \textsc{Unrestricted Minimal Episodes Inference}
and \textsc{Parental Hybridization} is the equivalence between these two
problems and the following \textsc{Beaded Tree} problem, which we establish
in this paper.

\begin{definition}
  A \emph{bead} in a phylogenetic network $N$ is a pair of nodes $(u,v)$ such
  that there are two parallel edges from $u$ to $v$.  A \emph{beaded tree} is a
  phylogenetic network $B$ in which every reticulation node is part of a bead
  (see Figure~\ref{fig:trees_networks}).
\end{definition}

The \textsc{Beaded Tree} problem is defined as follows:

\medskip

\noindent \textsc{Beaded Tree}\\
\noindent \textbf{Input}: A set $\mathcal{T} = \{T_1, \ldots, T_t\}$ of
MUL-trees with label sets $X_1, \ldots, X_t \subseteq X$.\\
\noindent \textbf{Output}:  A beaded tree $B$ on $X$ with minimum reticulation
number that weakly displays all MUL-trees in $\mathcal{T}$.

\section{Reductions to \textsc{Beaded Tree}}\label{sec:reduction}

In this section, we show that the two problems \textsc{Unrestricted Minimal
  Episodes Inference} and \textsc{Parental Hybridization} are both reducible to
\textsc{Beaded Tree}, which will allow us to focus on the latter problem in the
rest of the paper.  We begin with the proof for \textsc{Unrestricted Minimal
  Episodes Inference}.

\begin{lemma}\label{lem:WGDequivalence}
  Let $X$ be a set of species and $\mathcal{T} = \{T_1, \ldots, T_t\}$ a set of
  MUL-trees on $X$.  For any integer~$k$, there exists a solution to
  \textsc{Unrestricted Minimal Episodes Inference} on $\mathcal{T}$ with $k$
  duplications if and only if there exists a solution to \textsc{Beaded Tree}
  on $\mathcal{T}$ with $k$ beads.
\end{lemma}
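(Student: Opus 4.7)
The plan is to establish a bijective correspondence between duplication trees on $X$ and beaded trees on $X$ that maps a duplication tree with $k$ duplication nodes to a beaded tree with $k$ beads, and to show that under this correspondence the consistency of a MUL-tree $T$ with the duplication tree is equivalent to $T$ being weakly displayed by the beaded tree. Given a duplication tree $D$, I construct a beaded tree $B$ by replacing each duplication node $d$ (with parent $p$ and child $c$) by a bead consisting of a tree node $u_d$ and a reticulation node $v_d$ joined by two parallel arcs, with $p \to u_d$ and $v_d \to c$ in place of $p \to d \to c$. The inverse operation contracts each bead back to a single duplication node. The bead count of $B$ manifestly equals the duplication number of $D$, so it remains to show, for each individual MUL-tree $T$, that there is a duplication mapping from $T$ to $D$ if and only if there is a weak embedding of $T$ into $B$; the lemma then follows by applying this per-tree equivalence to each $T_i \in \mathcal{T}$.

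For the forward direction, I turn a duplication mapping $M$ into a weak embedding $h$ by defining $h(x) = M(x)$ whenever $M(x)$ is a tree node or leaf of $D$, and $h(x) = u_{M(x)}$ whenever $M(x)$ is a duplication node. For each edge $xy$ of $T$, the path $h(xy)$ in $B$ follows the unique $M(x)$-to-$M(y)$ path in $D$ and selects one of the two parallel arcs at each traversed bead. When $x$ is an internal node of $T$ with children $y,y'$ and $M(x)$ is a duplication node, I assign the two parallel arcs out of $u_{M(x)}$ to $h(xy)$ and $h(xy')$ respectively; when $M(x)$ is a tree node of $D$, the LCA property of the duplication mapping already forces $M(y)$ and $M(y')$ to lie in different subtrees under $M(x)$, so $h(xy)$ and $h(xy')$ automatically leave $h(x)$ via distinct out-arcs. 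The three conditions of a weak embedding then follow directly.

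For the reverse direction, given a weak embedding $h$, define $M$ by letting $M(x)$ be the image of $h(x)$ under the bead-contraction map from $V(B)$ to $V(D)$. Since the two edge-paths at any internal node $x$ of $T$ must leave $h(x)$ via distinct out-arcs, the node $h(x)$ has out-degree at least two and therefore cannot be a reticulation node of $B$; hence $h(x)$ is a tree node, either inside a bead (giving a duplication node $M(x)$) or outside any bead (giving a tree node $M(x)$). The LCA condition when $M(x)$ is a tree node is immediate from the distinct-out-arc property, and the strict-ancestor condition for each edge $xy$ of $T$ holds because the endpoints of an edge are mapped to distinct tree-type nodes of $B$, which in particular cannot both lie inside the same bead. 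The most delicate point to check is the forward construction when many internal nodes of $T$ collapse onto the same duplication node $d$ of $D$: they all share the same image $u_d$, but because the definition of weak embedding imposes no edge-disjointness across distinct pairs of siblings, each such internal node may independently route its two children's edge-paths through the two parallel arcs of the bead without conflict.
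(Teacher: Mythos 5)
Your proposal is correct and follows essentially the same route as the paper's proof: the identical duplication-node~$\leftrightarrow$~bead substitution, the same translation of a duplication mapping into a weak embedding (sending $x$ to $u_{M(x)}$ when $M(x)$ is a duplication node and routing sibling edge-paths through the two parallel arcs), and the same inverse translation using the fact that images of tree nodes of $T$ cannot be the bottom (reticulation) node of a bead. Your closing remark about many internal nodes of $T$ sharing a bead without conflict is exactly the point the paper relies on implicitly, so nothing is missing.
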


\begin{proof}
  Let the duplication tree $D$ be a solution to \textsc{Unrestricted Minimal
    Episodes Inference} on $\mathcal{T}$ with $k$ duplications.  Then construct a
  beaded tree $B$ from $D$ as follows:  Replace each duplication node $d$ in
  $D$ with a bead $(u_d,v_d)$.  If $p$ is $d$'s parent in $D$, then $u_d$'s
  parent in $B$ is $p$ or, if $p$ is itself a duplication node,~$v_p$; $v_d$'s
  child in $B$ is $d$'s child $c$ in $D$ or, if $c$ is itself a duplication
  node, $u_c$.

  It is easy to observe that $B$ is a beaded tree with $k$ beads. To see that
  $B$ is a solution to \textsc{Beaded Tree} on~$\mathcal{T}$, consider any tree
  $T \in \mathcal{T}$ and let $M$ be a duplication mapping from $T$ to $D$.
  Then we can construct a weak embedding $h$ from $T$ into $B$ as follows.  For
  each node $x$ in $T$, if $M(x)$ is a duplication node $d$, then let $h(x)$ be
  the tree node~$u_d$ (i.e., the top node of the bead $(u_d,v_d)$). Otherwise,
  let $h(x)=M(x)$.  For any edge~$xy$, the node $h(y)$ is by construction a
  strict descendant of $h(x)$, so there exists a path from $h(x)$ to $h(y)$ in
  $B$.  We choose $h(xy)$ to be any such path but ensure that the two paths
  $h(xy)$ and $h(xy')$ start with different edges in the bead $(u_d, v_d)$ if
  $M(x)$ is a duplication node $d$ in $D$ and $y$ and $y'$ are $x$'s children
  in $T$.  This guarantees that the paths $h(xy)$ and $h(xy')$ start with
  different out-edges of $h(x)$ if $M(x)$ is a duplication node. If $M(x)$ is
  not a duplication node, then $M(x)$ is the least common ancestor of $M(y)$
  and $M(y')$, so the paths $h(xy)$ and $h(xy')$ are edge-disjoint and again
  start with different out-edges of $h(x)$.  Thus, $h$ is a weak embedding of
  $T$ into~$B$.

  Conversely, let the beaded tree $B$ be a solution to \textsc{Beaded Tree} on
  $\mathcal{T}$ with $k$ beads.  Then construct a duplication tree $D$ from $B$
  by replacing each bead $(u,v)$ with a duplication node $d_{(u,v)}$.
  $d_{(u,v)}$'s parent in $D$ is $u$'s parent $p$ in $B$ or, if $p$ is itself
  part of a bead $(x,p)$, the duplication node $d_{(x,p)}$; $d_{(u,v)}$'s child
  in $D$ is $v$'s child $c$ in $B$ or, if $c$ is itself part of a bead $(c,y)$,
  the duplication node $d_{(c,y)}$.

  It is easy to observe that $D$ is a duplication tree with $k$ duplications.
  To see that $D$ is a solution to  \textsc{Unrestricted Minimal Episodes
    Inference} on $\mathcal{T}$, consider any tree $T \in \mathcal{T}$ and let
  $h$ be a weak embedding of $T$ into $B$.  Then we can construct a duplication
  mapping from $T$ to $D$ as follows.  For any node $x$ in $T$, if $h(x)$ is
  not in a bead, then set $M(x)=h(x)$.  If $h(x)$ is the top node $u$ of a bead
  $(u,v)$, then let $M(x) = d_{(u,v)}$.  (Note that $h(x)$ cannot be the bottom
  node of a bead, because $x$ is either a leaf or has out-degree~$2$.) By the
  requirements of a weak embedding, $M(x)$ is a strict ancestor of $M(y)$ for
  any edge $xy$ in~$T$. Furthermore, for any internal node $x$  with children
  $y$ and~$y'$, there are paths from $h(x)$ to $h(y)$ and from $h(x)$ to
  $h(y')$ that start with different out-edges of $h(x)$.  It follows that
  either $M(x) = h(x)$ is the least common ancestor of $M(y)$ and $M(y')$ or
  $M(x)$ is a duplication node.
\end{proof}

The next lemma shows that any instance $\mathcal{T}$ of \textsc{Parental
  Hybridization} has an optimal solution that is a beaded tree, that is,
\textsc{Parental Hybridization} can be reduced to \textsc{Beaded Tree}.

\begin{figure}[t]
  \hspace*{\stretch{1}}%
  \begin{tikzpicture}[
    node/.style={fill=black,circle,inner sep=0pt,minimum size=3pt,outer sep=0pt},
    dup/.style={fill=red,circle,inner sep=0pt,minimum size=6pt},
    wide node/.style={fill=black,rectangle,inner sep=0pt,minimum height=3pt,rounded corners=1.5pt},
    edge/.style={draw,thick}]
    \node [node] (q) at (0,0) {};
    \node [node] (r) at (0,1) {};
    \path (r) +(60:1) node [node] (dt) {} +(120:1) node [node] (cs) {};
    \path (cs) +(90:2) node [node] (c1) {};
    \path (dt) +(90:2) node [node] (d1) {};
    \path (c1) +(60:1) node [node] (u) {};
    \path (u) +(90:1) coordinate (root);
    \path (c1) +(225:1) coordinate (cc1);
    \path (cs) +(225:1) coordinate (ccs);
    \path (d1) +(-45:1) coordinate (cd1);
    \path (dt) +(-45:1) coordinate (cdt);
    \path [edge] (cc1) -- (c1) -- (u) -- (d1) -- (cd1)
    (ccs) -- (cs) -- (r) -- (dt) -- (cdt)
    (r) -- (q) (root) -- (u);
    \path [edge,dashed] (c1) -- (cs) (d1) -- (dt);
    \node [anchor=west] at (r) {$r$};
    \node [anchor=west] at (q) {$q$};
    \node [anchor=west] at (u) {$u$};
    \node [anchor=west,yshift=3pt] at (d1) {$d_1$};
    \node [anchor=west,yshift=3pt] at (dt) {$d_t$};
    \node [anchor=east,yshift=3pt] at (c1) {$c_1$};
    \node [anchor=east,yshift=3pt] at (cs) {$c_s$};
  \end{tikzpicture}%
  \hspace*{\stretch{1}}%
  \begin{tikzpicture}[
    node/.style={fill=black,circle,inner sep=0pt,minimum size=3pt,outer sep=0pt},
    dup/.style={fill=red,circle,inner sep=0pt,minimum size=6pt},
    wide node/.style={fill=black,rectangle,inner sep=0pt,minimum height=3pt,rounded corners=1.5pt},
    edge/.style={draw,thick}]
    \begin{scope}[overlay]
      \coordinate (q) at (0,0);
      \coordinate (r) at (0,1);
      \path (r) +(60:1) coordinate (d3) +(120:1) coordinate (c3) {};
      \path (c3) +(90:1) coordinate (c2);
      \path (c2) +(90:1) coordinate (c1);
      \path (d3) +(90:1) coordinate (d2);
      \path (d2) +(90:1) coordinate (d1);
      \path (c1) +(60:1) coordinate (u);
      \path (u) +(90:1) coordinate (root);
      \path (c1) +(225:1) coordinate (cc1);
      \path (c2) +(225:1) coordinate (cc2);
      \path (d1) +(-45:1) coordinate (cd1);
      \path (d2) +(-45:1) coordinate (cd2);
      \path (d3) +(-45:1) coordinate (cd3);
      \foreach \v in {root,u,r,q,c1,c2,c3,d1,d2,d3,cc1,cc2,cd1,cd2,cd3} {
        \path (\v)
        +(180:0.05) coordinate (\v 1)
        +(180:0.15) coordinate (\v l)
        +(0:0.05) coordinate (\v 2)
        +(0:0.15) coordinate (\v r)
        +(90:0.15) coordinate (\v t)
        +(270:0.15) coordinate (\v b);
      }
      \path [name path=path1] (cc1t) -- +(45:1);
      \path [name path=path2] (ul) -- +(240:1.5);
      \path [name intersections={of=path1 and path2}] coordinate (c1l) at (intersection-1);
      \path [name path=path1] (cd1t) -- +(135:1);
      \path [name path=path2] (ur) -- +(-60:1.5);
      \path [name intersections={of=path1 and path2}] coordinate (d1r) at (intersection-1);
      \path [name path=path1] (c1r) -- +(60:1);
      \path [name path=path2] (d1l) -- +(120:1);
      \path [name intersections={of=path1 and path2}] coordinate (ub) at (intersection-1);
      \path [name path=path1] (c3r) -- +(-60:1);
      \path [name path=path2] (d3l) -- +(-120:1);
      \path [name intersections={of=path1 and path2}] coordinate (rt) at (intersection-1);
      \path [name path=path1] (rr) -- +(60:1);
      \path [name path=path2] (cd3b) -- +(135:1);
      \path [name intersections={of=path1 and path2}] coordinate (d3r) at (intersection-1);
      \path [name path=path1] (c3l) -- +(90:2);
      \path [name path=path2] (cc2b) -- +(45:1);
      \path [name intersections={of=path1 and path2}] coordinate (c2b) at (intersection-1);
      \path [name path=path2] (cc2t) -- +(45:1);
      \path [name intersections={of=path1 and path2}] coordinate (c2t) at (intersection-1);
      \path [name path=path2] (cc1b) -- +(45:1);
      \path [name intersections={of=path1 and path2}] coordinate (c1b) at (intersection-1);
      \path [name path=path1] (d2r) +(270:1.5) -- +(90:1);
      \path [name path=path2] (cd3t) -- +(135:1);
      \path [name intersections={of=path1 and path2}] coordinate (d3t) at (intersection-1);
      \path [name path=path2] (cd2b) -- +(135:1);
      \path [name intersections={of=path1 and path2}] coordinate (d2b) at (intersection-1);
      \path [name path=path2] (cd2t) -- +(135:1);
      \path [name intersections={of=path1 and path2}] coordinate (d2t) at (intersection-1);
      \path [name path=path2] (cd1b) -- +(135:1);
      \path [name intersections={of=path1 and path2}] coordinate (d1b) at (intersection-1);
      \path [name path=path1] (q1) -- +(90:2);
      \path [name path=path2] (c3) -- +(300:1);
      \path [name intersections={of=path1 and path2}] coordinate (r1) at (intersection-1);
      \path [name path=path1] (q2) -- +(90:2);
      \path [name path=path2] (d3) -- +(240:1);
      \path [name intersections={of=path1 and path2}] coordinate (r2) at (intersection-1);
    \end{scope}
    \path [fill=black!20] (rootl) -- (ul) -- (c1l) -- (cc1t) -- (cc1b) -- (c1b) -- (c2t)
    -- (cc2t) -- (cc2b) -- (c2b) -- (c3l) -- (rl) -- (ql) -- (qr) -- (rr) -- (d3r)
    -- (cd3b) -- (cd3t) -- (d3t) -- (d2b) -- (cd2b) -- (cd2t) -- (d2t) -- (d1b) -- (cd1b)
    -- (cd1t) -- (d1r) -- (ur) -- (rootr) -- cycle;
    \path [fill=blue!30] (ul) -- (ub) -- (c1r) -- (c3r) -- (rt) -- (rl) -- (c3l) -- (c2b)
    -- (cc2b) -- (cc2t) -- (c2t) -- (c1b) -- (cc1b) -- (cc1t) -- (c1l) -- cycle;
    \path [fill=red!30] (ur) -- (ub) -- (d1l) -- (d3l) -- (rt) -- (rr) -- (d3r) -- (cd3b)
    -- (cd3t) -- (d3t) -- (d2b) -- (cd2b) -- (cd2t) -- (d2t) -- (d1b) -- (cd1b) -- (cd1t)
    -- (d1r) -- cycle;
    \path [pattern=north west lines,pattern color=white] (ul) -- (ub) -- (c1r) -- (c3r) -- (rt) -- (rl) -- (c3l) -- (c2b)
    -- (cc2b) -- (cc2t) -- (c2t) -- (c1b) -- (cc1b) -- (cc1t) -- (c1l) -- cycle;
    \path [fill=red!30] (ur) -- (ub) -- (d1l) -- (d3l) -- (rt) -- (rr) -- (d3r) -- (cd3b)
    -- (cd3t) -- (d3t) -- (d2b) -- (cd2b) -- (cd2t) -- (d2t) -- (d1b) -- (cd1b) -- (cd1t)
    -- (d1r) -- cycle;
    \path [pattern=north east lines,pattern color=white] (ur) -- (ub) -- (d1l) -- (d3l) -- (rt) -- (rr) -- (d3r) -- (cd3b)
    -- (cd3t) -- (d3t) -- (d2b) -- (cd2b) -- (cd2t) -- (d2t) -- (d1b) -- (cd1b) -- (cd1t)
    -- (d1r) -- cycle;
    \path [draw] (rootl) -- (ul) -- (c1l) -- (cc1t) (rootr) -- (ur) -- (d1r) -- (cd1t)
    (qr) -- (rr) -- (d3r) -- (cd3b) (ql) -- (rl) -- (c3l) -- (c2b) -- (cc2b)
    (cc2t) -- (c2t) -- (c1b) -- (cc1b) (cd3t) -- (d3t) -- (d2b) -- (cd2b)
    (cd2t) -- (d2t) -- (d1b) -- (cd1b);
    \path [draw,fill=white] (ub) -- (c1r) -- (c3r) -- (rt) -- (d3l) -- (d1l) -- cycle;
    \path [edge] (cc1) -- (c1) -- (c2) (d2) -- (d1) -- (cd1) (c1) -- (u) -- (d1) (u) -- (root)
    (d3) -- (d2) -- (cd2) (cc2) -- (c2) -- (c3) -- (r1) -- (q1) (cd3) -- (d3) -- (r2) -- (q2);
    \node [node] at (u) {};
    \node [node] at (c1) {};
    \node [node] at (c2) {};
    \node [node] at (d1) {};
    \node [node] at (d2) {};
    \node [node] at (d3) {};
    \node [anchor=west,xshift=-1pt,yshift=2pt] at (d3t) {$d_3$};
    \node [anchor=west,xshift=-1pt,yshift=2pt] at (d2t) {$d_2$};
    \node [anchor=west,xshift=-1pt,yshift=2pt] at (d1 -| d3t) {$d_1$};
    \node [anchor=east,xshift=1pt,yshift=2pt] at (c2t) {$c_2$};
    \node [anchor=east,xshift=1pt,yshift=2pt] at (c1 -| c2t) {$c_1$};
    \node [anchor=west,yshift=-2pt] at (rr) {$r$};
    \node [anchor=west] at (qr) {$q$};
    \node [anchor=west,yshift=2pt] at (ur) {$u$};
  \end{tikzpicture}%
  \hspace*{\stretch{1}}%
  \begin{tikzpicture}[
    node/.style={fill=black,circle,inner sep=0pt,minimum size=3pt,outer sep=0pt},
    dup/.style={fill=red,circle,inner sep=0pt,minimum size=6pt},
    wide node/.style={fill=black,rectangle,inner sep=0pt,minimum height=3pt,rounded corners=1.5pt},
    edge/.style={draw,thick}]
    \node [node] (q) at (0,0) {};
    \path let \n1 = {4+2*cos(30)} in coordinate (root) at (0,\n1);
    \node [node] (dt) at (0,0.75) {};
    \path let \p1 = ($(root) - (dt)$), \n1 = {(\y1 - 2.25cm) / 3},
    \p2 = ($(root) - (0,\n1+1.5cm)$),
    \p3 = ($(dt) + (0,\n1)$)
    in node [node] (d1) at (\p3) {} (d1) +(90:0.75) node [node] (cs) {} node [node] (c1) at (\p2) {};
    \path (root) +(270:0.75) node [node] (u) {};
    \path (c1)   +(90:0.75)  node [node] (v) {};
    \path (c1) +(225:1) coordinate (cc1);
    \path (cs) +(225:1) coordinate (ccs);
    \path (d1) +(315:1) coordinate (cd1);
    \path (dt) +(315:1) coordinate (cdt);
    \path [name path=path1] (u) -- +(315:1);
    \path [name path=path2] (v) -- +(45:1);
    \path [name intersections={of=path1 and path2}] coordinate (beadcr) at (intersection-1);
    \path [edge] (q) -- (dt) -- (cdt) (ccs) -- (cs) -- (d1) -- (cd1) (v) -- (c1) -- (cc1)
    (root) -- (u);
    \path [edge] let \p1 = ($(u) - (beadcr)$), \n1 = {veclen(\x1,\y1)} in
    (u) arc (135:225:\n1) (u) arc (45:-45:\n1);
    \path [edge,dashed] (c1) -- (cs) (d1) -- (dt);
    \node [anchor=west,yshift=3pt] at (d1) {$d_1$};
    \node [anchor=west,yshift=3pt] at (dt) {$d_t$};
    \node [anchor=east,yshift=3pt] at (c1) {$c_1$};
    \node [anchor=east,yshift=3pt] at (cs) {$c_s$};
    \node [anchor=east] at (q) {$q$};
    \node [anchor=west,yshift=3pt] at (u) {$u$};
    \node [anchor=west,yshift=-3pt] at (v) {$v$};
  \end{tikzpicture}%
  \hspace*{\stretch{1}}%
  \begin{tikzpicture}[
    node/.style={fill=black,circle,inner sep=0pt,minimum size=3pt,outer sep=0pt},
    dup/.style={fill=red,circle,inner sep=0pt,minimum size=6pt},
    wide node/.style={fill=black,rectangle,inner sep=0pt,minimum height=3pt,rounded corners=1.5pt},
    edge/.style={draw,thick}]
    \coordinate (q) at (0,0);
    \path let \n1 = {4+2*cos(30)} in coordinate (root) at (0,\n1);
    \path let \p1 = ($(root) - (dt)$), \n1 = {(\y1 - 2.25cm) / 3},
    \p2 = ($(root) - (0,\n1+1.5cm)$)
    in coordinate (c1) at (\p2);
    \path (root) +(270:0.75) coordinate (u);
    \path (c1)   +(90:0.75)  coordinate (v);
    \path (v) +(270:0.25) coordinate (c1);
    \coordinate (c2) at (barycentric cs:c1=0.75,q=0.25);
    \coordinate (d1) at (barycentric cs:c1=0.66,q=0.34);
    \coordinate (d2) at (barycentric cs:c1=0.43,q=0.57);
    \coordinate (d3) at (barycentric cs:c1=0.2,q=0.8);
    \foreach \v in {root,u,v,c1,c2,d1,d2,d3,q} {
      \path (\v)
      +(180:0.15) coordinate (\v 1)
      +(180:0.3) coordinate (\v l)
      +(0:0.15) coordinate (\v 2)
      +(0:0.3) coordinate (\v r);
    }
    \path foreach \v in {c1,c2} {
      (\v 1) +(225:1) coordinate (c\v)
    };
    \path foreach \v in {d1,d2,d3} {
      (\v 2) +(315:1) coordinate (c\v)
    };
    \foreach \v in {cc1,cc2,cd1,cd2,cd3} {
      \path (\v) +(90:0.15) coordinate (\v t) +(270:0.15) coordinate (\v b);
    }
    \path (barycentric cs:u=0.5,v=0.5) +(90:0.2) coordinate (uib) +(270:0.2) coordinate (vit);
    \path [name path=path1] (uib) -- +(315:1);
    \path [name path=path2] (vit) -- +(45:1);
    \path [name intersections={of=path1 and path2}] coordinate (beadcr) at (intersection-1);
    \path [name path=path1] (uib) -- +(225:1);
    \path [name path=path2] (vit) -- +(135:1);
    \path [name intersections={of=path1 and path2}] coordinate (beadcl) at (intersection-1);
    \path [name path=path1] (ql) -- (rootl);
    \path [name path=path2] (cc2b) -- +(45:1);
    \path [name intersections={of=path1 and path2}] coordinate (c2b) at (intersection-1);
    \path [name path=path2] (cc2t) -- +(45:1);
    \path [name intersections={of=path1 and path2}] coordinate (c2t) at (intersection-1);
    \path [name path=path2] (cc1b) -- +(45:1);
    \path [name intersections={of=path1 and path2}] coordinate (c1b) at (intersection-1);
    \path [name path=path2] (cc1t) -- +(45:1);
    \path [name intersections={of=path1 and path2}] coordinate (c1t) at (intersection-1);
    \path [name path=path2] (beadcr) -- +(135:1);
    \path [name intersections={of=path1 and path2}] coordinate (uotl) at (intersection-1);
    \path [name path=path1] (qr) -- (rootr);
    \path [name path=path2] (cd3b) -- +(135:1);
    \path [name intersections={of=path1 and path2}] coordinate (d3b) at (intersection-1);
    \path [name path=path2] (cd3t) -- +(135:1);
    \path [name intersections={of=path1 and path2}] coordinate (d3t) at (intersection-1);
    \path [name path=path2] (cd2b) -- +(135:1);
    \path [name intersections={of=path1 and path2}] coordinate (d2b) at (intersection-1);
    \path [name path=path2] (cd2t) -- +(135:1);
    \path [name intersections={of=path1 and path2}] coordinate (d2t) at (intersection-1);
    \path [name path=path2] (cd1b) -- +(135:1);
    \path [name intersections={of=path1 and path2}] coordinate (d1b) at (intersection-1);
    \path [name path=path2] (cd1t) -- +(135:1);
    \path [name intersections={of=path1 and path2}] coordinate (d1t) at (intersection-1);
    \path [name path=path2] (beadcl) -- +(315:1);
    \path [name intersections={of=path1 and path2}] coordinate (uobr) at (intersection-1);
    \path [name path=path1] let \p1 = ($(uobr) - (beadcl)$), \n1 = {veclen(\x1,\y1)},
    \p2 = ($(vit) - (beadcl)$), \n2 = {veclen(\x2,\y2)}, \n3 = {(\n1 + \n2) / 2} in
    (beadcl) circle (\n3);
    \path [name path=path2] (root) -- +(270:1);
    \path [name intersections={of=path1 and path2}] coordinate (u) at (intersection-1);
    \path [name path=path2] (q2) -- +(90:5);
    \path [name intersections={of=path1 and path2}] coordinate (v2) at (intersection-2);
    \path [name path=path1] let \p1 = ($(uotl) - (beadcr)$), \n1 = {veclen(\x1,\y1)},
    \p2 = ($(uib) - (beadcr)$), \n2 = {veclen(\x2,\y2)}, \n3 = {(\n1 + \n2) / 2} in
    (beadcr) circle (\n3);
    \path [name path=path2] (q1) -- +(90:5);
    \path [name intersections={of=path1 and path2}] coordinate (v1) at (intersection-2);
    \path (d3b) +(180:0.6) coordinate (d3bl) +(180:0.3) coordinate (d3bm);
    \path [fill=black!20] let \p1 = ($(uotl) - (beadcr)$), \n1 = {veclen(\x1,\y1)} in
    (rootl) -- (uotl) arc (135:225:\n1) -- (c1t) -- (cc1t) -- (cc1b) -- (c1b)
    -- (c2t) -- (cc2t) -- (cc2b) -- (c2b) -- (ql) -- (qr) -- (d3b) -- (cd3b) -- (cd3t)
    -- (d3t) -- (d2b) -- (cd2b) -- (cd2t) -- (d2t) -- (d1b) -- (cd1b) -- (cd1t) -- (d1t)
    -- (uobr) arc (-45:45:\n1) -- (rootr) -- cycle;
    \path [fill=blue!30] let \p1 = ($(uotl) - (beadcr)$), \n1 = {veclen(\x1,\y1)} in
    (uotl) arc (135:225:\n1) -- (c1t) -- (cc1t) -- (cc1b) -- (c1b) -- (c2t) -- (cc2t)
    -- (cc2b) -- (c2b) -- (d3bl) -- (d3bm) -- (uib) -- cycle;
    \path [pattern=north west lines,pattern color=white] let \p1 = ($(uotl) - (beadcr)$), \n1 = {veclen(\x1,\y1)} in
    (uotl) arc (135:225:\n1) -- (c1t) -- (cc1t) -- (cc1b) -- (c1b) -- (c2t) -- (cc2t)
    -- (cc2b) -- (c2b) -- (d3bl) -- (d3bm) -- (uib) -- cycle;
    \path [fill=red!30] let \p1 = ($(uobr) - (beadcl)$), \n1 = {veclen(\x1,\y1)} in
    (uib) -- (d3bm) -- (d3b) -- (cd3b) -- (cd3t) -- (d3t) -- (d2b) -- (cd2b) -- (cd2t) -- (d2t)
    -- (d1b) -- (cd1b) -- (cd1t) -- (d1t) -- (uobr) arc (-45:45:\n1) -- cycle;
    \path [pattern=north east lines,pattern color=white] let \p1 = ($(uobr) - (beadcl)$), \n1 = {veclen(\x1,\y1)} in
    (uib) -- (d3bm) -- (d3b) -- (cd3b) -- (cd3t) -- (d3t) -- (d2b) -- (cd2b) -- (cd2t) -- (d2t)
    -- (d1b) -- (cd1b) -- (cd1t) -- (d1t) -- (uobr) arc (-45:45:\n1) -- cycle;
    \path [draw] (cc2b) -- (c2b) -- (ql) (cc2t) -- (c2t) -- (c1b) -- (cc1b)
    (qr) -- (d3b) -- (cd3b) (cd3t) -- (d3t) -- (d2b) -- (cd2b) (cd2t) -- (d2t) -- (d1b) -- (cd1b);
    \path [draw] let \p1 = ($(uotl) - (beadcr)$), \n1 = {veclen(\x1,\y1)} in
    (rootl) -- (uotl) arc (135:225:\n1) -- (c1t) -- (cc1t)
    (cd1t) -- (d1t) -- (uobr) arc (-45:45:\n1) -- (rootr);
    \path [draw,fill=white] let \p1=($(uib) - (beadcr)$), \n1 = {veclen(\x1,\y1)} in
    (uib) arc (135:225:\n1) arc (-45:45:\n1) -- cycle;
    \node [node] at (u) {};
    \node [node] at (c11) {};
    \node [node] at (c21) {};
    \node [node] at (d12) {};
    \node [node] at (d22) {};
    \node [node] at (d32) {};
    \path [edge] (cc2) -- (c21) -- (q1) (cc1) -- (c11) -- (c21)
    (cd1) -- (d12) -- (d22) (cd2) -- (d22) -- (d32) (cd3) -- (d32) -- (q2) (root) -- (u);
    \path [edge] let \p1 = ($(u) - (beadcl)$), \n1 = {atan2(\y1,\x1)},
    \p2 = ($(v2) - (beadcl)$), \n2 = {atan2(\y2,\x2)},
    \n3 = {veclen(\y1,\x1)} in
    (u) arc (\n1:\n2:\n3) -- (d12);
    \path [edge] let \p1 = ($(u) - (beadcr)$), \n1 = {atan2(\y1,\x1)},
    \p2 = ($(v1) - (beadcr)$), \n2 = {360+atan2(\y2,\x2)},
    \n3 = {veclen(\y1,\x1)} in
    (u) arc (\n1:\n2:\n3) -- (c11);
    \node [anchor=west,xshift=-1pt,yshift=2pt] at (d3t) {$d_3$};
    \node [anchor=west,xshift=-1pt,yshift=2pt] at (d2t) {$d_2$};
    \node [anchor=west,xshift=-1pt,yshift=2pt] at (d1 -| d3t) {$d_1$};
    \node [anchor=east,xshift=1pt,yshift=2pt] at (c2t) {$c_2$};
    \node [anchor=east,xshift=1pt,yshift=2pt] at (c1 -| c2t) {$c_1$};
    \node [anchor=east] at (ql) {$q$};
    \node [anchor=west,yshift=2pt] at (uotl -| uobr) {$u$};
    \node [anchor=west,yshift=-2pt] at (uobr) {$v$};
  \end{tikzpicture}%
  \hspace*{\stretch{1}}%
  \caption{The construction of a bead as in Lemma~\ref{lem:BeadedSolution}. The
    embeddings of one tree into the networks before and after are also
    depicted. The colours in the ``fat'' networks indicate how the embeddings
    change.}
  \label{fig:addTree}
\end{figure}
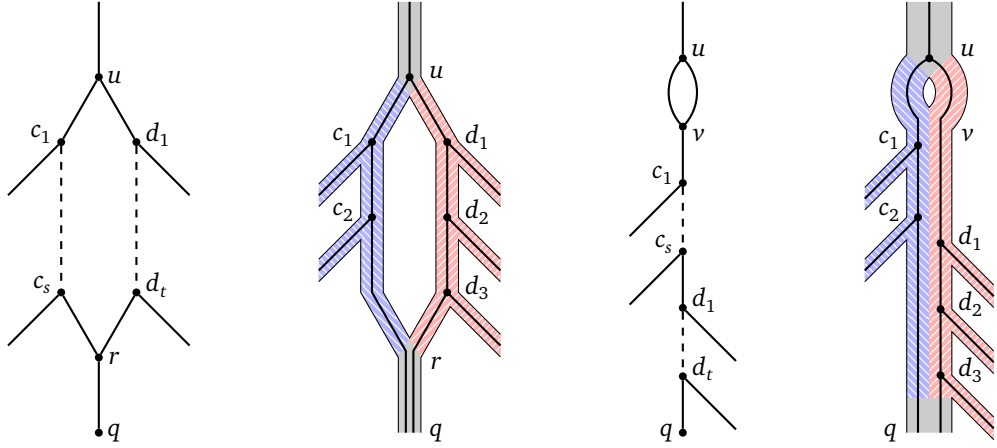

\begin{lemma}\label{lem:BeadedSolution}
  For any set $\mathcal{T}$ of MUL-trees on $X$, there exists a phylogenetic
  network $N$ with $k$ reticulations that weakly displays all MUL-trees in
  $\mathcal{T}$ if and only if there exists a beaded tree $B$ with $k$
  reticulations that weakly displays the MUL-trees in $\mathcal{T}$.
\end{lemma}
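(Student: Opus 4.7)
The ``only if'' direction is immediate: every beaded tree is by definition a phylogenetic network, so a beaded tree $B$ with $k$ reticulations weakly displaying $\mathcal{T}$ directly certifies the existence of the phylogenetic network required on the other side.

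For the ``if'' direction, I would argue by induction on the number of non-bead reticulations of~$N$. If this number is zero, $N$ is already a beaded tree and we are done. Otherwise, choose a non-bead reticulation $r$ that is ``as high as possible'' above its two parents' common ancestors (for instance, one whose ancestors, up to the nearest tree-node common ancestor of its two parents, contain no other reticulations), and perform the local surgery depicted in Figure~\ref{fig:addTree}. Let $p_1 \ne p_2$ be the two parents of $r$, let $u$ be a common tree-node ancestor, and let $c_1,\ldots,c_s = p_1$ and $d_1,\ldots,d_t = p_2$ be the two tree-node chains joining $u$ down to $p_1$ and $p_2$ (each chain node carrying a single side-subtree). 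Replace the portion of $N$ strictly between $u$ and $r$ by a bead hanging from $u$, with the two side-chains reattached along the two parallel sides of the bead in their original vertical order, and with the child $q$ of $r$ becoming the unique child of the bottom of the bead. The surgery eliminates $r$ as a non-bead reticulation and introduces exactly one new bead reticulation, so the total reticulation count is unchanged while the number of non-bead reticulations strictly drops.

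The main obstacle, and the step I would spend most effort on, is verifying that the surgered network $N'$ still weakly displays every $T \in \mathcal{T}$. Given a weak embedding $h: T \to N$, I would define $h': T \to N'$ by leaving $h$ unchanged outside the modified region and rerouting the images of nodes and edges that previously ran through $r$, $p_1$, $p_2$, or an intermediate chain node through the new bead and its subdivisions. The key point is that a bead provides exactly the two out-edges at its top needed to reproduce any splitting in $T$ that was previously realised at $r$ or at an intermediate tree node of one of the two chains; because weak embeddings only require that the two out-edges at a tree node of $T$ map to \emph{different} out-edges of a common network node, without any edge-disjointness condition on the images, the two parallel sides of the bead supply more than enough flexibility. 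A short case analysis on whether each node of $T$ is mapped by $h$ strictly inside the modified region, onto $u$ or~$r$, or outside completes the check. Since each inductive step preserves the total reticulation number, the terminal beaded tree has exactly $k$ reticulations, as required.
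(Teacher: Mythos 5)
Your overall strategy matches the paper's: one direction is trivial because a beaded tree is a phylogenetic network, and the other is proved by a local surgery that converts a non-bead reticulation $r$ into a bead without changing the reticulation number, while preserving all weak embeddings (the paper phrases this as an extremal argument over networks maximizing the number of beads rather than an induction, but that is cosmetic; also, you have the names of the two directions swapped, though the content is assigned correctly). The problem is the surgery itself. As you describe it --- ``the two side-chains reattached along the two parallel sides of the bead,'' with $q$ the child of the bottom node --- the bottom node again has the two distinct parents $c_s = p_1$ and $d_t = p_2$, the two edges out of $u$ again lead to $c_1$ and $d_1$, and nothing is parallel: you have reconstructed the original diamond, not a bead, and the number of non-bead reticulations does not drop. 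The correct move (shown in Figure~\ref{fig:addTree}) is to create a genuine bead $(u,v)$ with two direct parallel edges and to concatenate the two chains \emph{in series} below $v$: $v \to c_1 \to \cdots \to c_s \to d_1 \to \cdots \to d_t \to q$. Since $v$ has out-degree~$1$, there is no way to hang both chains off the bead other than in series, so your description cannot be repaired without changing it to this.

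Two consequences of that change are then missing from your argument. First, the series concatenation makes every $c_i$ an ancestor of every $d_j$, so acyclicity of $N'$ must be justified; it holds precisely because $u$ is chosen as a \emph{least} common ancestor of $c_s$ and $d_t$, whence no $d_j$ has a directed path to any $c_i$. Second, your embedding-preservation sketch focuses on the wrong case: the two parallel edges of the bead handle a $T$-node mapped to $u$, but that was never the delicate case (in $N$ the node $u$ already had two children separating the relevant descendants). The cases that actually need checking are $T$-nodes mapped to $c_s$ or $d_t$, whose out-edges are changed by the surgery; there one must argue that the child of the embedded node that formerly descended through $r$ now descends through $q$ via the new edge $c_sd_1$ (respectively $d_tq$), while the other child still descends through the unaffected out-edge, so the two image paths still start with different out-edges. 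Without the corrected surgery and these two verifications, the induction step does not go through.
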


\begin{proof}
  The if-direction is trivial because every beaded tree is a phylogenetic
  network.  For the only-if-direction, consider a network $N$ with the maximum
  number of beads among all solutions of \textsc{Parental Hybridization} on
  $\mathcal{T}$ with $k$ reticulations.  If $N$ is a beaded tree, the lemma
  holds.  Otherwise, there is some reticulation node $r$ in $N$ that has two
  different parents $c_s$ and $d_t$.  Let $q$ be the unique child of $r$.  Let
  $u$ be a least common ancestor of $c_s$ and $d_t$ in $N$, let $c_1$ and $d_1$
  be the children of $u$, let $c_1, \ldots, c_s$ be the nodes on a path from
  $c_1$ to $c_s$, and let $d_1, \ldots, d_t$ be the nodes on a path from $d_1$
  to $d_t$.  Note that, by construction, there is no directed path from $d_j$
  to $c_i$, for any $1 \le i \le s$ and $1 \le j \le t$.  These definitions are
  illustrated in Figure~\ref{fig:addTree}.

  We obtain a phylogenetic network $N'$ from $N$ as follows: Delete $r$ and any
  edges incident to it, as well as the edges $uc_1$ and $ud_1$.  Now add a new
  node $v$, a pair of parallel edges from $u$ to $v$, and edges $vc_1, c_sd_1$,
  and $d_tq$.  (Note that this construction assumes that $s,t \geq 1$; if this
  is not the case, then we can produce $N'$ by introducing a ``dummy node''
  $c_1$ or $d_1$ and suppressing it after the construction is complete.)

  Observe that (as there is no path from any node $d_j$ to any node $c_i$ in
  $N$) $N'$ is still an acyclic graph. It follows that $N'$ is a phylogenetic
  network, and it is easy to see that $N'$ has the same number of reticulations
  as $N$ but one more bead than $N$.  We show now that any MUL-tree $T$ weakly
  displayed by $N$ is also weakly displayed by $N'$, from which it follows that
  $N'$ is also a solution to \textsc{Parental Hybridization} on
  $\mathcal{T}$ with $k$ reticulations.  Since $N'$ has one more bead than $N$, this contradicts the
  choice of $N$, that is $N$ must be a beaded tree.

  Let $h$ be a weak embedding of $T$ into $N$.  Then we define a weak embedding
  $h'$ of $T$ into $N'$ as follows.  Since $h(x) \ne r$ for every node $x \in
  T$ and $V(N) \setminus V(N') = \{r\}$, we have $h(x) \in V(N')$ for all $x
  \in T$.  Thus, we can define $h'(x) = h(x)$ for all $x \in T$.  Next observe
  that, for any two nodes $u', v' \in V(N) \setminus \{r\}$, there exists a
  path from $u'$ to $v'$ in $N'$ if there exists such a path in $N$.  Thus,
  since there exists a path $h(xy)$ from $h(x)$ to $h(y)$ in $N$, for every
  edge $xy \in T$, there also exists a path $h'(xy)$ from $h'(x)$ to $h'(y)$ in
  $N'$ for every edge $xy \in T$.  We need to show that we can choose these
  paths such that, for every node $x \in V(T)$ with children $y$ and $y'$, the
  paths $h'(xy)$ and $h'(xy')$ begin with different out-edges of $h'(x)$.

  So consider a node $x$ and its two children $y$ and $y'$ in $T$. If no
  out-edges of $h'(x)$ were deleted in the construction of $N'$, then the
  children of $h'(x)$ are the same in $N'$ as in $N$, and these children are
  still ancestors of $h'(y)$ and $h'(y')$. Thus, the required paths exist.  Now
  assume that at least one out-edge of $h'(x)$ was deleted, from which it
  follows that $h'(x) \in \{u,c_s,d_t\}$.  If $h'(x)=u$, then there are two
  paths from $h'(x)$ to $h'(y)$ and from $h'(x)$ to $h'(y')$ that use different
  out-edges of $h'(x)$, as each path can use a different parallel edge from $u$
  to $v$.  If $h'(x)=c_s$, then one of $\{h'(y),h'(y')\}$ is a descendant of
  $r$ (and therefore a descendant of $q$), and the other is a descendant of the
  other child of $c_s$. Therefore, in $N'$, one of $\{h'(y),h'(y')\}$ is
  descended from $q$, and the other is descended from the child of $c_s$ that
  is not~$d_1$. Thus, the required paths still exist. A similar argument
  applies when $h'(x)=d_t$.  This finishes the proof.
\end{proof}

\section{Structural Properties of Optimal Beaded Trees}\label{sec:structure}

In this section, we prove some of the properties of an optimal solution to an
instance of \textsc{Beaded Tree}.
These properties will both be used in Section~\ref{sec:algorithm} as a basis
for our algorithm for finding an optimal beaded tree for any given instance
and highlight that in fact \emph{every} optimal solution to an instance of
\textsc{Beaded Tree} has a very restrictive structure.

\begin{definition}
  Given a phylogenetic network $N$ on $X$ and a subset $S \subseteq X$, let
  $N\setminus S$ denote the network derived from $N$ by deleting every leaf in
  $S$, and then exhaustively deleting unlabelled nodes of out-degree $0$ and
  suppressing nodes of in-degree $1$ and out-degree $1$.  Let $N|_{S}$ denote
  the network $N \setminus (X \setminus S)$.
\end{definition}

For a set of MUL-trees $\mathcal{T}$, let $F_1(\mathcal{T})$ denote the set of
trees derived by, roughly speaking, deleting the topmost tree node from every
tree.  We make this notion more precise in the following definition.

\begin{definition}
  Given a MUL-tree $T$ with more than one leaf, let $r$ denote the root, $x$
  the child of $r$ and $y_l$ and $y_r$ the children of $x$.  Let $T_l$ be
  derived from $T$ by deleting $y_r$ and all its descendants, and suppressing
  $x$.  Similarly let $T_r$ be derived from $T$ by deleting $y_l$ and all its
  descendants, and suppressing $x$.  Then we call $\{T_l,T_r\}$ the
  \emph{depth-$1$ forest} of $T$, denoted $F_1(T)$. For a set of MUL-trees
  $\mathcal{T}$, we define 
  \begin{equation*}
    F_1(\mathcal{T})=\bigcup_{T\in\mathcal{T}}F_1(T).
  \end{equation*}
\end{definition}

In what follows, we say that a beaded tree $B$ has a \emph{bead at the root} if
the child $u$ of the root node is part of a bead $(u,v)$.

\begin{lemma}\label{lem:topBead}
  Given an instance $\mathcal{T}$ of \textsc{Beaded Tree}, there exists a
  solution $B$ with a bead at the root and reticulation number $k$ if and only
  if $F_1(\mathcal{T})$ has a solution $B'$ with reticulation number $k-1$.
\end{lemma}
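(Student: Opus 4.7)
The plan is to prove each direction by a small surgical modification of the beaded tree, together with a corresponding transformation of weak embeddings.

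For the forward direction, let $B$ be a solution to \textsc{Beaded Tree} with a bead $(u,v)$ at the root. Let $r$ be the root of $B$ and $w$ the unique child of $v$. I would construct $B'$ by deleting $u$ and $v$ together with all their incident edges, and adding a new edge from $r$ to $w$. Since $v$ is the only reticulation node removed and no new reticulation is created, $B'$ is a beaded tree with reticulation number $k-1$. To show that $B'$ weakly displays every tree in $F_1(\mathcal{T})$, fix $T \in \mathcal{T}$ with root $r_T$, child $x$, grandchildren $y_l, y_r$, and a weak embedding $h$ of $T$ into $B$. The key observation is that $h(y_l)$ and $h(y_r)$ must lie strictly below $v$: since $h(x)$ must admit two distinct out-edges whose paths reach $h(y_l)$ and $h(y_r)$, $h(x)$ cannot be $r$ or $v$ (each has out-degree~$1$), so either $h(x)=u$ and both paths proceed through $v$, or $h(x)$ already lies strictly below $v$; a brief case analysis then rules out $h(y_l)=v$ and $h(y_r)=v$ as well, using that $v$ has out-degree~$1$ and is not a leaf. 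Given this, define $h'_l\colon V(T_l)\to V(B')$ by $h'_l(r_T)=r$ and $h'_l(z)=h(z)$ for all descendants $z$ of $y_l$ in $T$, and similarly define $h'_r$ for $T_r$. The new edge $r\to w$ supplies a path from $r$ to $h(y_l)$ (resp.\ $h(y_r)$), and all paths in $B$ from $h(y_l)$ downward are unchanged in $B'$, so the weak embedding conditions transfer.

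For the backward direction, let $B'$ be a solution for $F_1(\mathcal{T})$ with reticulation number $k-1$, and let $r'$ be its root, with unique child $w'$. Construct $B$ by introducing a new root $r$, a new tree node $u$, and a new reticulation $v$, together with the edge $r\to u$, two parallel edges from $u$ to $v$, and an edge $v\to w'$, and then deleting $r'$ and the edge $r'\to w'$. Then $B$ is a beaded tree with reticulation number $k$ and a bead $(u,v)$ at its root. For each $T\in\mathcal{T}$, take weak embeddings $h_l,h_r$ of $T_l,T_r$ into $B'$ and define $h\colon V(T)\to V(B)$ by $h(r_T)=r$, $h(x)=u$, $h(z)=h_l(z)$ for $z$ in the subtree of $T$ rooted at $y_l$, and $h(z)=h_r(z)$ for $z$ in the subtree rooted at $y_r$. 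The two parallel edges of the bead provide distinct out-edges of $u$ for the paths embedding the edges $xy_l$ and $xy_r$, the edge $r\to u$ embeds $r_Tx$, and all remaining paths carry over unchanged from $h_l$ and $h_r$ because the portion of $B$ at and below $w'$ coincides with the corresponding portion of $B'$.

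The main obstacle is the case analysis in the forward direction establishing that $h(y_l)$ and $h(y_r)$ lie in $V(B')$, i.e., strictly below $v$; once this is in hand, both directions reduce to routine verification of the strict-descendant and distinct-out-edge conditions that define a weak embedding.
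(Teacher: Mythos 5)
Your proposal is correct and follows essentially the same approach as the paper's proof: one direction adds a bead below the root of a solution for $F_1(\mathcal{T})$ and routes the two children of $x$ through the two parallel edges, and the other suppresses the root bead after observing that every strict descendant of $x$ embeds strictly below $v$. The only differences are cosmetic (you present the directions in the opposite order and rebuild the root node rather than reusing it).
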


\begin{proof}
  Suppose first that $F_1(\mathcal{T})$ has a solution $B'$ with reticulation
  number $k-1$. Let $r$ be the root of $B'$ and $a$ its child.  Construct a
  beaded tree $B$ from $B'$ by deleting the edge $ra$, adding a new bead
  $(u,v)$, and adding edges $ru$ and $va$.  By construction, $B$ is a beaded
  tree with $k$ beads, and it has a bead at the root.

  To see that $B$ is a solution for $\mathcal{T}$, consider any tree $T$ in
  $\mathcal{T}$, and let $\{T_l, T_r\} = F_1(T)$.  Let $r_T$ be the root of
  $T$, $x$~its child, and $y_l$ and $y_r$ the children of $x$, with $y_l \in
  V(T_l)$ and $y_r \in V(T_r)$.  Since $B'$ is a solution for
  $F_1(\mathcal{T})$, there exist weak embeddings $h_l$ and $h_r$ of $T_l$ and
  $T_r$, respectively, into $B'$.  Construct a weak embedding $h$ of $T$ into
  $B$ as follows: Let $h(r_T)=r$, $h(x) = u$, and for all other nodes $x' \in
  V(T)$, $h(x') = h_l(x')$ if $x' \in V(T_l)$, and $h(x') = h_r(x')$ if $x' \in
  V(T_r)$.  Let $h(r_Tx)$ be the path from $r$ to $u$, $h(x'y')=h_l(x'y')$ if
  $x'y' \in E(T_l)$, and $h(x'y')=h_r(x'y')$ if $x'y' \in E(T_r)$.  Finally,
  let $h(xy_l)$ be a path from $u$ to $h(y_l)$, and $h(xy_r)$ a path from $u$
  to $h(y_r)$, letting those two paths start with different out-edges of $u$.
  It is easy to see that $h$ is a weak embedding of $T$ into $B$, so $B$ is a
  solution for $\mathcal{T}$.

  Conversely, suppose that $\mathcal{T}$ has a solution $B$ with a bead $(u,v)$
  at the root and reticulation number $k$; see Figure~\ref{fig:manyFigures}.
  Let $r$ be the root of $B$ and $z$ the child of $v$.  Let $B'$ be the network
  derived from $B$ by deleting $u$ and $v$ and adding an edge $rz$.  By
  construction, $B'$ is a beaded tree with reticulation number $k-1$.

  To see that $B'$ is a solution for $F_1(\mathcal{T})$, consider any tree $T$
  in $\mathcal{T}$, and let $\{T_l, T_r\} = F_1(T)$.  Let $r_T$ be the root of
  $T$, $x$ its child, and $y_l$ and $y_r$ the children of $x$, with $y_l \in
  V(T_l)$ and $y_r \in V(T_r)$.  Since $B$ is a solution for $\mathcal{T}$,
  there exists a weak embedding $h$ of $T$ into $B$.  Observe that $h(x')$ must
  be a strict descendant of $v$ for any strict descendant $x'$ of $x$ (indeed,
  $u$ is the earliest node that $x$ could be mapped to and any strict
  descendant of $x$ must be mapped to a tree node strictly descended from this
  point).  So we can define a weak embedding $h_l$ of $T_l$ into $B'$ by
  letting $h_l(r_T) = r$ and $h_l(x') = h(x')$ for every node $x' \ne r_T \in
  T_l$, letting $h_l(r_Ty_l)$ be a path in $B'$ from $r$ to $h_l(y_l)$, and
  letting $h_l(e) = h(e)$ for any other edge $e \in T_l$.  By a similar method,
  we can define a weak embedding $h_r$ of $T_r$ into~$B'$.  Thus, $B'$ is a
  solution for $F_1(\mathcal{T})$, as required.
\end{proof}

\begin{figure}[t]
  \hspace*{\stretch{1}}%
  \subcaptionbox{}{%
    \begin{tikzpicture}[
      node/.style={fill=black,circle,inner sep=0pt,minimum size=3pt,outer sep=0pt},
      dup/.style={fill=red,circle,inner sep=0pt,minimum size=6pt},
      wide node/.style={fill=black,rectangle,inner sep=0pt,minimum height=3pt,rounded corners=1.5pt},
      edge/.style={draw,thick},
      x=0.7cm,y=0.7cm]
      \foreach \i in {0,...,5} {
        \node [node] (l\i) at (\i,0) {};
      }
      \path (l0) +(60:1) node [node] (p01) {};
      \path (l0) +(60:2) node [node] (p02) {};
      \path (l0) +(60:5) node [node] (p05) {};
      \path (l3) +(60:2) node [node] (p35) {};
      \path (l4) +(60:1) node [node] (p45) {};
      \path (p05) +(90:1) node [node] (r) {};
      \path [edge] (l0) -- (p01) -- (l1) (p01) -- (p02) -- (l2)
      (l4) -- (p45) -- (l5) (l3) -- (p35) -- (p45)
      (p02) -- (p05) -- (p35) (p05) -- (r);
      \node [anchor=north,text height=height("$b$")] at (l0) {$a$};
      \node [anchor=north,text height=height("$b$")] at (l1) {$a$};
      \node [anchor=north,text height=height("$b$")] at (l2) {$b$};
      \node [anchor=north,text height=height("$b$")] at (l3) {$a$};
      \node [anchor=north,text height=height("$b$")] at (l4) {$b$};
      \node [anchor=north,text height=height("$b$")] at (l5) {$c$};
      \node [anchor=east,yshift=2pt] at (p02) {$y_l$};
      \node [anchor=west,yshift=2pt] at (p35) {$y_r$};
      \node [anchor=west,yshift=2pt] at (p05) {$x$};
      \node [anchor=west] at (r) {$r_T$};
    \end{tikzpicture}}%
  \hspace*{\stretch{1}}%
  \subcaptionbox{}{%
    \begin{tikzpicture}[
      node/.style={fill=black,circle,inner sep=0pt,minimum size=3pt,outer sep=0pt},
      dup/.style={fill=red,circle,inner sep=0pt,minimum size=6pt},
      wide node/.style={fill=black,rectangle,inner sep=0pt,minimum height=3pt,rounded corners=1.5pt},
      edge/.style={draw,thick},
      x=0.7cm,y=0.7cm]
      \foreach \i/\x in {0/0,1/1.5,2/3.0} {
        \node [node] (l\i) at (\x,0) {};
      }
      \path (l0) +(60:1.5) coordinate (p0);
      \path (l0) +(60:3.0) node [node] (p02) {};
      \path (l1) +(60:1.5) node [node] (p12) {};
      \path (p0) +(240:0.5) node [node] (p0b) {};
      \path (p0) +(60:0.5) node [node] (p0t) {};
      \node [node] (r) at (p02 |- r) {};
      \coordinate (pp02) at (barycentric cs:r=0.5,p02=0.5);
      \path (pp02) +(270:0.5) node [node] (pp02b) {};
      \path (pp02) +(90:0.5) node [node] (pp02t) {};
      \path [name path=path1] (pp02t) -- +(225:1);
      \path [name path=path2] (pp02b) -- +(135:1);
      \path [name intersections={of=path1 and path2}] coordinate (beadc) at (intersection-1);
      \path [edge] (l1) -- (p12) -- (l2) (l0) -- (p0b) (p0t) -- (p02) -- (p12) (p02) -- (pp02b)
      (pp02t) -- (r);
      \path [edge] let \p1 = ($(pp02t) - (beadc)$), \n1 = {veclen(\x1,\y1)} in
      (pp02t) arc (135:225:\n1) arc (-45:45:\n1) -- cycle
      (p0t) arc (105:195:\n1) arc (-75:15:\n1) -- cycle;
      \node [anchor=north,text height=height("$b$")] at (l0) {$a$};
      \node [anchor=north,text height=height("$b$")] at (l1) {$b$};
      \node [anchor=north,text height=height("$b$")] at (l2) {$c$};
      \node [anchor=west] at (r) {$r$};
      \node [anchor=west,yshift=2pt] at (pp02t) {$u$};
      \node [anchor=west,yshift=-2pt] at (pp02b) {$v$};
      \node [anchor=west,yshift=2pt] at (p02) {$z$};
    \end{tikzpicture}}%
  \hspace*{\stretch{1}}%
  \subcaptionbox{}{%
    \begin{tikzpicture}[
      node/.style={fill=black,circle,inner sep=0pt,minimum size=3pt,outer sep=0pt},
      dup/.style={fill=red,circle,inner sep=0pt,minimum size=6pt},
      wide node/.style={fill=black,rectangle,inner sep=0pt,minimum height=3pt,rounded corners=1.5pt},
      edge/.style={draw,thick},
      x=0.7cm,y=0.7cm]
      \foreach \i in {0,...,5} {
        \node [node] (l\i) at (\i,0) {};
      }
      \path (l0) +(60:1) node [node] (p01) {};
      \path (l0) +(60:2) node [node] (p02) {};
      \path (l3) +(60:2) node [node] (p35) {};
      \path (l4) +(60:1) node [node] (p45) {};
      \path (p02) +(90:1) node [node] (rl) {};
      \path (p35) +(90:1) node [node] (rr) {};
      \path [edge] (l0) -- (p01) -- (l1) (p01) -- (p02) -- (l2)
      (l4) -- (p45) -- (l5) (l3) -- (p35) -- (p45)
      (p02) -- (rl) (p35) -- (rr);
      \node [anchor=north,text height=height("$b$")] at (l0) {$a$};
      \node [anchor=north,text height=height("$b$")] at (l1) {$a$};
      \node [anchor=north,text height=height("$b$")] at (l2) {$b$};
      \node [anchor=north,text height=height("$b$")] at (l3) {$a$};
      \node [anchor=north,text height=height("$b$")] at (l4) {$b$};
      \node [anchor=north,text height=height("$b$")] at (l5) {$c$};
      \node [anchor=west,yshift=2pt] at (p02) {$y_l$};
      \node [anchor=west,yshift=2pt] at (p35) {$y_r$};
      \node [anchor=west] at (rl) {$r_T$};
      \node [anchor=west] at (rr) {$r_T$};
    \end{tikzpicture}}%
  \hspace*{\stretch{1}}%
  \subcaptionbox{}{%
    \begin{tikzpicture}[
      node/.style={fill=black,circle,inner sep=0pt,minimum size=3pt,outer sep=0pt},
      dup/.style={fill=red,circle,inner sep=0pt,minimum size=6pt},
      wide node/.style={fill=black,rectangle,inner sep=0pt,minimum height=3pt,rounded corners=1.5pt},
      edge/.style={draw,thick},
      x=0.7cm,y=0.7cm]
      \foreach \i/\x in {0/0,1/1.5,2/3.0} {
        \node [node] (l\i) at (\x,0) {};
      }
      \path (l0) +(60:1.5) coordinate (p0);
      \path (l0) +(60:3.0) node [node] (p02) {};
      \path (l1) +(60:1.5) node [node] (p12) {};
      \path (p0) +(240:0.5) node [node] (p0b) {};
      \path (p0) +(60:0.5) node [node] (p0t) {};
      \path (p02) +(90:1) node [node] (realr) {};
      \coordinate (r) at (p02 |- r);
      \coordinate (pp02) at (barycentric cs:r=0.5,p02=0.5);
      \path (pp02) +(270:0.5) coordinate (pp02b);
      \path (pp02) +(90:0.5) coordinate (pp02t);
      \path [name path=path1] (pp02t) -- +(225:1);
      \path [name path=path2] (pp02b) -- +(135:1);
      \path [name intersections={of=path1 and path2}] coordinate (beadc) at (intersection-1);
      \path [edge] (l1) -- (p12) -- (l2) (l0) -- (p0b) (p0t) -- (p02) -- (p12) (p02) -- (realr);
      \path [edge] let \p1 = ($(pp02t) - (beadc)$), \n1 = {veclen(\x1,\y1)} in
      (p0t) arc (105:195:\n1) arc (-75:15:\n1) -- cycle;
      \node [anchor=north,text height=height("$b$")] at (l0) {$a$};
      \node [anchor=north,text height=height("$b$")] at (l1) {$b$};
      \node [anchor=north,text height=height("$b$")] at (l2) {$c$};
      \node [anchor=west] at (realr) {$r$};
      \node [anchor=west,yshift=2pt] at (p02) {$z$};
    \end{tikzpicture}}%
  \hspace*{\stretch{1}}%
  \caption{Example of Lemma \ref{lem:topBead}.  In all figures, lowercase
    letters on leaves represent labels from the set $X$.  The other labels are
    as described in Lemma \ref{lem:topBead}.  (a) A MUL-tree $T$ on $X =
    \{a,b,c\}$.  (b) A beaded tree $B$ weakly displaying $T$.  (c) The two
    MUL-trees $T_l$ and $T_r$ in $F_1(T)$.  (d) The beaded tree $B'$ derived
    from $B$ by suppressing the nodes of the top bead, which weakly displays
    the trees in~$F_1(T)$.}
  \label{fig:manyFigures}
\end{figure}
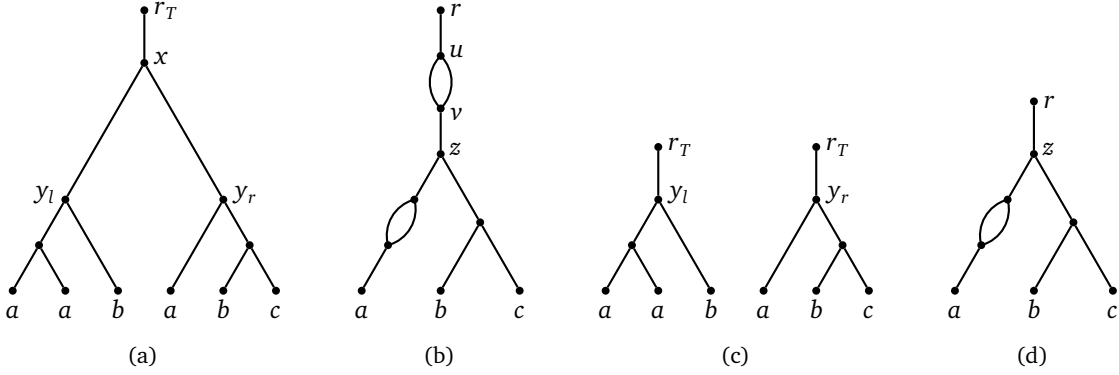

In the same way that Lemma~\ref{lem:BeadedSolution} establishes that
\textsc{Parental Hybridization} always has an optimal solution that is a beaded
tree, the following lemma shows that there always exists an optimal solution to
\textsc{Beaded Tree} of an even more restrictive structure.

\begin{lemma}\label{lem:allOneLineage}
  Every instance $\mathcal{T}$ of \textsc{Beaded Tree} has an optimal solution
  $B$ such that all reticulations are on the same path.
\end{lemma}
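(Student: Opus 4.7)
I would prove the lemma by induction on $k = \text{opt}(\mathcal{T})$. The base case $k = 0$ is immediate: an optimal solution is a tree with no reticulations, so the claim about the placement of beads is vacuous.

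For the inductive step with $k \geq 1$, I would first observe that $\text{opt}(F_1(\mathcal{T})) \leq k$, since any optimal solution $B^*$ for $\mathcal{T}$ also weakly displays every tree in $F_1(\mathcal{T})$. Split on whether this inequality is strict. If $\text{opt}(F_1(\mathcal{T})) = k-1$, then by Lemma~\ref{lem:topBead} there is an optimal solution for $\mathcal{T}$ with a bead at the root; applying the induction hypothesis to $F_1(\mathcal{T})$ (whose optimum is strictly less than $k$) gives an optimal beaded tree $B'$ for $F_1(\mathcal{T})$ in which all $k-1$ reticulations lie on a single directed path, and the forward direction of Lemma~\ref{lem:topBead} builds from $B'$ an optimal $B$ for $\mathcal{T}$ by attaching a new bead at the root. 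Since the new root bead is an ancestor of every node of $B'$, all $k$ reticulations of $B$ lie on the single directed path that runs from the root through the new bead and then continues along the existing path in $B'$.

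The remaining case is $\text{opt}(F_1(\mathcal{T})) = k$, where no optimal solution has a bead at the root. Here I would take an arbitrary optimal $B^*$ and produce a rearranged optimal $B$ by a structural surgery. The plan is to identify the topmost tree node $v$ in $B^*$ such that every bead lies in the subtree rooted at $v$, and then reorganize the subtree at $v$ so that all $k$ beads become totally ordered by ancestry, forming a single ancestral chain (possibly interleaved with tree branches that keep all the necessary leaf labels reachable). To justify that $B$ still weakly displays every $T \in \mathcal{T}$, I would convert each weak embedding $h$ of $T$ into $B^*$ to an embedding $h'$ into $B$ by keeping nodes mapped above $v$ fixed, and lifting branching tree nodes of $T$ whose two children in $B^*$ went into incomparable bead-containing subtrees below $v$ up to suitable bead tops in the chain, so that those two children can use the two parallel edges of the bead instead of the now-absent tree branching. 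The main obstacle is precisely the verification in this case: one must allocate bead tops of the chain to such branching nodes consistently across every tree in $\mathcal{T}$, respecting ancestry in each $T$ and producing valid different out-edge choices for every branching node, with only $k$ beads available to serve as branching points. Showing that this lift can always be done without conflict is the heart of the proof.
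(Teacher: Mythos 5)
Your first case ($\opt(F_1(\mathcal{T})) = k-1$) is sound: Lemma~\ref{lem:topBead} gives an optimal solution with a bead at the root, the induction hypothesis applies to $F_1(\mathcal{T})$, and the root bead is an ancestor of every other bead, so all $k$ reticulations are totally ordered. The problem is that the entire content of the lemma sits in your second case, and there you only outline a plan and explicitly concede that the key verification is missing. That verification is not a routine detail, and the one-shot global surgery you describe would fail as stated: if a node $x$ of some $T$ is mapped to a tree node $b$ whose two children lead into incomparable bead-containing subtrees, then after you serialize those subtrees into a single chain, $b$ retains only one relevant out-edge and $x$ has nowhere to branch. The missing idea is that the rearrangement must \emph{create} a new bead at $b$ itself (compensated by removing one elsewhere so the count stays at $k$), so that $x$ can branch along the two parallel edges of that new bead. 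Separately, your stated obstacle of allocating bead tops ``consistently across every tree in $\mathcal{T}$'' is not the real difficulty: weak embeddings of distinct input trees are chosen independently, so the verification is per tree; the work is in showing, for a single embedding, that every branching node still has two distinct out-edges available after the move.

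The paper avoids the global reorganization entirely. It chooses, among optimal solutions, one maximizing the potential $\lambda(B) = \sum_z \lambda_B(z)$, where $\lambda_B(z)$ counts reticulations strictly descended from the reticulation $z$. If two beads are incomparable, there is a tree node $b$ (not in a bead) whose two children are ancestors of \emph{earliest} beads $(u_L,v_L)$ and $(u_R,v_R)$; a single local exchange replaces these two beads by a bead at $b$ and a bead at the merge point of the two former subtrees, keeping the reticulation number fixed, and a case analysis on where $h(x)$ lands shows every weak embedding transfers. The new bead at $b$ is exactly what rescues the branching nodes mapped to $b$. Since the move strictly increases $\lambda$, the extremal $B$ has all beads on one path. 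To close your gap you would need essentially this local exchange argument anyway, at which point the induction on $k$ becomes superfluous except as packaging for the root-bead case.
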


\begin{proof}
  Consider an optimal solution $B$ for $\mathcal{T}$.  For each reticulation
  node $z \in B$, let $\lambda_B(z)$ be the number of reticulation nodes
  strictly descended from $z$.  Let $\lambda(B)$ be the sum of
  $\lambda_B(z)$ over all reticulation nodes $z$ in $B$.  Choose $B$ such that
  $\lambda(B)$ is maximized.  Since all optimal solutions for $\mathcal{T}$
  have the same number $b$ of beads and $\lambda(B') \le \binom{b-1}{2}$ for
  any beaded tree $B'$ with $b$ beads, an optimal solution $B$ for
  $\mathcal{T}$ that maximizes $\lambda(B)$ exists.

  If all reticulations in $B$ are on the same path, the lemma holds.  So assume
  that not all reticulations are on the same path; see
  Figure~\ref{fig:OneLineageLocal}.  Then there is some tree node $b$ in $B$
  that is not in a bead and such that both children of $b$ are ancestors of a
  bead.  Let $(u_L,v_L$) be an earliest bead descended from one child of $b$,
  and $(u_R,v_R)$ an earliest bead descended from the other child of $b$.  If
  $u_L$ is not a child of $b$, then let $c_1, \ldots c_l$ be the nodes on the
  path from $b$ to $u_L$.  Similarly, if $u_R$ is not a child of $b$, then let
  $d_1, \ldots d_r$ be the nodes on the path from $b$ to $u_R$.  Note that $c_1,
  \ldots c_l$ and $d_1, \ldots d_r$ are all tree nodes.  Finally let $w_L$ be the
  single child of $v_L$, and $w_R$ the single child of $v_R$.

  Construct a new beaded tree $B'$ from $B$ as follows: Delete the nodes
  $u_L,v_L,u_R,v_R$ and any edges incident to them, as well as the edges $bc_1$
  and $bd_1$.  Now add new nodes $q,u,v,w$ and add a pair of parallel arcs from
  $b$ to $q$ and from $u$ to $v$, as well as arcs $qc_1,c_ld_1, d_ru, vw,
  ww_L$, and $ww_R$.  (Note that this construction assumes that $l,r \geq 1$;
  if this is not the case, then we may produce $B'$ by introducing ``dummy
  nodes'' $c_1$ and $d_1$ and suppressing them after the construction is
  complete.) Observe that this construction ensures that every node $u' \in V'
  = V(B) \setminus \{u_L, v_L, u_R, v_R\}$ is an ancestor of a node $v' \in V'$
  in $B'$ if this is the case in $B$, that every node $v' \in V'$ that is a
  descendant of $u_L$ or $u_R$ in $B$ is a descendant of $u$ in $B'$, and that
  every node $v' \in V'$ that is an ancestor of $u_L$ or $u_R$ in $B$ is an
  ancestor of $u$ in $B'$.

  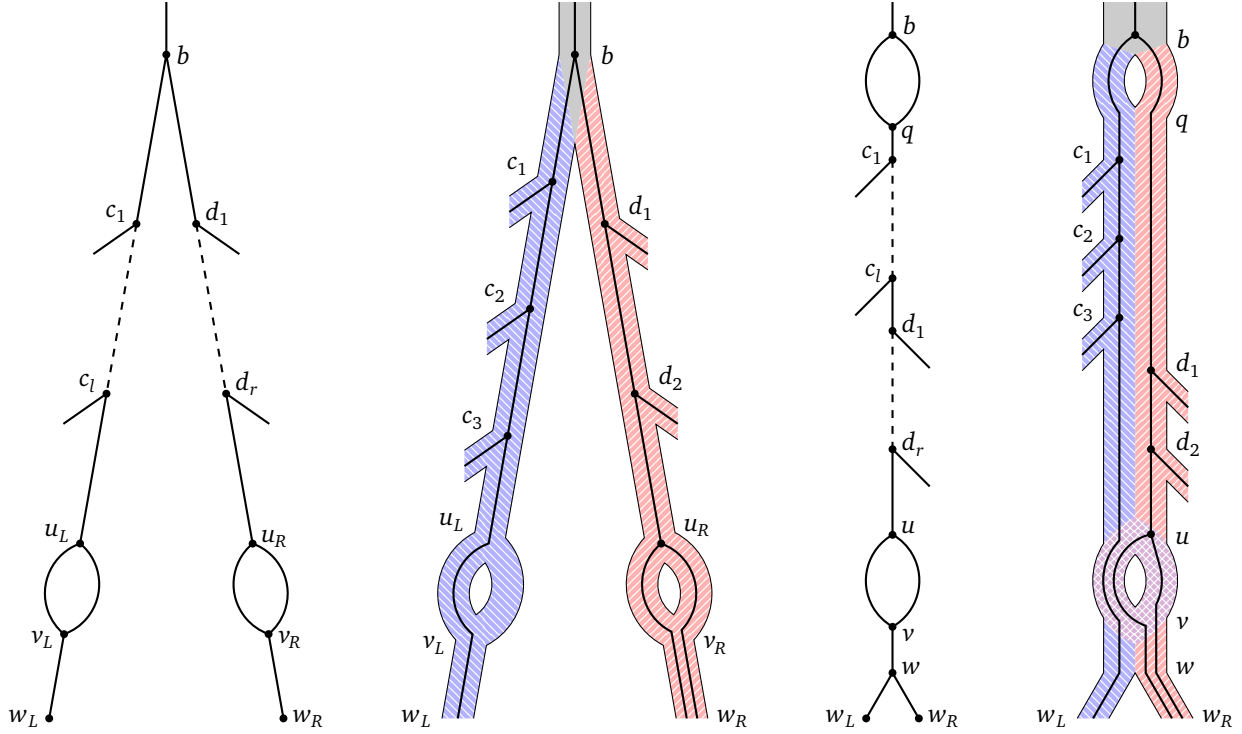
\begin{figure}[t]
    \begin{tikzpicture}[
      node/.style={fill=black,circle,inner sep=0pt,minimum size=3pt,outer sep=0pt},
      dup/.style={fill=red,circle,inner sep=0pt,minimum size=6pt},
      wide node/.style={fill=black,rectangle,inner sep=0pt,minimum height=3pt,rounded corners=1.5pt},
      edge/.style={draw,thick},
      x=0.7cm,y=0.7cm]
      \begin{scope}[overlay]
        \path let \n1 = {(12.75+sin(60))*0.7cm} in coordinate (r) at (0,\n1);
        \coordinate (l1) at (0,0);
        \path [name path=path1] (r) ++(270:1) -- +(180:5);
        \path [name path=path2] (l1) -- +(100:15);
        \path [name intersections={of=path1 and path2}] coordinate (b) at (intersection-1);
        \path (b) +(90:1) coordinate (r);
        \path [name path=path1] (b) -- +(260:15);
        \path [name path=path2] (l1) -- +(180:7);
        \path [name intersections={of=path1 and path2}] coordinate (l0) at (intersection-1);
        \path (l1) +(100:2) coordinate (vr) +(100:3) coordinate (ur);
        \path (l0) +(80:2) coordinate (vl) +(80:3) coordinate (ul);
        \coordinate (c1) at (barycentric cs:b=0.667,ul=0.333);
        \coordinate (c2) at (barycentric cs:b=0.333,ul=0.667);
        \coordinate (d1) at (barycentric cs:b=0.667,ur=0.333);
        \coordinate (d2) at (barycentric cs:b=0.333,ur=0.667);
        \path (c1) +(215:1) coordinate (cc1);
        \path (c2) +(215:1) coordinate (cc2);
        \path (d1) +(-35:1) coordinate (cd1);
        \path (d2) +(-35:1) coordinate (cd2);
        \path [name path=path1] (ul) -- +(215:1.25);
        \path [name path=path2] (vl) -- +(125:1.25);
        \path [name intersections={of=path1 and path2}] coordinate (lbeadcl) at (intersection-1);
        \path [name path=path1] (ul) -- +(-55:1.25);
        \path [name path=path2] (vl) -- +(35:1.25);
        \path [name intersections={of=path1 and path2}] coordinate (lbeadcr) at (intersection-1);
        \path [name path=path1] (ur) -- +(235:1.25);
        \path [name path=path2] (vr) -- +(145:1.25);
        \path [name intersections={of=path1 and path2}] coordinate (rbeadcl) at (intersection-1);
        \path [name path=path1] (ur) -- +(-35:1.25);
        \path [name path=path2] (vr) -- +(55:1.25);
        \path [name intersections={of=path1 and path2}] coordinate (rbeadcr) at (intersection-1);
        \path [name path=path1]
        let \p1 = ($(ul) - (lbeadcr)$), \n1 = {0.21cm+veclen(\x1,\y1)} in
        (lbeadcr) circle (\n1);
        \path [name path=path2] (l0) -- (b);
        \path [name intersections={of=path1 and path2}] coordinate (ulm) at (intersection-1)
        coordinate (vlm) at (intersection-2);
        \path [name path=path1]
        let \p1 = ($(ur) - (rbeadcr)$), \n1 = {0.21cm+veclen(\x1,\y1)} in
        (rbeadcr) circle (\n1);
        \path [name path=path2] (b) -- (l1);
        \path [name intersections={of=path1 and path2}] coordinate (urm) at (intersection-1)
        coordinate (vrm) at (intersection-2);
      \end{scope}
      \path [edge] (r) -- (b);
      \path [edge] (cc1) -- (c1) -- (b) -- (d1) -- (cd1);
      \path [edge] let \p1 = ($(ulm) - (lbeadcr)$), \p2 = ($(vlm) - (lbeadcr)$),
      \n1 = {atan2(\y1,\x1)}, \n2 = {360+atan2(\y2,\x2)}, \n3 = {veclen(\x1,\y1)} in
      (cc2) -- (c2) -- (ulm) arc (\n1:\n2:\n3) -- (l0);
      \path [edge] let \p1 = ($(ulm) - (lbeadcl)$), \p2 = ($(vlm) - (lbeadcl)$),
      \n1 = {atan2(\y1,\x1)}, \n2 = {atan2(\y2,\x2)}, \n3 = {veclen(\x1,\y1)} in
      (ulm) arc (\n1:\n2:\n3);
      \path [edge] (cd2) -- (d2) -- (urm);
      \path [edge] let \p1 = ($(urm) - (rbeadcr)$), \p2 = ($(vrm) - (rbeadcr)$),
      \n1 = {atan2(\y1,\x1)}, \n2 = {360+atan2(\y2,\x2)}, \n3 = {veclen(\x1,\y1)} in
      (urm) arc (\n1:\n2:\n3) -- (l1);
      \path [edge] let \p1 = ($(urm) - (rbeadcl)$), \p2 = ($(vrm) - (rbeadcl)$),
      \n1 = {atan2(\y1,\x1)}, \n2 = {atan2(\y2,\x2)}, \n3 = {veclen(\x1,\y1)} in
      (urm) arc (\n1:\n2:\n3);
      \path [edge,dashed] (c1) -- (c2) (d1) -- (d2);
      \node [node] at (b) {};
      \node [node] at (c1) {};
      \node [node] at (c2) {};
      \node [node] at (d1) {};
      \node [node] at (d2) {};
      \node [node] at (ulm) {};
      \node [node] at (vlm) {};
      \node [node] at (urm) {};
      \node [node] at (vrm) {};
      \node [node] at (l0) {};
      \node [node] at (l1) {};
      \node [anchor=east] at (l0) {$w_L$};
      \node [anchor=west] at (l1) {$w_R$};
      \node [anchor=west,yshift=2pt] at (urm) {$u_R$};
      \node [anchor=west,yshift=-2pt] at (vrm) {$v_R$};
      \node [anchor=west] at (b) {$b$};
      \node [anchor=west,yshift=4pt] at (d1) {$d_1$};
      \node [anchor=west,yshift=4pt] at (d2) {$d_r$};
      \node [anchor=east,yshift=4pt] at (ulm) {$u_L$};
      \node [anchor=east,yshift=-2pt] at (vlm) {$v_L$};
      \node [anchor=east,yshift=4pt] at (c1) {$c_1$};
      \node [anchor=east,yshift=4pt] at (c2) {$c_l$};
    \end{tikzpicture}%
    \hspace*{\stretch{1}}%
    \begin{tikzpicture}[
      node/.style={fill=black,circle,inner sep=0pt,minimum size=3pt,outer sep=0pt},
      dup/.style={fill=red,circle,inner sep=0pt,minimum size=6pt},
      wide node/.style={fill=black,rectangle,inner sep=0pt,minimum height=3pt,rounded corners=1.5pt},
      edge/.style={draw,thick},
      x=0.7cm,y=0.7cm]
      \begin{scope}[overlay]
        \path let \n1 = {(12.75+sin(60))*0.7cm} in coordinate (r) at (0,\n1);
        \coordinate (l1) at (0,0);
        \path [name path=path1] (r) ++(270:1) -- +(180:5);
        \path [name path=path2] (l1) -- +(100:15);
        \path [name intersections={of=path1 and path2}] coordinate (b) at (intersection-1);
        \path (b) +(90:1) coordinate (r);
        \path [name path=path1] (b) -- +(260:15);
        \path [name path=path2] (l1) -- +(180:7);
        \path [name intersections={of=path1 and path2}] coordinate (l0) at (intersection-1);
        \path (l1) +(100:2) coordinate (vr) +(100:3) coordinate (ur);
        \path (l0) +(80:2) coordinate (vl) +(80:3) coordinate (ul);
        \coordinate (c1) at (barycentric cs:b=0.75,ul=0.25);
        \coordinate (c2) at (barycentric cs:b=0.5,ul=0.5);
        \coordinate (c3) at (barycentric cs:b=0.25,ul=0.75);
        \coordinate (d1) at (barycentric cs:b=0.667,ur=0.333);
        \coordinate (d2) at (barycentric cs:b=0.333,ur=0.667);
        \path (c1) +(215:1) coordinate (cc1);
        \path (c2) +(215:1) coordinate (cc2);
        \path (c3) +(215:1) coordinate (cc3);
        \path (d1) +(-35:1) coordinate (cd1);
        \path (d2) +(-35:1) coordinate (cd2);
        \foreach \v in {l0,l1,r,b} {
          \path (\v) +(180:0.1) coordinate (\v 1) +(180:0.3) coordinate (\v l)
          +(0:0.1) coordinate (\v 2) +(0:0.3) coordinate (\v r);
        }
        \foreach \v in {cc1,cc2,cc3,cd1,cd2} {
          \path (\v) +(270:0.3) coordinate (\v b) +(90:0.3) coordinate (\v t);
        }
        \path [name path=path1] (ul) -- +(215:1.25);
        \path [name path=path2] (vl) -- +(125:1.25);
        \path [name intersections={of=path1 and path2}] coordinate (lbeadcl) at (intersection-1);
        \path [name path=path1] (ul) -- +(-55:1.25);
        \path [name path=path2] (vl) -- +(35:1.25);
        \path [name intersections={of=path1 and path2}] coordinate (lbeadcr) at (intersection-1);
        \path [name path=path1] (ur) -- +(235:1.25);
        \path [name path=path2] (vr) -- +(145:1.25);
        \path [name intersections={of=path1 and path2}] coordinate (rbeadcl) at (intersection-1);
        \path [name path=path1] (ur) -- +(-35:1.25);
        \path [name path=path2] (vr) -- +(55:1.25);
        \path [name intersections={of=path1 and path2}] coordinate (rbeadcr) at (intersection-1);
        \path [name path=path1] (l0l) -- (bl);
        \path [name path=path2]
        let \p1 = ($(ul) - (lbeadcr)$), \n1 = {0.42cm+veclen(\x1,\y1)} in
        (lbeadcr) circle (\n1);
        \path [name intersections={of=path1 and path2}] coordinate (ull) at (intersection-1)
        coordinate (vll) at (intersection-2);
        \path [name path=path2] (cc3b) -- +(35:1);
        \path [name intersections={of=path1 and path2}] coordinate (c3b) at (intersection-1);
        \path [name path=path2] (cc3t) -- +(35:1);
        \path [name intersections={of=path1 and path2}] coordinate (c3t) at (intersection-1);
        \path [name path=path2] (cc2b) -- +(35:1);
        \path [name intersections={of=path1 and path2}] coordinate (c2b) at (intersection-1);
        \path [name path=path2] (cc2t) -- +(35:1);
        \path [name intersections={of=path1 and path2}] coordinate (c2t) at (intersection-1);
        \path [name path=path2] (cc1b) -- +(35:1);
        \path [name intersections={of=path1 and path2}] coordinate (c1b) at (intersection-1);
        \path [name path=path2] (cc1t) -- +(35:1);
        \path [name intersections={of=path1 and path2}] coordinate (c1t) at (intersection-1);
        \path [name path=path1] (l0r) -- (br);
        \path [name path=path2]
        let \p1 = ($(ul) - (lbeadcl)$), \n1 = {0.42cm+veclen(\x1,\y1)} in
        (lbeadcl) circle (\n1);
        \path [name intersections={of=path1 and path2}] coordinate (ulr) at (intersection-1)
        coordinate (vlr) at (intersection-2);
        \path [name path=path2] (l1l) -- (bl);
        \path [name intersections={of=path1 and path2}] coordinate (bb) at (intersection-1);
        \path [name path=path1]
        let \p1 = ($(ur) - (rbeadcr)$), \n1 = {0.42cm+veclen(\x1,\y1)} in
        (rbeadcr) circle (\n1);
        \path [name intersections={of=path1 and path2}] coordinate (url) at (intersection-1)
        coordinate (vrl) at (intersection-2);
        \path [name path=path1] (l1r) -- (br);
        \path [name path=path2] (cd1t) -- +(145:1);
        \path [name intersections={of=path1 and path2}] coordinate (d1t) at (intersection-1);
        \path [name path=path2] (cd1b) -- +(145:1);
        \path [name intersections={of=path1 and path2}] coordinate (d1b) at (intersection-1);
        \path [name path=path2] (cd2t) -- +(145:1);
        \path [name intersections={of=path1 and path2}] coordinate (d2t) at (intersection-1);
        \path [name path=path2] (cd2b) -- +(145:1);
        \path [name intersections={of=path1 and path2}] coordinate (d2b) at (intersection-1);
        \path [name path=path2]
        let \p1 = ($(ur) - (rbeadcl)$), \n1 = {0.42cm+veclen(\x1,\y1)} in
        (rbeadcl) circle (\n1);
        \path [name intersections={of=path1 and path2}] coordinate (urr) at (intersection-1)
        coordinate (vrr) at (intersection-2);
        \path [name path=path1]
        let \p1 = ($(ul) - (lbeadcr)$), \n1 = {0.21cm+veclen(\x1,\y1)} in
        (lbeadcr) circle (\n1);
        \path [name path=path2] (l0) -- (b);
        \path [name intersections={of=path1 and path2}] coordinate (ulm) at (intersection-1)
        coordinate (vlm) at (intersection-2);
        \path [name path=path1]
        let \p1 = ($(ur) - (rbeadcr)$), \n1 = {0.21cm+veclen(\x1,\y1)} in
        (rbeadcr) circle (\n1);
        \path [name path=path2]
        (d2) -- (ur);
        \path [name intersections={of=path1 and path2}] coordinate (urm) at (intersection-1);
        \path [name path=path2]
        (l11) -- +(100:2);
        \path [name intersections={of=path1 and path2}] coordinate (vr1) at (intersection-1);
        \path [name path=path1]
        let \p1 = ($(ur) - (rbeadcl)$), \n1 = {0.21cm+veclen(\x1,\y1)} in
        (rbeadcl) circle (\n1);
        \path [name path=path2]
        (l12) -- +(100:2);
        \path [name intersections={of=path1 and path2}] coordinate (vr2) at (intersection-1);
      \end{scope}
      \path [fill=black!20] (rl) -- (bl) -- (bb) -- (br) -- (rr) -- cycle;
      \path [fill=blue!30]
      let \p1 = ($(ull) - (lbeadcr)$), \p2 = ($(vll) - (lbeadcr)$),
      \n1 = {atan2(\y1,\x1)}, \n2 = {360+atan2(\y2,\x2)}, \n3 = {veclen(\x1,\y1)},
      \p4 = ($(vlr) - (lbeadcl)$), \p5 = ($(ulr) - (lbeadcl)$),
      \n4 = {atan2(\y4,\x4)}, \n5 = {atan2(\y5,\x5)}, \n6 = {veclen(\x4,\y4)} in
      (bl) -- (c1t) -- (cc1t) -- (cc1b) -- (c1b) -- (c2t) -- (cc2t)
      -- (cc2b) -- (c2b) -- (c3t) -- (cc3t) -- (cc3b) -- (c3b) -- (ull) arc (\n1:\n2:\n3) -- (l0l)
      -- (l0r) -- (vlr) arc (\n4:\n5:\n6) -- (bb) -- cycle;
      \path [pattern=north west lines,pattern color=white]
      let \p1 = ($(ull) - (lbeadcr)$), \p2 = ($(vll) - (lbeadcr)$),
      \n1 = {atan2(\y1,\x1)}, \n2 = {360+atan2(\y2,\x2)}, \n3 = {veclen(\x1,\y1)},
      \p4 = ($(vlr) - (lbeadcl)$), \p5 = ($(ulr) - (lbeadcl)$),
      \n4 = {atan2(\y4,\x4)}, \n5 = {atan2(\y5,\x5)}, \n6 = {veclen(\x4,\y4)} in
      (bl) -- (c1t) -- (cc1t) -- (cc1b) -- (c1b) -- (c2t) -- (cc2t)
      -- (cc2b) -- (c2b) -- (c3t) -- (cc3t) -- (cc3b) -- (c3b) -- (ull) arc (\n1:\n2:\n3) -- (l0l)
      -- (l0r) -- (vlr) arc (\n4:\n5:\n6) -- (bb) -- cycle;
      \path [fill=red!30]
      let \p1 = ($(url) - (rbeadcr)$), \p2 = ($(vrl) - (rbeadcr)$),
      \n1 = {atan2(\y1,\x1)}, \n2 = {360+atan2(\y2,\x2)}, \n3 = {veclen(\x1,\y1)},
      \p4 = ($(vrr) - (rbeadcl)$), \p5 = ($(urr) - (rbeadcl)$),
      \n4 = {atan2(\y4,\x4)}, \n5 = {atan2(\y5,\x5)}, \n6 = {veclen(\x4,\y4)} in
      (bb) -- (url) arc (\n1:\n2:\n3) -- (l1l) -- (l1r) -- (vrr) arc (\n4:\n5:\n6) -- (d2b)
      -- (cd2b) -- (cd2t) -- (d2t) -- (d1b) -- (cd1b) -- (cd1t) -- (d1t) -- (br) -- cycle;
      \path [pattern=north east lines,pattern color=white]
      let \p1 = ($(url) - (rbeadcr)$), \p2 = ($(vrl) - (rbeadcr)$),
      \n1 = {atan2(\y1,\x1)}, \n2 = {360+atan2(\y2,\x2)}, \n3 = {veclen(\x1,\y1)},
      \p4 = ($(vrr) - (rbeadcl)$), \p5 = ($(urr) - (rbeadcl)$),
      \n4 = {atan2(\y4,\x4)}, \n5 = {atan2(\y5,\x5)}, \n6 = {veclen(\x4,\y4)} in
      (bb) -- (url) arc (\n1:\n2:\n3) -- (l1l) -- (l1r) -- (vrr) arc (\n4:\n5:\n6) -- (d2b)
      -- (cd2b) -- (cd2t) -- (d2t) -- (d1b) -- (cd1b) -- (cd1t) -- (d1t) -- (br) -- cycle;
      \path [draw,fill=white] let \p1 = ($(ul) - (lbeadcr)$), \p2 = ($(vl) - (lbeadcr)$),
      \p3 = ($(vl) - (lbeadcl)$), \p4 = ($(ul) - (lbeadcl)$),
      \n1 = {atan2(\y1,\x1)}, \n2 = {360+atan2(\y2,\x2)}, \n3 = {veclen(\x1,\y1)},
      \n4 = {atan2(\y3,\x3)}, \n5 = {atan2(\y4,\x4)} in
      (ul) arc (\n1:\n2:\n3) arc (\n4:\n5:\n3) -- cycle;
      \path [draw,fill=white] let \p1 = ($(ur) - (rbeadcr)$), \p2 = ($(vr) - (rbeadcr)$),
      \p3 = ($(vr) - (rbeadcl)$), \p4 = ($(ur) - (rbeadcl)$),
      \n1 = {atan2(\y1,\x1)}, \n2 = {360+atan2(\y2,\x2)}, \n3 = {veclen(\x1,\y1)},
      \n4 = {atan2(\y3,\x3)}, \n5 = {atan2(\y4,\x4)} in
      (ur) arc (\n1:\n2:\n3) arc (\n4:\n5:\n3) -- cycle;
      \path [draw] (rl) -- (bl) -- (c1t) -- (cc1t);
      \path [draw] (cc1b) -- (c1b) -- (c2t) -- (cc2t);
      \path [draw] (cc2b) -- (c2b) -- (c3t) -- (cc3t);
      \path [draw] let \p1 = ($(ull) - (lbeadcr)$), \p2 = ($(vll) - (lbeadcr)$),
      \n1 = {atan2(\y1,\x1)}, \n2 = {360+atan2(\y2,\x2)}, \n3 = {veclen(\x1,\y1)} in
      (cc3b) -- (c3b) -- (ull) arc (\n1:\n2:\n3) -- (l0l);
      \path [draw] let \p1 = ($(vlr) - (lbeadcl)$), \p2 = ($(ulr) - (lbeadcl)$),
      \n1 = {atan2(\y1,\x1)}, \n2 = {atan2(\y2,\x2)}, \n3 = {veclen(\x1,\y1)},
      \p4 = ($(url) - (rbeadcr)$), \p5 = ($(vrl) - (rbeadcr)$),
      \n4 = {atan2(\y4,\x4)}, \n5 = {360+atan2(\y5,\x5)}, \n6 = {veclen(\x4,\y4)} in
      (l0r) -- (vlr) arc (\n1:\n2:\n3) -- (bb) -- (url) arc (\n4:\n5:\n6) -- (l1l);
      \path [draw] (rr) -- (br) -- (d1t) -- (cd1t);
      \path [draw] (cd1b) -- (d1b) -- (d2t) -- (cd2t);
      \path [draw] let \p1 = ($(urr) - (rbeadcl)$), \p2 = ($(vrr) - (rbeadcl)$),
      \n1 = {atan2(\y1,\x1)}, \n2 = {atan2(\y2,\x2)}, \n3 = {veclen(\x1,\y1)} in
      (cd2b) -- (d2b) -- (urr) arc (\n1:\n2:\n3) -- (l1r);
      \path [edge] (r) -- (b);
      \path [edge] (c1) -- (b) -- (d1);
      \path [edge] (cc1) -- (c1) -- (c2);
      \path [edge] (cc2) -- (c2) -- (c3);
      \path [edge] let \p1 = ($(ulm) - (lbeadcr)$), \p2 = ($(vlm) - (lbeadcr)$),
      \n1 = {atan2(\y1,\x1)}, \n2 = {360+atan2(\y2,\x2)}, \n3 = {veclen(\x1,\y1)} in
      (cc3) -- (c3) -- (ulm) arc (\n1:\n2:\n3) -- (l0);
      \path [edge] (cd1) -- (d1) -- (d2);
      \path [edge] (cd2) -- (d2) -- (urm);
      \path [edge] let \p1 = ($(urm) - (rbeadcr)$), \p2 = ($(vr1) - (rbeadcr)$),
      \n1 = {atan2(\y1,\x1)}, \n2 = {360+atan2(\y2,\x2)}, \n3 = {veclen(\x1,\y1)} in
      (urm) arc (\n1:\n2:\n3) -- (l11);
      \path [edge] let \p1 = ($(urm) - (rbeadcl)$), \p2 = ($(vr2) - (rbeadcl)$),
      \n1 = {atan2(\y1,\x1)}, \n2 = {atan2(\y2,\x2)}, \n3 = {veclen(\x1,\y1)} in
      (urm) arc (\n1:\n2:\n3) -- (l12);
      \node [node] at (b) {};
      \node [node] at (c1) {};
      \node [node] at (c2) {};
      \node [node] at (c3) {};
      \node [node] at (d1) {};
      \node [node] at (d2) {};
      \node [node] at (urm) {};
      \node [anchor=east] at (l0l) {$w_L$};
      \node [anchor=west] at (l1r) {$w_R$};
      \node [anchor=west,yshift=2pt] at (urr) {$u_R$};
      \node [anchor=west,yshift=-2pt] at (vrr) {$v_R$};
      \node [anchor=west] at (br) {$b$};
      \node [anchor=west,yshift=4pt] at (d1t) {$d_1$};
      \node [anchor=west,yshift=4pt] at (d2t) {$d_2$};
      \node [anchor=east,yshift=4pt] at (ull) {$u_L$};
      \node [anchor=east,yshift=-2pt] at (vll) {$v_L$};
      \node [anchor=east,yshift=4pt] at (c1t) {$c_1$};
      \node [anchor=east,yshift=4pt] at (c2t) {$c_2$};
      \node [anchor=east,yshift=4pt] at (c3t) {$c_3$};
    \end{tikzpicture}%
    \hspace*{\stretch{1}}%
    \begin{tikzpicture}[
      node/.style={fill=black,circle,inner sep=0pt,minimum size=3pt,outer sep=0pt},
      dup/.style={fill=red,circle,inner sep=0pt,minimum size=6pt},
      wide node/.style={fill=black,rectangle,inner sep=0pt,minimum height=3pt,rounded corners=1.5pt},
      edge/.style={draw,thick},
      x=0.7cm,y=0.7cm]
      \begin{scope}[overlay]
        \coordinate (l0) at (0,0);
        \coordinate (l1) at (1,0);
        \coordinate (w) at (60:1);
        \path (w) foreach \i/\v in {1.25/v,2.25/u,4.25/d2,6.5/d1,7.5/c2,9.75/c1,10.75/q,11.75/b,12.75/r} {
          +(90:\i) coordinate (\v)
        };
        \path (c1) +(225:1) coordinate (cc1);
        \path (c2) +(225:1) coordinate (cc2);
        \path (d1) +(315:1) coordinate (cd1);
        \path (d2) +(315:1) coordinate (cd2);
        \path [name path=path1] (v) -- +(135:1);
        \path [name path=path2] (u) -- +(225:1);
        \path [name intersections={of=path1 and path2}] coordinate (bbeadcl) at (intersection-1);
        \path [name path=path1] (v) -- +(45:1);
        \path [name path=path2] (u) -- +(-45:1);
        \path [name intersections={of=path1 and path2}] coordinate (bbeadcr) at (intersection-1);
        \path [name path=path1] (q) -- +(135:1);
        \path [name path=path2] (b) -- +(225:1);
        \path [name intersections={of=path1 and path2}] coordinate (tbeadcl) at (intersection-1);
        \path [name path=path1] (q) -- +(45:1);
        \path [name path=path2] (b) -- +(-45:1);
        \path [name intersections={of=path1 and path2}] coordinate (tbeadcr) at (intersection-1);
        \path [name path=path1] (w) -- (r);
        \path [name path=path2] let \p1 = ($(u) - (bbeadcl)$), \n1 = {0.21cm+veclen(\x1,\y1)} in
        (bbeadcl) circle (\n1);
        \path [name intersections={of=path1 and path2}] coordinate (um) at (intersection-1)
        coordinate (vm) at (intersection-2);
        \path [name path=path2] let \p1 = ($(b) - (tbeadcl)$), \n1 = {0.21cm+veclen(\x1,\y1)} in
        (tbeadcl) circle (\n1);
        \path [name intersections={of=path1 and path2}] coordinate (bm) at (intersection-1)
        coordinate (qm) at (intersection-2);
      \end{scope}
      \path [edge] (r) -- (bm) (qm) -- (c1) -- (cc1) (cc2) -- (c2) -- (d1) -- (cd1)
      (cd2) -- (d2) -- (um) (vm) -- (w) (l0) -- (w) -- (l1);
      \path [edge] let \p1 = ($(bm) - (tbeadcr)$), \p2 = ($(qm) - (tbeadcr)$),
      \p4 = ($(qm) - (tbeadcl)$), \p5 = ($(bm) - (tbeadcl)$),
      \n1 = {atan2(\y1,\x1)}, \n2 = {360+atan2(\y2,\x2)}, \n3 = {veclen(\x1,\y1)},
      \n4 = {atan2(\y4,\x4)}, \n5 = {atan2(\y5,\x5)}, \n6 = {veclen(\x4,\y4)} in
      (bm) arc (\n1:\n2:\n3) arc (\n4:\n5:\n6);
      \path [edge] let \p1 = ($(um) - (bbeadcr)$), \p2 = ($(vm) - (bbeadcr)$),
      \p4 = ($(vm) - (bbeadcl)$), \p5 = ($(um) - (bbeadcl)$),
      \n1 = {atan2(\y1,\x1)}, \n2 = {360+atan2(\y2,\x2)}, \n3 = {veclen(\x1,\y1)},
      \n4 = {atan2(\y4,\x4)}, \n5 = {atan2(\y5,\x5)}, \n6 = {veclen(\x4,\y4)} in
      (um) arc (\n1:\n2:\n3) arc (\n4:\n5:\n6);
      \path [edge,dashed] (c1) -- (c2) (d1) -- (d2);
      \node [node] at (c1) {};
      \node [node] at (c2) {};
      \node [node] at (d1) {};
      \node [node] at (d2) {};
      \node [node] at (bm) {};
      \node [node] at (qm) {};
      \node [node] at (um) {};
      \node [node] at (vm) {};
      \node [node] at (w)  {};
      \node [node] at (l0) {};
      \node [node] at (l1) {};
      \node [anchor=east] at (l0) {$w_L$};
      \node [anchor=west] at (l1) {$w_R$};
      \node [anchor=west,yshift=2pt] at (w) {$w$};
      \node [anchor=west,yshift=-3pt] at (vm) {$v$};
      \node [anchor=west,yshift=3pt] at (um) {$u$};
      \node [anchor=west,yshift=2pt] at (d2) {$d_r$};
      \node [anchor=west,yshift=2pt] at (d1) {$d_1$};
      \node [anchor=east,yshift=2pt] at (c2) {$c_l$};
      \node [anchor=east,yshift=2pt] at (c1) {$c_1$};
      \node [anchor=west,yshift=4pt] at (bm) {$b$};
      \node [anchor=west,yshift=-3pt] at (qm) {$q$};
    \end{tikzpicture}%
    \hspace*{\stretch{1}}%
    \begin{tikzpicture}[
      node/.style={fill=black,circle,inner sep=0pt,minimum size=3pt,outer sep=0pt},
      dup/.style={fill=red,circle,inner sep=0pt,minimum size=6pt},
      wide node/.style={fill=black,rectangle,inner sep=0pt,minimum height=3pt,rounded corners=1.5pt},
      edge/.style={draw,thick},
      x=0.7cm,y=0.7cm]
      \begin{scope}[overlay]
        \coordinate (l0) at (0,0);
        \coordinate (l1) at (1,0);
        \coordinate (w) at (60:1);
        \path (w) foreach \i/\v in {1.25/v,2.25/u,4.25/d2,5.75/d1,6.75/c3,8.25/c2,9.75/c1,10.75/q,11.75/b,12.75/r} {
          +(90:\i) coordinate (\v)
        };
        \foreach \v in {l0,l1,w,v,u,d2,d1,c3,c2,c1,q,b,r} {
          \path (\v) +(180:0.2) coordinate (\v 2) +(180:0.4) coordinate (\v 1) +(180:0.6) coordinate (\v l)
          +(0:0.2) coordinate (\v 3) +(0:0.4) coordinate (\v 4) +(0:0.6) coordinate (\v r);
        }
        \coordinate (l0m) at (barycentric cs:l01=0.5,l02=0.5);
        \coordinate (wm) at (barycentric cs:w1=0.5,w2=0.5);
        \coordinate (c1m) at (barycentric cs:c11=0.5,c12=0.5);
        \coordinate (c2m) at (barycentric cs:c21=0.5,c22=0.5);
        \coordinate (c3m) at (barycentric cs:c31=0.5,c32=0.5);
        \coordinate (d1m) at (barycentric cs:d13=0.5,d14=0.5);
        \coordinate (d2m) at (barycentric cs:d23=0.5,d24=0.5);
        \path (c1m) +(225:1) coordinate (cc1);
        \path (c2m) +(225:1) coordinate (cc2);
        \path (c3m) +(225:1) coordinate (cc3);
        \path (d1m) +(315:1) coordinate (cd1);
        \path (d2m) +(315:1) coordinate (cd2);
        \foreach \v in {cc1,cc2,cc3,cd1,cd2} {
          \path (\v) +(90:0.3) coordinate (\v t) +(270:0.3) coordinate (\v b);
        }
        \path [name path=path1] (l0) -- +(60:1.2);
        \path [name path=path2] (l1) -- +(120:1.2);
        \path [name intersections={of=path1 and path2}] coordinate (wb) at (intersection-1);
        \path [name path=path1] (v) -- +(135:1);
        \path [name path=path2] (u) -- +(225:1);
        \path [name intersections={of=path1 and path2}] coordinate (bbeadcl) at (intersection-1);
        \path [name path=path1] (v) -- +(45:1);
        \path [name path=path2] (u) -- +(-45:1);
        \path [name intersections={of=path1 and path2}] coordinate (bbeadcr) at (intersection-1);
        \path [name path=path1] (q) -- +(135:1);
        \path [name path=path2] (b) -- +(225:1);
        \path [name intersections={of=path1 and path2}] coordinate (tbeadcl) at (intersection-1);
        \path [name path=path1] (q) -- +(45:1);
        \path [name path=path2] (b) -- +(-45:1);
        \path [name intersections={of=path1 and path2}] coordinate (tbeadcr) at (intersection-1);
        \path [name path=path1] (wm) -- +(90:13);
        \path [name path=path2] let \p1 = ($(u) - (bbeadcr)$), \n1 = {0.28cm+veclen(\x1,\y1)}
        in (bbeadcr) circle (\n1);
        \path [name intersections={of=path1 and path2}] coordinate (v1) at (intersection-2)
        coordinate (u1) at (intersection-1);
        \path [name path=path2] let \p1 = ($(b) - (tbeadcr)$), \n1 = {0.21cm+veclen(\x1,\y1)}
        in (tbeadcr) circle (\n1);
        \path [name intersections={of=path1 and path2}] coordinate (q1) at (intersection-2);
        \path [name path=path1] (b) -- +(90:1);
        \path [name intersections={of=path1 and path2}] coordinate (bm) at (intersection-1);
        \path [name path=path1] (d2m) -- +(270:2);
        \path [name path=path2] let \p1 = ($(u) - (bbeadcl)$), \n1 = {0.14cm+veclen(\x1,\y1)}
        in (bbeadcr) circle (\n1);
        \path [name intersections={of=path1 and path2}] coordinate (u3) at (intersection-1);
        \path [name path=path1] (w3) -- +(90:1);
        \path [name intersections={of=path1 and path2}] coordinate (v3) at (intersection-1);
        \path [name path=path2] let \p1 = ($(u) - (bbeadcl)$), \n1 = {0.21cm+veclen(\x1,\y1)}
        in (bbeadcl) circle (\n1);
        \path [name path=path1] (w4) -- +(90:1.5);
        \path [name intersections={of=path1 and path2}] coordinate (v4) at (intersection-1);
        \path [name path=path1] let \p1 = ($(u3) - (bbeadcl)$), \p2 = ($(v4) - (bbeadcl)$),
        \n1 = {veclen(\x2,\y2)}, \n2 = {veclen(\x1,\y1)}, \n3 = {atan2(\y1,\x1) - acos(\n1 / \n2)} in
        (bbeadcl) -- +(\n3:1.5);
        \path [name intersections={of=path1 and path2}] coordinate (u4t) at (intersection-1);
        \path [name path=path1] (d1m) -- +(90:6);
        \path [name path=path2] let \p1 = ($(b) - (tbeadcl)$), \n1 = {0.21cm+veclen(\x1,\y1)}
        in (tbeadcl) circle (\n1);
        \path [name intersections={of=path1 and path2}] coordinate (q3) at (intersection-1);
        \path [name path=path1] (wl) -- +(90:13);
        \path [name path=path2] let \p1 = ($(v) - (bbeadcr)$), \n1 = {0.42cm+veclen(\x1,\y1)} in
        (bbeadcr) circle (\n1);
        \path [name intersections={of=path1 and path2}] coordinate (vl) at (intersection-2)
        coordinate (ul) at (intersection-1);
        \path [name path=path2] (cc3b) -- +(45:1);
        \path [name intersections={of=path1 and path2}] coordinate (c3b) at (intersection-1);
        \path [name path=path2] (cc3t) -- +(45:1);
        \path [name intersections={of=path1 and path2}] coordinate (c3t) at (intersection-1);
        \path [name path=path2] (cc2b) -- +(45:1);
        \path [name intersections={of=path1 and path2}] coordinate (c2b) at (intersection-1);
        \path [name path=path2] (cc2t) -- +(45:1);
        \path [name intersections={of=path1 and path2}] coordinate (c2t) at (intersection-1);
        \path [name path=path2] (cc1b) -- +(45:1);
        \path [name intersections={of=path1 and path2}] coordinate (c1b) at (intersection-1);
        \path [name path=path2] (cc1t) -- +(45:1);
        \path [name intersections={of=path1 and path2}] coordinate (c1t) at (intersection-1);
        \path [name path=path2] let \p1 = ($(q) - (tbeadcr)$), \n1 = {0.42cm+veclen(\x1,\y1)} in
        (tbeadcr) circle (\n1);
        \path [name intersections={of=path1 and path2}] coordinate (ql) at (intersection-2)
        coordinate (bl) at (intersection-1);
        \path [name path=path1] (wr) -- +(90:13);
        \path [name path=path2] let \p1 = ($(v) - (bbeadcl)$), \n1 = {0.42cm+veclen(\x1,\y1)} in
        (bbeadcl) circle (\n1);
        \path [name intersections={of=path1 and path2}] coordinate (vr) at (intersection-2)
        coordinate (ur) at (intersection-1);
        \path [name path=path2] (cd2b) -- +(135:1);
        \path [name intersections={of=path1 and path2}] coordinate (d2b) at (intersection-1);
        \path [name path=path2] (cd2t) -- +(135:1);
        \path [name intersections={of=path1 and path2}] coordinate (d2t) at (intersection-1);
        \path [name path=path2] (cd1b) -- +(135:1);
        \path [name intersections={of=path1 and path2}] coordinate (d1b) at (intersection-1);
        \path [name path=path2] (cd1t) -- +(135:1);
        \path [name intersections={of=path1 and path2}] coordinate (d1t) at (intersection-1);
        \path [name path=path2] let \p1 = ($(q) - (tbeadcl)$), \n1 = {0.42cm+veclen(\x1,\y1)} in
        (tbeadcl) circle (\n1);
        \path [name intersections={of=path1 and path2}] coordinate (qr) at (intersection-2)
        coordinate (br) at (intersection-1);
        \path [name path=path1] let \p1 = ($(u) - (bbeadcl)$), \n1 = {0.42cm+veclen(\x1,\y1)} in
        (bbeadcl) circle (\n1);
        \path [name path=path2] (wb) -- (q);
        \path [name intersections={of=path1 and path2}] coordinate (um) at (intersection-1)
        coordinate (vm) at (intersection-2);
      \end{scope}
      \path [fill=black!20] (rl) -- (bl) -- (b) -- (br) -- (rr) -- cycle;
      \path [fill=blue!30]
      let \p1 = ($(ql) - (tbeadcr)$), \p2 = ($(bl) - (tbeadcr)$),
      \n1 = {360+atan2(\y1,\x1)}, \n2 = {atan2(\y2,\x2)}, \n3 = {veclen(\x1,\y1)} in
      (b) -- (um) -- (ul) -- (c3b) -- (cc3b) -- (cc3t) -- (c3t) -- (c2b) -- (cc2b)
      -- (cc2t) -- (c2t) -- (c1b) -- (cc1b) -- (cc1t) -- (c1t) -- (ql) arc (\n1:\n2:\n3)
      -- cycle
      (vl) -- (vm) -- (wb) -- (l0) -- (l0l) -- (wl) -- cycle;
      \path [pattern=north west lines,pattern color=white]
      let \p1 = ($(ql) - (tbeadcr)$), \p2 = ($(bl) - (tbeadcr)$),
      \n1 = {360+atan2(\y1,\x1)}, \n2 = {atan2(\y2,\x2)}, \n3 = {veclen(\x1,\y1)} in
      (b) -- (um) -- (ul) -- (c3b) -- (cc3b) -- (cc3t) -- (c3t) -- (c2b) -- (cc2b)
      -- (cc2t) -- (c2t) -- (c1b) -- (cc1b) -- (cc1t) -- (c1t) -- (ql) arc (\n1:\n2:\n3)
      -- cycle
      (vl) -- (vm) -- (wb) -- (l0) -- (l0l) -- (wl) -- cycle;
      \path [fill=red!30]
      let \p1 = ($(qr) - (tbeadcl)$), \p2 = ($(br) - (tbeadcl)$),
      \n1 = {atan2(\y1,\x1)}, \n2 = {atan2(\y2,\x2)}, \n3 = {veclen(\x1,\y1)} in
      (b) -- (um) -- (ur) -- (d2b) -- (cd2b) -- (cd2t) -- (d2t) -- (d1b) -- (cd1b) -- (cd1t)
      -- (d1t) -- (qr) arc (\n1:\n2:\n3) -- cycle
      (vm) -- (wb) -- (l1) -- (l1r) -- (wr) -- (vr) -- cycle;
      \path [pattern=north east lines,pattern color=white]
      let \p1 = ($(qr) - (tbeadcl)$), \p2 = ($(br) - (tbeadcl)$),
      \n1 = {atan2(\y1,\x1)}, \n2 = {atan2(\y2,\x2)}, \n3 = {veclen(\x1,\y1)} in
      (b) -- (um) -- (ur) -- (d2b) -- (cd2b) -- (cd2t) -- (d2t) -- (d1b) -- (cd1b) -- (cd1t)
      -- (d1t) -- (qr) arc (\n1:\n2:\n3) -- cycle
      (vm) -- (wb) -- (l1) -- (l1r) -- (wr) -- (vr) -- cycle;
      \path [fill=violet!30]
      let \p1 = ($(um) - (bbeadcr)$), \p2 = ($(vm) - (bbeadcr)$),
      \n1 = {atan2(\y1,\x1)}, \n2 = {360+atan2(\y2,\x2)}, \n3 = {veclen(\x1,\y1)},
      \p4 = ($(vm) - (bbeadcl)$), \p5 = ($(um) - (bbeadcl)$),
      \n4 = {atan2(\y4,\x4)}, \n5 = {atan2(\y5,\x5)}, \n6 = {veclen(\x4,\y4)} in
      (um) arc(\n1:\n2:\n3) arc (\n4:\n5:\n6) -- cycle;
      \path [pattern=crosshatch,pattern color=white]
      let \p1 = ($(um) - (bbeadcr)$), \p2 = ($(vm) - (bbeadcr)$),
      \n1 = {atan2(\y1,\x1)}, \n2 = {360+atan2(\y2,\x2)}, \n3 = {veclen(\x1,\y1)},
      \p4 = ($(vm) - (bbeadcl)$), \p5 = ($(um) - (bbeadcl)$),
      \n4 = {atan2(\y4,\x4)}, \n5 = {atan2(\y5,\x5)}, \n6 = {veclen(\x4,\y4)} in
      (um) arc(\n1:\n2:\n3) arc (\n4:\n5:\n6) -- cycle;
      \path [draw,fill=white] let \p1 = ($(u) - (bbeadcl)$), \n1 = {veclen(\x1,\y1)}
      in (u) arc (135:225:\n1) arc (-45:45:\n1) -- cycle
      (b) arc (135:225:\n1) arc (-45:45:\n1) -- cycle;
      \path [draw] (l0) -- (wb) -- (l1);
      \path [draw] let \p1 = ($(ul) - (bbeadcr)$), \p2 = ($(vl) - (bbeadcr)$),
      \n1 = {atan2(\y1,\x1)}, \n2 = {360+atan2(\y2,\x2)}, \n3 = {veclen(\x1,\y1)} in
      (cc3b) -- (c3b) -- (ul) arc (\n1:\n2:\n3) -- (wl) -- (l0l);
      \path [draw] (cc2b) -- (c2b) -- (c3t) -- (cc3t);
      \path [draw] (cc1b) -- (c1b) -- (c2t) -- (cc2t);
      \path [draw] let \p1 = ($(bl) - (tbeadcr)$), \p2 = ($(ql) - (tbeadcr)$),
      \n1 = {atan2(\y1,\x1)}, \n2 = {360+atan2(\y2,\x2)}, \n3 = {veclen(\x1,\y1)} in
      (rl) -- (bl) arc (\n1:\n2:\n3) -- (c1t) -- (cc1t);
      \path [draw] let \p1 = ($(ur) - (bbeadcl)$), \p2 = ($(vr) - (bbeadcl)$),
      \n1 = {atan2(\y1,\x1)}, \n2 = {atan2(\y2,\x2)}, \n3 = {veclen(\x1,\y1)} in
      (cd2b) -- (d2b) -- (ur) arc (\n1:\n2:\n3) -- (wr) -- (l1r);
      \path [draw] (cd1b) -- (d1b) -- (d2t) -- (cd2t);
      \path [draw] let \p1 = ($(br) - (tbeadcl)$), \p2 = ($(qr) - (tbeadcl)$),
      \n1 = {atan2(\y1,\x1)}, \n2 = {atan2(\y2,\x2)}, \n3 = {veclen(\x1,\y1)} in
      (rr) -- (br) arc (\n1:\n2:\n3) -- (d1t) -- (cd1t);

      \path [edge] (r) -- (bm);
      \path [edge] let \p1 = ($(u1) - (bbeadcr)$), \p2 = ($(v1) - (bbeadcr)$),
      \n1 = {atan2(\y1,\x1)}, \n2 = {360+atan2(\y2,\x2)}, \n5 = {veclen(\x1,\y1)},
      \p3 = ($(bm) - (tbeadcr)$), \p4 = ($(q1) - (tbeadcr)$),
      \n3 = {atan2(\y3,\x3)}, \n4 = {360+atan2(\y4,\x4)}, \n6 = {veclen(\x3,\y3)} in
      (bm) arc (\n3:\n4:\n6) -- (c1m) -- (c2m) -- (c3m) -- (u1) arc (\n1:\n2:\n5) -- (wm) -- (l0m)
      (c1m) -- (cc1) (c2m) -- (cc2) (c3m) -- (cc3);
      \path [edge] let \p1 = ($(u3) - (bbeadcr)$), \p2 = ($(v3) - (bbeadcr)$),
      \n1 = {atan2(\y1,\x1)}, \n2 = {360+atan2(\y2,\x2)}, \n3 = {veclen(\x1,\y1)} in
      (cd2) -- (d2m) -- (u3) arc (\n1:\n2:\n3) -- (w3) -- (l13)
      (cd1) -- (d1m) -- (d2m);
      \path [edge] let \p1 = ($(u4t) - (bbeadcl)$), \p2 = ($(v4) - (bbeadcl)$),
      \n1 = {atan2(\y1,\x1)}, \n2 = {atan2(\y2,\x2)}, \n3 = {veclen(\x1,\y1)} in
      (u3) -- (u4t) arc (\n1:\n2:\n3) -- (w4) -- (l14);
      \path [edge] let \p1 = ($(bm) - (tbeadcl)$), \p2 = ($(q3) - (tbeadcl)$),
      \n1 = {atan2(\y1,\x1)}, \n2 = {atan2(\y2,\x2)}, \n3 = {veclen(\x1,\y1)} in
      (bm) arc (\n1:\n2:\n3) -- (d1m);
      \node [node] at (c1m) {};
      \node [node] at (c2m) {};
      \node [node] at (c3m) {};
      \node [node] at (d1m) {};
      \node [node] at (d2m) {};
      \node [node] at (bm)  {};
      \node [node] at (u3)  {};
      \node [anchor=east] at (l0l) {$w_L$};
      \node [anchor=west] at (l1r) {$w_R$};
      \node [anchor=west,yshift=2pt] at (wr) {$w$};
      \node [anchor=west,yshift=-2pt] at (vr) {$v$};
      \node [anchor=west,yshift=2pt] at (ur) {$u$};
      \node [anchor=west,yshift=2pt] at (d2t) {$d_2$};
      \node [anchor=west,yshift=2pt] at (d1t) {$d_1$};
      \node [anchor=east,yshift=2pt] at (c3t) {$c_3$};
      \node [anchor=east,yshift=2pt] at (c2t) {$c_2$};
      \node [anchor=east,yshift=2pt] at (c1t) {$c_1$};
      \node [anchor=west,yshift=2pt] at (br) {$b$};
      \node [anchor=west,yshift=-2pt] at (qr) {$q$};
    \end{tikzpicture}%
    \caption{Given a beaded tree $B$ that weakly displays a set of MUL-trees
      $\mathcal{T}$ and in which not all reticulations are on one path, we can
      produce a beaded tree with the same number of reticulations, more
      reticulations on the same path, and which also weakly displays the
      MUL-trees in~$\mathcal{T}$.}
    \label{fig:OneLineageLocal}
  \end{figure}

  To show that any MUL-tree weakly displayed by $B$ is also weakly displayed by
  $B'$, let $T$ be a MUL-tree weakly displayed by $B$ and let $h$ be a weak
  embedding of $T$ into $B$.  We define a weak embedding $h'$ of $T$ into $B'$
  as follows: For any node $x \in V(T)$, let
  \begin{equation*}
    h'(x) = \begin{cases}
      u    & \text{if } h(x) \in \{u_L, u_R\}\\
      h(x) & \text{otherwise}
    \end{cases}.
  \end{equation*}
  Note that this ensures that $h'(x) \in V' \cup \{u\}$ because $h(x)$ is a
  tree node for all $x \in V(T)$, that is, $h(x) \notin \{v_L, v_R\}$.  This
  definition of $h'$ ensures that there exists a path $h'(xy)$ from $h'(x)$ to
  $h'(y)$ for every edge $xy$ of $T$.  Indeed, there exists a path $h(xy)$ from
  $h(x)$ to $h(y)$ in $B$ because $h$ is a weak embedding of $T$ into $B$.  If
  $h(x), h(y) \in V'$, then $h'(x) = h(x)$, $h'(y) = h(y)$, and we observed
  above that every descendant of $h'(x)$ in $B$ that belongs to $V'$ is also a
  descendant of $h'(x)$ in $B'$, that is, there exists a path $h'(xy)$ from
  $h'(x)$ to $h'(y)$.  If $h(x) \in \{u_L, u_R\}$, then $h(y) \in V'$, $h'(x) =
  u$, and $h'(y) = h(y)$.  As observed above, every descendant of $u_L$ or
  $u_R$ in $B$ that belongs to $V'$ is a descendant of $u$ in $B'$.  Thus,
  there exists a path $h'(xy)$ from $h'(x)$ to $h'(y)$ also in this case.
  Finally, if $h(y) \in \{u_L, u_R\}$, then $h(x) \in V'$, $h'(x) = h(x)$, and
  $h'(y) = u$.  As observed above, every ancestor of $u_L$ or $u_R$ in $B$ that
  belongs to $V'$ is an ancestor of $u$ in $B'$.  Thus, there exists a path
  $h'(xy)$ from $h'(x)$ to $h'(y)$ once again.  It remains to show that these
  paths can be chosen so that the two paths $h'(xy)$ and $h'(xy')$ corresponding
  to the edges $xy$ and $xy'$ between a node $x \in V(T)$ and its two children
  $y$ and $y'$ in $T$ begin with different out-edges of~$h'(x)$.

  So consider any node $x$ of $T$ and its two children $y$ and $y'$.  If
  $h'(x)$ is the top node of a bead, then the two paths $h'(xy)$ and $h'(xy')$
  can be chosen to start with different edges of this bead.  If $h'(x)$ is not
  the top node of a bead, then $h'(x) = h(x) \in V'$ and $h(x)$ is not the top
  node of a bead in $B$ either.  Since $h$ is a weak embedding of $T$ into $B$,
  $h(y)$ is a descendant of one child $z$ of $h(x)$ and $h(y')$ is a descendant
  of the other child $z'$ of $h(x)$.  Moreover, one of these two children, say
  $z$, is also a child of $h'(x)$ in $B'$.  As observed above, since $h(y)$ is
  a descendant of $z$ in $B$, $h'(y)$ is also a descendant of $z$ in $B'$, so
  we can choose the path $h'(xy)$ to start with the edge $h'(x)z$.  If $z'$ is
  a child of $h'(x)$ in $B'$, then, by an analogous argument, we can choose the
  path $h'(xy')$ to start with the edge $h'(x)z'$, so the two paths $h'(xy)$
  and $h'(xy')$ start with different out-edges of $h'(x)$.  If $z'$ is not a
  child of $h'(x)$, then $h(x) \in \{c_l, d_r\}$, $z' \in \{u_L, u_R\}$, and
  $z$ is the child of $h(x)$ not on the path from $h(x)$ to $u_L$ or $u_R$.
  In this case, $u$~is a descendant of $h'(x)$ and $h'(y')$ is a descendant of
  $u$.  Thus, we can choose $h'(xy')$ to be the concatenation of two paths from
  $h'(x)$ to $u$ and from $u$ to $h'(y')$.  Since $z$ does not belong to this
  path, the two paths $h'(xy)$ and $h'(xy')$ once again start with different
  edges.

  Since we have just shown that any MUL-tree weakly displayed by $B$ is also
  weakly displayed by $B'$, $B'$~is a solution for $\mathcal{T}$.  Moreover,
  $B'$ has the same number of beads as $B$ and, since $\lambda_{B'}(v) =
  \lambda_B(v_L) + \lambda_B(v_R)$, $\lambda_{B'}(q) = \lambda_B(v_L) +
  \lambda_B(v_R) + 1$, and $\lambda_{B'}(z) = \lambda_B(z)$ for any
  reticulation node $z \in V'$, $\lambda(B') > \lambda(B)$.  This contradicts
  the choice of $B$, so $B$ has all its beads on a single path.
\end{proof}

In what follows, we use $\mathcal{T}|_S$ and $\mathcal{T} \setminus S$ to
denote the sets $\{T_1|_{S}, \ldots, T_t|_{S}\}$ and $\{T_1 \setminus S,
  \ldots, T_t \setminus S\}$, respectively, for any set of trees $\mathcal{T} =
\{T_1, \ldots, T_t\}$ and any label set $S$.  If any tree in $\mathcal{T}|_S$
or $\mathcal{T} \setminus S$ is empty, it is removed from the set.

The following definitions and lemmas describe the structure of an optimal
solution for $\mathcal{T}$ in terms of optimal solutions for $\mathcal{T}|_{S}$
and $\mathcal{T}\setminus S$. These structural results will make it easy to
design an algorithm for \textsc{Beaded Tree}.

\begin{definition}\label{def:splitPartition}
  Given a set of MUL-trees $\mathcal{T} = \{T_1, \ldots, T_t\}$, with each
  MUL-tree $T_i$ having label set $X_i \subseteq X$, the \emph{split partition}
  $\mathcal{S} = \{S_1, \ldots, S_s\}$ of $\{T_1, \ldots, T_t\}$ is the partition
  of $X$ into minimal sets such that any two labels of the same MUL-tree in
  $F_1(\mathcal{T})$ belong to the same set in $\mathcal{S}$.
\end{definition}

\begin{definition}
  Given two phylogenetic networks $N_1$ on $X_1$ and $N_2$ on $X_2$ with $X_1
  \cap X_2 = \emptyset$, the process of \emph{joining} $N_1$ with $N_2$
  consists of identifying the root $r_1$ of $N_1$ and the root $r_2$ of $N_2$
  into a single node $u$ and making $u$ the child of a new root node $r$.
\end{definition}

\begin{observation}
  If $N$ is obtained by joining $N_1$ and $N_2$, then any MUL-tree weakly
  displayed by $N_1$ or $N_2$ is also weakly displayed by $N$.
\end{observation}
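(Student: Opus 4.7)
The plan is to prove the observation by directly constructing a weak embedding of $T$ into $N$ from a weak embedding of $T$ into $N_1$; the case of $N_2$ follows by a symmetric argument. Write $r_1$ and $r_2$ for the roots of $N_1$ and $N_2$, let $u$ be the node of $N$ obtained by identifying them, and let $r$ be the new root of $N$. I would first observe that $N$ contains every vertex of $N_1$ (with $r_1$ interpreted as $u$) and every directed edge of $N_1$ (also up to this identification). Consequently, a weak embedding $h_1 : V(T) \to V(N_1)$ can be reinterpreted as a map $h : V(T) \to V(N)$ by setting $h(x) = u$ whenever $h_1(x) = r_1$ and $h(x) = h_1(x)$ otherwise, and by reinterpreting each path $h_1(xy)$ as a directed path in $N$.

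Next I would check the three defining properties of a weak embedding. The leaf condition is immediate because no leaf image changes and the label sets $X_1$ and $X_2$ are disjoint. The path condition is immediate since every edge of $N_1$ persists as an edge of $N$, so each $h_1(xy)$ remains a directed path from $h(x)$ to $h(y)$. The only condition needing care is the third one: for every internal node $x \in V(T)$ with children $y, y'$, the paths $h(xy)$ and $h(xy')$ must start with different out-edges of $h(x)$. Here I would use the degree constraint that every internal node of $T$ has out-degree two, whereas $r_1$ has out-degree one in $N_1$. Hence no internal node of $T$ can be mapped to $r_1$ under $h_1$, so $h(x) \ne u$ for every internal node $x$. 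Any such $h(x)$ therefore has the same out-edges in $N$ as in $N_1$, and the out-edges used by $h(xy)$ and $h(xy')$ remain distinct.

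There is no real obstacle in this proof; the observation is essentially a bookkeeping check that the join operation neither removes vertices or edges of $N_1$ nor alters the local out-neighbourhoods relevant to the embedding. The one subtle point, namely the treatment of the identified root $u$, is neutralised by the degree argument above, and the symmetric argument for $N_2$ completes the proof.
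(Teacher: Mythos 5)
Your proof is correct; the paper states this as an \emph{observation} without supplying a proof, and your argument is exactly the intended bookkeeping check (edges and vertices of $N_1$ persist in $N$, and the out-degree-one root $r_1$ cannot be the image of a branching node). One small imprecision: the root of $T$ is itself an internal node of out-degree one, so it \emph{can} be mapped to $r_1$; this is harmless because the third condition of a weak embedding only constrains nodes with two children, but you should phrase the degree argument as applying to those nodes rather than to ``every internal node of $T$''.
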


The following lemma immediately suggests a strategy for constructing an
optimal beaded tree for a collection $\mathcal{T}$ of MUL-trees.

\begin{lemma}\label{lem:structureFromSpitPartition}
  Given an instance $\mathcal{T}$ of \textsc{Beaded Tree}, if $|X|= 1$ and
  $\max_{1 \le i \le t}|L(T_i)|=1$, then the optimal solution is the tree with
  a single leaf on $X$. Otherwise, let $\mathcal{S} = \{S_1, \ldots, S_s\}$ be
  the split partition of $\mathcal{T}$.  If for some $S_i$, there exists a tree
  $T$ weakly displaying the MUL-trees in $\mathcal{T}|_{S_i}$, then there
  exists an optimal solution $B$ that is obtained by joining $T$ with an
  optimal solution to $\mathcal{T}\setminus S_i$.  If no such tree $T$ exists, there exists an
  optimal solution $B$ with a bead $(u,v)$ at the root and such that $v$ is the
  root of an optimal solution for $F_1(\mathcal{T})$.
\end{lemma}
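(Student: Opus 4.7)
My plan is to treat the three cases of the statement separately: the base case, Case 2a (some $S_i$ admits a tree weakly displaying $\mathcal{T}|_{S_i}$), and Case 2b (no such $S_i$ exists). The base case is immediate: if $|X|=1$ and every $T_i$ is a single leaf labelled by the unique element of $X$, then the single-leaf tree on $X$ trivially weakly displays every $T_i$ and has no beads, which is clearly optimal.

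For Case 2a, I would construct $B$ by joining the tree $T$ on $S_i$ with an optimal solution $B'$ for $\mathcal{T}\setminus S_i$; the resulting $B$ has the same number of beads as $B'$, namely $\opt(\mathcal{T}\setminus S_i)$. The key structural observation for feasibility is that, by the definition of the split partition, every MUL-tree in $F_1(T_j)$ (for $T_j\in\mathcal{T}$) has its label set entirely inside one part of $\mathcal{S}$. Three mutually exclusive situations therefore arise for each $T_j$: all labels of $T_j$ lie in $S_i$ (so $T$ alone weakly displays $T_j$), all labels lie outside $S_i$ (so $B'$ alone weakly displays $T_j$), or $T_j|_{S_i}$ and $T_j\setminus S_i$ coincide with the two MUL-trees in $F_1(T_j)$. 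In the last case, I would obtain a weak embedding of $T_j$ into $B$ by sending the root of $T_j$ to the root of $B$ and its next node to the join-point $u$, and then embedding the two halves into $T$ and $B'$ using the given embeddings; the two out-edges of $u$ into the joined pieces supply the required different out-edges. Optimality then follows from a restriction argument: for any beaded tree $B^*$ weakly displaying $\mathcal{T}$, the beaded tree $B^*\setminus S_i$ weakly displays $\mathcal{T}\setminus S_i$ with no more beads than $B^*$, hence $B^*$ has at least $\opt(\mathcal{T}\setminus S_i)$ beads.

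For Case 2b, Lemma~\ref{lem:topBead} already supplies a solution $B$ for $\mathcal{T}$ with a bead at the root and reticulation number $\opt(F_1(\mathcal{T}))+1$, obtained by attaching a bead above an optimal solution for $F_1(\mathcal{T})$. To prove optimality, I assume for contradiction that some beaded tree $B'$ weakly displays $\mathcal{T}$ with at most $\opt(F_1(\mathcal{T}))$ beads. Since any beaded tree weakly displaying $\mathcal{T}$ also weakly displays $F_1(\mathcal{T})$, $B'$ must have exactly $\opt(F_1(\mathcal{T}))$ beads and is itself optimal for $F_1(\mathcal{T})$. By Lemma~\ref{lem:allOneLineage} I may assume all beads of $B'$ lie on a single path. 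If $B'$ has a bead at the root then Lemma~\ref{lem:topBead} yields an $F_1(\mathcal{T})$-solution with one fewer bead, a contradiction; hence $B'$ has no bead at the root and decomposes as a join $B_l\cup B_r$ of two beaded trees whose leaf sets $X_l,X_r$ partition $X$, with one side, say $B_r$, being a plain tree by the single-path property. A direct embedding analysis (the child of the root of any $T_j$ cannot be sent to the root of $B'$, since that root has only one out-edge) shows that every MUL-tree in $F_1(\mathcal{T})$ has all its labels inside $X_l$ or all inside $X_r$. By the definition of the split partition, each $S_i$ is therefore contained in either $X_l$ or $X_r$. Picking any $S_{i^*}\subseteq X_r$, the restriction $B_r|_{S_{i^*}}$ is a tree weakly displaying $\mathcal{T}|_{S_{i^*}}$, contradicting the hypothesis of Case 2b.

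The main obstacle is the Case 2b optimality argument, where I must combine Lemma~\ref{lem:allOneLineage} with a careful embedding analysis to force the leaf-set partition $\{X_l,X_r\}$ to refine $\mathcal{S}$, and then apply the elementary fact that restriction preserves weak display to extract the tree that contradicts the case hypothesis. The remaining pieces are routine bookkeeping built on Lemma~\ref{lem:topBead}, Lemma~\ref{lem:allOneLineage}, and the standard closure of weak display under restriction.
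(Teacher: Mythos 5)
Your proof is correct and uses essentially the same ingredients as the paper's: the split-partition argument for feasibility of the join in Case 2a, restriction of a beaded tree to a label subset for the optimality lower bound, and Lemmas~\ref{lem:topBead} and~\ref{lem:allOneLineage} for Case 2b. The only substantive difference is organizational: in Case 2b the paper takes an optimal solution with all beads on one path and argues directly that it must have a bead at the root (if not, one of the two sub-beaded-trees below the split node under the root is bead-free, and restricting it to any part $S_i$ of the split partition contained in its leaf set yields a tree weakly displaying $\mathcal{T}|_{S_i}$, contradicting the case hypothesis), whereas you run the identical argument as a counting contradiction against a hypothetical solution with only $\OPT(F_1(\mathcal{T}))$ beads; the two formulations are interchangeable. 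One boundary condition you skip, which the paper spends a paragraph on: in Case 2a you must verify that $S_i \neq X$, since otherwise $\mathcal{T}\setminus S_i$ is empty and the join with an optimal solution for it is undefined. The paper shows this by observing that if some tree weakly displayed all of $\mathcal{T}$, then the bipartition of $X$ induced by its topmost split node would be respected by every tree in $F_1(\mathcal{T})$, forcing $s \ge 2$ and hence $S_i \ne X$; and if no tree weakly displays $\mathcal{T}$, then $S_i \ne X$ is immediate because $T$ weakly displays $\mathcal{T}|_{S_i}$. This is a well-posedness check rather than a flaw in your main argument, but a complete proof needs it.
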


\begin{proof}
  If $|X|= 1$ and $\max_{1 \le i \le t}|L(T_i)|=1$, then the optimal solution
  clearly is the tree with a single leaf on $X$. So suppose that $|X| > 1$ and
  assume first that there exists a set $S_i \in \mathcal{S}$ such that the
  MUL-trees in $\mathcal{T}|_{S_i}$ are weakly displayed by some tree~$T$.  If
  some tree $T'$ weakly displays the MUL-trees in $\mathcal{T}$, then $s \ge 2$
  (since any tree in $F_1(\mathcal{T})$ has its leaf set contained within the
  leaf set of one of the trees in $F_1(T')$ and we can assume w.l.o.g.\ that
  not all trees in $F_1(\mathcal{T})$ are displayed by the same tree in
  $F_1(T')$).  In particular, $S_i \neq X$.  If no such tree $T'$ exists,
  then $S_i \neq X$ because $T$ weakly displays all MUL-trees in
  $\mathcal{T}|_{S_i}$.  Since $S_i \neq X$ in both cases, it follows that
  $X\setminus S_i \ne \emptyset$. Now consider any optimal solution $B'$ for
  $\mathcal{T}$.  Observe that $B' \setminus S_i$ weakly displays all MUL-trees
  in $\mathcal{T} \setminus S_i$.  Moreover, $B' \setminus S_i$ has
  reticulation number at most that of $B'$.

  Construct a network $B$ by joining $B' \setminus S_i$ with $T$.  Any MUL-tree
  $T_j \in \mathcal{T}$ with no leaves in $S_i$ is weakly displayed by $B'
  \setminus S_i$ and therefore by~$B$.  Similarly, if every leaf of $T_j$ is in
  $S_i$, then $T_j$ is weakly displayed by $T$ and therefore by $B$.  So
  suppose $T_j$ has leaves in both $S_i$ and $X \setminus S_i$.  Since
  $F_1(T_j)$ consists of two MUL-trees and $\mathcal{S}$ is a split partition
  of $\mathcal{T}$, we must have $F_1(T_j) = \{T_j|_{S_i}, T_j \setminus S_i\}$.
  Since $T$ weakly displays $T_j|_{S_i}$ and $B' \setminus S_i$ weakly displays
  $T_j \setminus S_i$, it follows that $B$ weakly displays $T_j$.  This shows
  that $B$ displays all MUL-trees in $\mathcal{T}$.  Since $B$ has reticulation
  number at most that of $B'$, $B$ is therefore an optimal solution for
  $\mathcal{T}$.

  It remains to observe that $B' \setminus S_i$ is an optimal solution to
  $\mathcal{T} \setminus S_i$, as otherwise we could obtain a solution
  for $\mathcal{T}$ that is better than $B$ by joining $T$ with an optimal solution for
  $\mathcal{T} \setminus S_i$.  Thus, the lemma holds for the case when there
  exists a tree $T$ weakly displaying all MUL-trees in $\mathcal{T}|_{S_i}$ for
  some $S_i \in \mathcal{S}$.

  Now suppose that there is no tree weakly displaying the MUL-trees in
  $\mathcal{T}|_{S_i}$ for any $S_i \in \mathcal{S}$.  By
  Lemma~\ref{lem:allOneLineage}, there exists an optimal solution $B$ with all
  reticulations on one path. Suppose that $B$ does not have a bead at the root.
  Then the child $a$ of the root is a tree node which is the root of two
  otherwise disjoint beaded trees, and at least one of these beaded trees is a
  tree $T$ (without beads).  Let $S$ be the leaves of this tree $T$.  Since we
  can assume that at least one MUL-tree in $\mathcal{T}$ has a leaf in $S$,
  there exists a set $S_i \in \mathcal{S}$ such that $S_i \cap S \ne
  \emptyset$.  Any such set $S_i$ must be a subset of $S$ because otherwise
  there exists a MUL-tree $T' \in F_1(\mathcal{T})$ that has leaves in both $S$
  and $X \setminus S$; since $a$ is a tree node that is not part of a bead,
  $T'$ would have to be weakly displayed by either $T$ or $B \setminus S$,
  which is impossible.

  So consider such a set $S_i \subseteq S$ in $\mathcal{S}$.  $B|_{S_i}$ weakly
  displays the MUL-trees in $\mathcal{T}|_{S_i}$ and is a tree because
  $B|_{S_i} = T|_{S_i}$.  Since we assumed that no tree displaying all
  MUL-trees in $\mathcal{T}|_{S_i}$ exists, $B$ must in fact have a bead at the
  root, as claimed.  By Lemma~\ref{lem:topBead}, we also have that the bottom
  part of the bead is the root of a solution $B'$ for $F_1(\mathcal{T})$ with
  reticulation number $k-1$, where $k$ is the reticulation number of $B$.
  Moreover, $B'$ must be an optimal solution for $F_1(\mathcal{T})$ because
  otherwise we could obtain a solution for $\mathcal{T}$ that is better than $B$
  by adding a bead at the root of an optimal solution for $F_1(\mathcal{T})$.
  This proves the lemma for the case when there is no tree $T$ displaying
  all MUL-trees in $\mathcal{T}|_{S_i}$ for any $S_i \in \mathcal{S}$.
\end{proof}

The next two lemmas show that not only does there exist an optimal solution to
\textsc{Beaded Tree} with all reticulations on one path, but in fact
\emph{any} optimal solution must be quite close to such a structure.

\begin{lemma}\label{lem:noDoublePairs}
  Given two beads in any optimal solution to an instance $\mathcal{T}$ of
  \textsc{Beaded Tree} such that neither bead is a descendant of the other, at
  least one of these beads has no beads strictly descended from it.
\end{lemma}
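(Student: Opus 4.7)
The plan is to prove the lemma by contradiction. I would assume some optimal solution $B$ contains two incomparable beads $b_1 = (u_1, v_1)$ and $b_2 = (u_2, v_2)$, each with a strictly descended bead, which I would call $b_1'' = (u_1'', v_1'')$ and $b_2'' = (u_2'', v_2'')$ (choosing earliest descendants for definiteness). I would then exhibit a beaded tree $B^*$ with strictly fewer beads than $B$ that still weakly displays every MUL-tree in $\mathcal{T}$, contradicting the optimality of $B$.

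The first step would be to apply the construction of Lemma~\ref{lem:allOneLineage} twice to consolidate these four beads onto a single path. Let $p$ be the least common ancestor of $b_1$ and $b_2$ in $B$; after replacing $b_1, b_2$ with the earliest beads on each branch below $p$ if they are not already such (the replacement beads are still incomparable and still have descendant beads), applying Lemma~\ref{lem:allOneLineage} at $p$ produces an equivalent beaded tree $B_1$ of the same bead count in which $b_1, b_2$ are replaced by two stacked beads immediately below $p$, terminating in a new speciation $w$ whose children root the original subtrees of $B$ below $v_1$ and $v_2$. Since $b_1''$ and $b_2''$ persist in those subtrees and are again incomparable with LCA $w$, a second application at $w$ yields $B_2$, in which all four new beads form a stacked chain between $p$ and a final speciation $w'$ whose children root the original subtrees $L, R$ of $B$ below $v_1''$ and $v_2''$.

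Next, I would let $B^*$ be the beaded tree obtained from $B_2$ by deleting the bottommost of the four stacked beads and suppressing the resulting degree-$2$ nodes, so that $B^*$ has exactly one fewer bead than $B$. Given any weak embedding $h$ of a MUL-tree $T \in \mathcal{T}$ into $B$, I would define $h^*\colon V(T) \to V(B^*)$ by setting $h^*(x) = h(x)$ whenever $h(x) \notin \{u_1, u_2, u_1'', u_2''\}$, setting $h^*(x)$ equal to the top of the second stacked bead of $B^*$ when $h(x) \in \{u_1, u_2\}$, and setting $h^*(x)$ equal to the top of the third stacked bead of $B^*$ when $h(x) \in \{u_1'', u_2''\}$. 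For each internal node $x$ of $T$ with children $y, y'$, the paths $h^*(x, y)$ and $h^*(x, y')$ would be routed through the two parallel arcs of the bead topped by $h^*(x)$ when $h^*(x)$ is a bead top, and through the two unchanged out-edges of $h^*(x)$ otherwise.

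The main obstacle will be the case-by-case verification that $h^*$ is a valid weak embedding: for every combination of values of $h(x), h(y), h(y')$, the paths $h^*(x, y)$ and $h^*(x, y')$ must exist in $B^*$ and start with distinct out-edges of $h^*(x)$. This should be tractable by observing that $h^*(x)$ is always either an unchanged tree node of $B$ (whose out-edges are preserved in $B^*$) or the top of a bead of $B^*$ (with two parallel arcs as out-edges); in either case two distinct out-edges are available, and every descendant of $h(x)$ in $B$ remains a descendant of $h^*(x)$ in $B^*$. Once this verification is complete, $B^*$ is a valid solution for $\mathcal{T}$ with strictly fewer beads than $B$, contradicting the assumed optimality of $B$.
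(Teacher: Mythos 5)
Your overall strategy --- consolidating the four beads onto one path by applying the construction from the proof of Lemma~\ref{lem:allOneLineage} twice and then removing one bead --- is sound in spirit and, carried out correctly, reproduces exactly the three-bead configuration that the paper builds directly (in the paper's notation, your $b_1,b_2,b_1'',b_2''$ are $(p_L,q_L),(p_R,q_R),(u_L,v_L),(u_R,v_R)$). The gap is in which bead you delete and where you remap. The construction of Lemma~\ref{lem:allOneLineage} places the former paths from $q_L$ to $u_L$ and from $q_R$ to $u_R$ (the nodes $c_1,\dots,c_l,d_1,\dots,d_r$) \emph{between} the two beads it creates, so in your $B_2$ these tree nodes lie strictly between the third stacked bead (whose top is the new speciation node $w$ introduced by the first application) and the fourth. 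Now consider a gene-tree node $z'$ with image $c_i$ whose child $z$ has image $u_L$ under the original weak embedding --- nothing excludes this, and a general proof must handle it. Your $h^*$ keeps $z'$ at $c_i$ but sends $z$ to $w$, which is a strict \emph{ancestor} of $c_i$ in $B^*$; there is then no directed path from $h^*(z')$ to $h^*(z)$, so $h^*$ is not a weak embedding. This also violates your own stated invariant that every descendant of $h(x)$ in $B$ remains a descendant of $h^*(x)$ in $B^*$.

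The repair is small: delete the \emph{third} stacked bead rather than the fourth. Its top node $w$ is created fresh by the first application, so none of the transformed embeddings maps any gene-tree node to it; its parallel arcs are never needed for a divergence, and suppressing them leaves all reachability intact. Doing so yields precisely the paper's network $B'$ --- beads at the old least common ancestor, at the merge of $p_L$ and $p_R$, and at the merge of $u_L$ and $u_R$, in that order along one path, with the intermediate tree nodes between the second and third beads --- together with the paper's remapping. The remaining ingredients of your argument (choosing as ``earliest'' beads the first beads on the paths toward $b_1$ and $b_2$ so that the consolidated beads still have descendant beads, and checking that the least common ancestor and the new node $w$ are tree nodes not in beads so the construction applies) are correct and match the paper's direct construction.
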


\begin{proof}
  The proof is similar to the proof of Lemma~\ref{lem:allOneLineage}.

  Consider an optimal solution $B$ and suppose for the sake of contradiction
  that the claim does not hold for $B$.  Then let $(p_L, q_L)$, $(p_R, q_R)$,
  $(u_L,v_L)$ and $(u_R,v_R)$ be four distinct beads such that $(p_L, q_L)$ is
  not an ancestor of $(p_R, q_R)$ and $(p_R, q_R)$ is not an ancestor of $(p_L,
  q_L)$ , but $(p_L, q_L)$ is an ancestor of $(u_L,v_L)$ and $(p_R,q_R)$ is an
  ancestor of $(u_R,v_R)$.  See Figure~\ref{fig:noDoublePairs}.  Moreover,
  assume that $(p_L, q_L)$, $(p_R, q_R)$, $(u_L,v_L)$ and $(u_R,v_R)$ are the
  earliest such beads, that is, the condition is not satisfied if we replace
  any one of these beads with one of its strict ancestors.  This implies that
  there are no beads on the path between $q_L$ and $u_L$, on the path between
  $q_R$ and $u_R$ or on the path from the least common ancestor of $p_L$ and
  $p_R$ to either $p_L$ or $p_R$.

  Let $x$ be the least common ancestor of $p_L$ and $p_R$.  If $p_L$ is not a
  child of $x$, then let $a_1, \ldots a_s$ be the nodes on the path from $x$ to
  $p_L$.  Similarly, if $p_R$ is not a child of $x$, then let $b_1, \ldots b_t$
  be the nodes on the path from $x$ to $p_R$.  If $u_L$ is not a child of
  $q_L$, then let $c_1, \ldots c_l$ be the nodes on the path from $q_L$ to
  $u_L$.  Similarly, if $u_R$ is not a child of~$q_R$, then let $d_1, \ldots
  d_r$ be the nodes on the path from $q_R$ to $u_R$.  Note that $a_1, \ldots
  a_s, b_1, \ldots b_t, c_1, \ldots c_l, d_1, \ldots d_r$ are all tree nodes.
  Finally let $w_L$ be the single child of $v_L$ and $w_R$ the single child of
  $v_R$.

  Construct $B'$ from $B$ as follows:  Delete the nodes $p_L$, $q_L$, $u_L$,
  $v_L$, $p_R$, $q_R$, $u_R$, and $v_R$ and any edges incident to them, as well
  as the edges $xa_1$ and $xb_1$.  Now add new nodes $y$, $p$, $q$, $u$, $v$,
  and $w$, and add a pair of parallel arcs from $x$ to $y$, from $p$ to $q$,
  and from $u$ to $v$, as well as arcs $ya_1$, $a_sb_1$, $b_tp$, $qc_1$,
  $c_ld_1$, $d_ru$, $vw$, $ww_L$, and $ww_R$ (see
  Figure~\ref{fig:noDoublePairs}.) (Note that this construction assumes that
  $s,t,l,r \geq 1$; if this is not the case, then we can produce $B'$ by
  introducing ``dummy nodes'' $a_1$, $b_1$, $c_1$, and $d_1$ and suppressing
  them after the construction is complete.)

  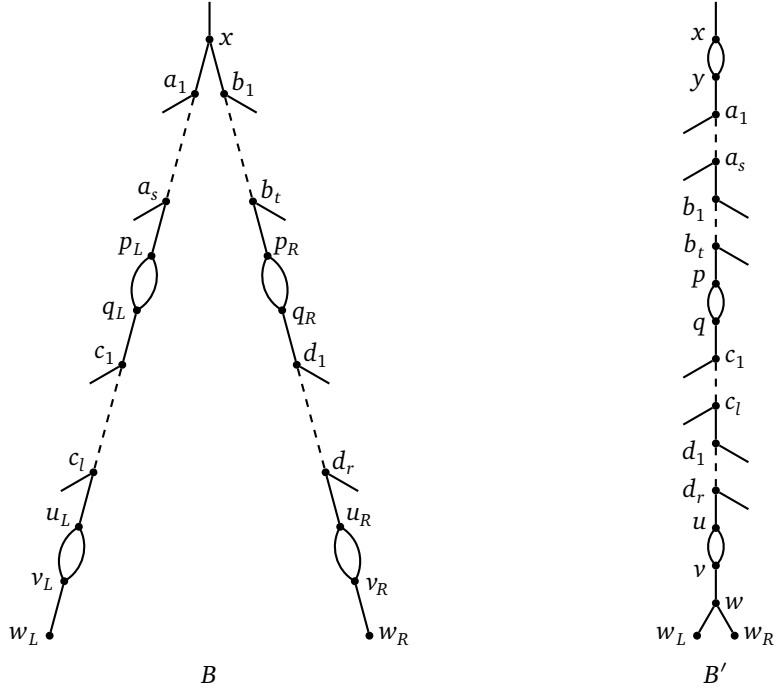
\begin{figure}[t]
    \hspace*{\stretch{1}}%
    \begin{tikzpicture}[
      node/.style={fill=black,circle,inner sep=0pt,minimum size=3pt,outer sep=0pt},
      dup/.style={fill=red,circle,inner sep=0pt,minimum size=6pt},
      wide node/.style={fill=black,rectangle,inner sep=0pt,minimum height=3pt,rounded corners=1.5pt},
      edge/.style={draw,thick},
      x=0.5cm,y=0.5cm]
      \begin{scope}[overlay]
        \path let \n1 = {0.5cm*(sin(60)+16)} in coordinate (r) at (0,\n1) {};
        \path (r) +(270:1) node [node] (x) {};
        \coordinate (bot) at (-5,0);
        \path [name path=path1] (bot) -- +(0:10);
        \path [name path=path2] (x) -- +(255:20);
        \path [name intersections={of=path1 and path2}] node [node] (wl) at (intersection-1) {};
        \path [name path=path2] (x) -- +(285:20);
        \path [name intersections={of=path1 and path2}] node [node] (wr) at (intersection-1) {};
      \end{scope}
      \path (wl) +(75:1.5) node [node] (vl) {};
      \path (wl) +(75:3) node [node] (ul) {};
      \path (wl) +(75:4.5) node [node] (cl) {};
      \path (wr) +(105:1.5) node [node] (vr) {};
      \path (wr) +(105:3) node [node] (ur) {};
      \path (wr) +(105:4.5) node [node] (dr) {};
      \path (x) +(255:1.5) node [node] (a1) {};
      \path (x) +(285:1.5) node [node] (b1) {};
      \path let \p1 = ($(a1) - (cl)$), \n1 = {(veclen(\x1,\y1) - 2.25cm) / 2} in
      (a1) +(255:\n1) node [node] (as) {}
      (b1) +(285:\n1) node [node] (bt) {};
      \path (as) +(255:1.5) node [node] (pl) {};
      \path (as) +(255:3) node [node] (ql) {};
      \path (as) +(255:4.5) node [node] (c1) {};
      \path (bt) +(285:1.5) node [node] (pr) {};
      \path (bt) +(285:3) node [node] (qr) {};
      \path (bt) +(285:4.5) node [node] (d1) {};
      \foreach \v in {a1,as,c1,cl} {
        \path (\v) +(210:1) coordinate (c\v);
      }
      \foreach \v in {b1,bt,d1,dr} {
        \path (\v) +(-30:1) coordinate (c\v);
      }
      \path [name path=path1] (pl) -- +(210:1.25);
      \path [name path=path2] (ql) -- +(120:1.25);
      \path [name intersections={of=path1 and path2}] coordinate (tlbeadcl) at (intersection-1);
      \path [name path=path1] (pl) -- +(300:1.25);
      \path [name path=path2] (ql) -- +(30:1.25);
      \path [name intersections={of=path1 and path2}] coordinate (tlbeadcr) at (intersection-1);
      \path [name path=path1] (pr) -- +(240:1.25);
      \path [name path=path2] (qr) -- +(150:1.25);
      \path [name intersections={of=path1 and path2}] coordinate (trbeadcl) at (intersection-1);
      \path [name path=path1] (pr) -- +(330:1.25);
      \path [name path=path2] (qr) -- +(60:1.25);
      \path [name intersections={of=path1 and path2}] coordinate (trbeadcr) at (intersection-1);
      \path [edge] (r) -- (x) (ca1) -- (a1) -- (x) -- (b1) -- (cb1)
      (cas) -- (as) -- (pl) (cbt) -- (bt) -- (pr)
      (ql) -- (c1) -- (cc1) (qr) -- (d1) -- (cd1)
      (ccl) -- (cl) -- (ul) (cdr) -- (dr) -- (ur)
      (vl) -- (wl) (vr) -- (wr);
      \path [edge,dashed] (a1) -- (as) (c1) -- (cl) (b1) -- (bt) (d1) -- (dr);
      \path [edge] let \p1 = ($(pl) - (tlbeadcr)$), \p2 = ($(ql) - (tlbeadcr)$),
      \p4 = ($(ql) - (tlbeadcl)$), \p5 = ($(pl) - (tlbeadcl)$),
      \n1 = {atan2(\y1,\x1)}, \n2 = {360+atan2(\y2,\x2)}, \n3 = {veclen(\x1,\y1)},
      \n4 = {atan2(\y4,\x4)}, \n5 = {atan2(\y5,\x5)}, \n6 = {veclen(\x4,\y4)} in
      (pl) arc (\n1:\n2:\n3) arc (\n4:\n5:\n6)
      (ul) arc (\n1:\n2:\n3) arc (\n4:\n5:\n6);
      \path [edge] let \p1 = ($(pr) - (trbeadcr)$), \p2 = ($(qr) - (trbeadcr)$),
      \p4 = ($(qr) - (trbeadcl)$), \p5 = ($(pr) - (trbeadcl)$),
      \n1 = {atan2(\y1,\x1)}, \n2 = {360+atan2(\y2,\x2)}, \n3 = {veclen(\x1,\y1)},
      \n4 = {atan2(\y4,\x4)}, \n5 = {atan2(\y5,\x5)}, \n6 = {veclen(\x4,\y4)} in
      (pr) arc (\n1:\n2:\n3) arc (\n4:\n5:\n6)
      (ur) arc (\n1:\n2:\n3) arc (\n4:\n5:\n6);
      \node [anchor=east,shift=(75:4pt)] at (a1) {$a_1$};
      \node [anchor=east,shift=(75:4pt)] at (as) {$a_s$};
      \node [anchor=east,shift=(75:4pt)] at (c1) {$c_1$};
      \node [anchor=east,shift=(75:4pt)] at (cl) {$c_l$};
      \node [anchor=east,shift=(75:4pt)] at (pl) {$p_L$};
      \node [anchor=east] at (ql) {$q_L$};
      \node [anchor=east,shift=(75:4pt)] at (ul) {$u_L$};
      \node [anchor=east] at (vl) {$v_L$};
      \node [anchor=east] at (wl) {$w_L$};
      \node [anchor=west,shift=(105:4pt)] at (b1) {$b_1$};
      \node [anchor=west,shift=(105:4pt)] at (bt) {$b_t$};
      \node [anchor=west,shift=(105:4pt)] at (d1) {$d_1$};
      \node [anchor=west,shift=(105:4pt)] at (dr) {$d_r$};
      \node [anchor=west,shift=(105:4pt)] at (pr) {$p_R$};
      \node [anchor=west,shift=(285:2pt)] at (qr) {$q_R$};
      \node [anchor=west,shift=(105:4pt)] at (ur) {$u_R$};
      \node [anchor=west,shift=(285:2pt)] at (vr) {$v_R$};
      \node [anchor=west] at (wr) {$w_R$};
      \node [anchor=west] at (x) {$x$};
      \node [anchor=north,text height=height("$B'$")] at (current bounding box.south) {$B$};
    \end{tikzpicture}%
    \hspace*{\stretch{1}}%
    \begin{tikzpicture}[
      node/.style={fill=black,circle,inner sep=0pt,minimum size=3pt,outer sep=0pt},
      dup/.style={fill=red,circle,inner sep=0pt,minimum size=6pt},
      wide node/.style={fill=black,rectangle,inner sep=0pt,minimum height=3pt,rounded corners=1.5pt},
      edge/.style={draw,thick},
      x=0.5cm,y=0.5cm]
      \node [node] (wl) at (0,0) {};
      \node [node] (wr) at (1,0) {}; 
      \node [node] (w) at (60:1) {};
      \path (w) foreach \v/\y in {v/1,u/2,dr/3,d1/4.25,cl/5.25,c1/6.5,q/7.5,p/8.5,bt/9.5,b1/10.75,
        as/11.75,a1/13,y/14,x/15} {
        +(90:\y) node [node] (\v) {}
      } +(90:16) coordinate (r);
      \foreach \v in {a1,as,c1,cl} {
        \path (\v) +(210:1) coordinate (c\v);
      }
      \foreach \v in {b1,bt,d1,dr} {
        \path (\v) +(330:1) coordinate (c\v);
      }
      \path [name path=path1] (x) -- +(225:1);
      \path [name path=path2] (y) -- +(135:1);
      \path [name intersections={of=path1 and path2}] coordinate (beadcl) at (intersection-1);
      \path [name path=path1] (x) -- +(-45:1);
      \path [name path=path2] (y) -- +(45:1);
      \path [name intersections={of=path1 and path2}] coordinate (beadcr) at (intersection-1);
      \path [edge] let \p1 = ($(x) - (beadcr)$), \p2 = ($(y) - (beadcr)$),
      \p4 = ($(y) - (beadcl)$), \p5 = ($(x) - (beadcl)$),
      \n1 = {atan2(\y1,\x1)}, \n2 = {360+atan2(\y2,\x2)}, \n3 = {veclen(\x1,\y1)},
      \n4 = {atan2(\y4,\x4)}, \n5 = {atan2(\y5,\x5)}, \n6 = {veclen(\x4,\y4)} in
      (r) -- (x) (y) -- (a1) -- (ca1) (cas) -- (as) -- (b1) -- (cb1)
      (cbt) -- (bt) -- (p) (q) -- (c1) -- (cc1) (ccl) -- (cl) -- (d1) -- (cd1)
      (cdr) -- (dr) -- (u) (v) -- (w) (wl) -- (w) -- (wr)
      (x) arc (\n1:\n2:\n3) arc (\n4:\n5:\n6)
      (p) arc (\n1:\n2:\n3) arc (\n4:\n5:\n6)
      (u) arc (\n1:\n2:\n3) arc (\n4:\n5:\n6);
      \path [edge,dashed] (a1) -- (as) (b1) -- (bt) (c1) -- (cl) (d1) -- (dr);
      \node [anchor=east]                 at (wl) {$w_L$};
      \node [anchor=west]                 at (wr) {$w_R$};
      \node [anchor=west]                 at (w)  {$w$};
      \node [anchor=east,yshift=-2pt]     at (v)  {$v$};
      \node [anchor=east,yshift=2pt]      at (u)  {$u$};
      \node [anchor=east]                 at (dr) {$d_r$};
      \node [anchor=east,yshift=-0.125cm] at (d1) {$d_1$};
      \node [anchor=west]                 at (cl) {$c_l$};
      \node [anchor=west]                 at (c1) {$c_1$};
      \node [anchor=east,yshift=-2pt]     at (q)  {$q$};
      \node [anchor=east,yshift=2pt]      at (p)  {$p$};
      \node [anchor=east]                 at (bt) {$b_t$};
      \node [anchor=east,yshift=-0.125cm] at (b1) {$b_1$};
      \node [anchor=west]                 at (as) {$a_s$};
      \node [anchor=west]                 at (a1) {$a_1$};
      \node [anchor=east,yshift=-2pt]     at (y)  {$y$};
      \node [anchor=east,yshift=2pt]      at (x)  {$x$};
      \node [anchor=north,text height=height("$B'$")] at (current bounding box.south) {$B'$};
    \end{tikzpicture}%
    \hspace*{\stretch{1}}%
    \caption{Given a beaded tree $B$ with a double pair of reticulations,
      $\{(p_L,q_L),(u_L,v_L)\}, \{(p_R,q_R),(u_R,v_R)\}$, where neither pair is
      an ancestor of the other, we can construct a beaded tree $B'$ with fewer
      reticulations.}
    \label{fig:noDoublePairs}
  \end{figure}

  We now show that any MUL-tree weakly displayed by $B$ is also weakly
  displayed by $B'$.  Let $T$ be a MUL-tree weakly displayed by $B$, and let
  $h$ be a weak embedding of $T$ into $B$.  Then we define a weak embedding
  $h'$ of $T$ into $B'$ as follows:  For any node $z \in V(T)$, we set
  \begin{equation*}
    h'(z) = \begin{cases}
      p & \text{if } h(z) \in \{p_L, p_R\}\\
      u & \text{if } h(z) \in \{u_L, u_R\}\\
      h(z) & \text{otherwise}
    \end{cases}.
  \end{equation*}
  Note that $h(z) \notin \{q_L, q_R, v_L, v_R\}$ because $h(z)$ is a tree node
  for all $z \in V(T)$.  Observe that, if there is a path from $u'$ to $v'$ in
  $B$, for any two nodes $u',v' \in V(B) \setminus
  \{p_L,q_L,u_L,v_L,p_R,q_R,u_R,v_R\}$, then there is a path from $u'$ to $v'$
  in $B'$.  Moreover, if there is a path in $B$ from $h(x')$ to $h(y')$, for
  any two nodes $x',y' \in V(T)$, then there is a path $h'(x'y')$ in $B'$ from
  $h'(x')$ to $h'(y')$.  It remains to verify that these paths can be chosen
  such that, for any tree node $x' \in V(T)$ with children $y'$ and $z'$, the
  two paths $h'(x'y')$ and $h'(x'z')$ start with different out-edges of
  $h'(x')$.

  So consider any tree node $x' \in V(T)$ and its two children $y'$ and $z'$.
  Since $h$ is a weak embedding, $h(x')$ is a tree node and, by construction,
  so is $h'(x')$.  If $h'(x')$ is the top node of a bead, then $h'(x'y')$ and
  $h'(x'z')$ can be chosen to start with different parallel arcs of this bead.
  So assume $h'(x')$ is not the top of a bead in $B'$.  Then, by construction,
  $h(x')$ is not the top part of a bead in $B$ and $h(y')$ and $h(z')$ are
  descendend from different children of $h(x')$ in $B$.  If no out-arcs of
  $h(x')$ were deleted in the construction of $B'$, then the children of
  $h'(x')$ are the same as the children of $h(x')$, and these children are
  still ancestors of $h'(y')$ and $h'(z')$. Thus paths $h'(x'y')$ and
  $h'(x'z')$ can still be chosen to start with different out-edges of $h'(x')$.
  The final case is when $h'(x')$ is not the top of a bead and at least one
  out-arc of $h(x')$ was deleted in the construction of $B'$.  In this case,
  $h'(x') \in \{a_s, b_t, c_l, d_r\}$.  It is easy to check that in each of
  these cases, $h'(y')$ and $h'(z')$ are still descendants of different
  children of $h'(x')$.

  This completes the proof that any MUL-tree weakly displayed by $B$ is also
  weakly displayed by $B'$.  Moreover, $B'$ has fewer beads than $B$ (as we
  replaced the four beads $(p_L,q_L), (p_R,q_R), (u_L,v_L), (u_R,v_R)$ with the
  three beads $(x,y), (p,q), (u,v)$), contradicting the optimality of $B$.
  Thus, there is no optimal solution $B$ for $\mathcal{T}$ that does not
  satisfy the lemma.
\end{proof}

Using Lemmas~\ref{lem:allOneLineage} and~\ref{lem:noDoublePairs}, we can show
the following lemma.  Intuitively speaking, it says that any optimal solution
to  \textsc{Beaded Tree} must have ``almost all reticulations on one path'', in
the sense that most reticulations exist on a single path, and any branch coming
off of this path leads to at most one reticulation.

\begin{lemma}\label{lem:almostAllOneLineage}
  Given any optimal solution $B$ to an instance $\mathcal{T}$ of \textsc{Beaded
    Tree}, there exists a path from the root to a leaf of $B$ such that any
  node not on this path has at most one strict descendant that is a
  reticulation.
\end{lemma}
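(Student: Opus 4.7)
The plan is to build the path $P$ using the chain structure of big beads guaranteed by Lemma~\ref{lem:noDoublePairs}, and then argue by contradiction that $B$ admits a strict improvement whenever any node off $P$ violates the desired property.

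By Lemma~\ref{lem:noDoublePairs}, the big beads of $B$ (those having at least one strictly descended reticulation) are pairwise comparable and thus form a chain $C$ in $B$. I would let $P$ be any root-to-leaf path starting at the root, descending through every bead of $C$ in order (taking either of its two parallel arcs), and then continuing arbitrarily from the bottom of $C$ down to a leaf. Because every node of a beaded tree has a unique parent node (the two incoming arcs of any reticulation share a common tail), the ancestors of each node form a chain back to the root; in particular every ancestor of a node on $P$ lies on $P$, so any $v\notin P$ has its entire subtree disjoint from $P$, and that subtree contains no big bead.

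The heart of the proof is to show that no such $v$ can have two or more strict descendant reticulations. Suppose for contradiction that $v\notin P$ does. Then its subtree contains two distinct small beads $\beta_1,\beta_2$ (no big beads are available in $v$'s subtree), and they must be incomparable: if one were a strict ancestor of the other, the upper one would have a bead strictly below it and hence be big.

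To derive the desired contradiction with the optimality of $B$, I would construct a beaded tree $B'$ with one fewer reticulation that still weakly displays every MUL-tree in $\mathcal{T}$. In the spirit of the constructions used in Lemmas~\ref{lem:noDoublePairs} and~\ref{lem:allOneLineage}, I would remove the two small beads $\beta_1,\beta_2$ (suppressing their four endpoints so that the relevant region becomes a pure tree) and insert a single new bead on $P$ positioned so that its bottom is an ancestor, in $B'$, of every leaf that was originally a descendant of $\beta_1$ or $\beta_2$. To translate any weak embedding $h\colon T\to B$ into a weak embedding $h'\colon T\to B'$, I would remap every $T$-node $x$ whose $h$-image is the top of $\beta_1$ or $\beta_2$ to the top of the new bead, and leave all other images unchanged. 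The main obstacle is then the verification that $h'$ is a valid weak embedding: the ``different out-edges of $h'(x)$'' condition must still hold at every internal node of $T$, no length-zero path may arise, and all paths must remain inside $B'$. The crucial structural facts powering the reduction are (i) that a single bead placed above a tree region can simultaneously serve arbitrarily many bifurcations whose two children merely need to descend into that region, since both parallel arcs converge to the same bottom from which the children's paths can diverge naturally in the underlying tree, and (ii) that any two $T$-nodes whose $h$-images are the tops of $\beta_1$ and $\beta_2$ must be incomparable in $T$, because the leaves of $B$ below $\beta_1$ and below $\beta_2$ have disjoint label sets (so no length-zero path ever appears after the remapping). Once this verification is completed, $|B'|<|B|$ contradicts the optimality of $B$ and proves the lemma.
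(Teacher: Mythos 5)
There is a genuine gap in the central step of your argument: the claim that two incomparable ``small'' beads $\beta_1,\beta_2$ can be replaced by a single bead while preserving weak display is false, and in fact cannot be repaired, because optimal solutions containing two incomparable beads exist. Take $\mathcal{T}=\{T_1\}$ with $T_1=((a,a),(b,b))$: every beaded tree weakly displaying $T_1$ needs two beads (with a single bead $(u,v)$, both children of the root-child of $T_1$ would have to map to $u$, leaving their parent no node from which two out-edge-distinct paths to $u$ exist), so the solution with one bead above $a$ and an incomparable bead above $b$ is optimal and unmergeable. The concrete failure mode in your construction is a $T$-node whose image lies strictly between the new bead and the old positions of $\beta_1,\beta_2$ (e.g.\ at their least common ancestor in $B$) and which is an ancestor in $T$ of nodes mapped to the tops of $\beta_1$ and $\beta_2$: after remapping, those images jump above its own image, so the required downward paths no longer exist. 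Your check (ii) only rules out comparability between the two remapped nodes themselves, not this situation. A second, related problem is the ``continuing arbitrarily'' in your definition of $P$: for $\mathcal{T}=\{((a,a),(b,b)),\,(c,(a,b))\}$, the optimal solution whose root-child splits into the leaf $c$ on one side and incomparable beads above $a$ and above $b$ on the other lets you choose $P$ ending at $c$, putting an off-path node with two reticulation descendants in play even though $B$ is optimal (the lemma holds for $B$ via the path through the bead above $a$). Your argument would then ``refute'' the optimality of an optimal solution, which shows the contradiction step cannot go through.

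The paper's proof avoids both issues. Negating the statement over \emph{all} root-to-leaf paths yields \emph{two incomparable} nodes $a,b$, each with at least two reticulation strict descendants; this second violating node is essential and your single fixed $P$ does not provide it. The paper then applies Lemma~\ref{lem:allOneLineage} locally to linearize the beads below each of $a$ and $b$ into a chain (if either chain shrinks to at most one bead, a reticulation is saved outright), producing two incomparable towers of at least two beads each --- precisely the configuration that Lemma~\ref{lem:noDoublePairs} forbids in an optimal solution. Your use of Lemma~\ref{lem:noDoublePairs} to show that the ``big'' beads form a chain is correct, but to complete the proof you need to extract the second violating node and run the linearize-then-apply-Lemma~\ref{lem:noDoublePairs} argument rather than a direct two-into-one merge.
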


\begin{proof}
  Suppose for the sake of contradiction that the claim does not hold, that is,
  for any path $P$ in $B$, there exists a node $u$ not in $P$ that has at least
  two reticulations among its strict descendants.  In particular, this implies
  that there exist two nodes $a,b$ such that $a$ is not an ancestor of $b$, $b$
  is not an ancestor of $a$, and each of $a$ and $b$ is a strict ancestor of at
  least two reticulations.  Let $B_a$ be the part of $B$ descended from $a$ and
  let $B_b$ be the part of $B$ descended from~$b$.  By
  Lemma~\ref{lem:allOneLineage}, there exist beaded trees $B_a'$ and $B_b'$
  such that $B_a'$ weakly displays every MUL-tree weakly displayed by $B_a$,
  $B_b'$ weakly displays every MUL-tree weakly displayed by $B_b$, $B_a'$ has
  no more reticulations than $B_a$, $B_b'$ has no more reticulations than
  $B_b$, and both $B_a'$ and $B_b'$ have all their reticulations on a single
  path.  By replacing $B_a$ and $B_b$ with $B_a'$ and $B_b'$, respectively, in
  $B$, we obtain a beaded tree $B'$ that weakly displays all MUL-trees in
  $\mathcal{T}$ and has no more reticulations than $B$.  If $B_a'$ or $B_b'$
  has only one bead, then $B'$ has fewer reticulations than $B$, contradicting
  $B$'s optimality.  Thus, $B_a'$ has a bead $(p_a, q_a)$ that is an ancestor
  of another bead $(u_a, v_a)$ in $B_a'$ and $B_b'$ has a bead $(p_b, q_b)$
  that is an ancestor of another bead $(u_b, v_b)$ in $B_b'$.  Since neither
  $(p_a, q_a)$ nor $(p_b, q_b)$ is an ancestor of the other, $B'$ cannot be an
  optimal solution for $\mathcal{T}$, by Lemma~\ref{lem:noDoublePairs}.  Thus,
  since $B'$ has no more reticulations than $B$, $B$ is not an optimal solution
  for $\mathcal{T}$ either, a contradiction.
\end{proof}

\section{Beaded Tree Algorithm}\label{sec:algorithm}

In what follows, we let \textsc{Supertree} denote an algorithm that takes as
input a set of MUL-trees $\mathcal{T}$, and returns either a tree $T$ weakly
displaying all MUL-trees in $\mathcal{T}$ or the value \textsc{None} if no such
tree exists.  The algorithm of~\cite{Aho1981InferringAT} achieves this in
$O(|X|n)$ time, where $n = \sum_{i = 1}^t|T_i|$ and $|T_i|$ is the total number
of nodes in $T_i$.  We note that the algorithm of~\cite{Aho1981InferringAT} is
designed only for MUL-trees with at most one copy of each label, for the simple
reason that there is no tree weakly displaying a MUL-tree with multiple copies
of some label.  Fortunately, the fix for this is straightforward: we just let
\textsc{Supertree} return \textsc{None} whenever $\mathcal{T}$ contains a
MUL-tree with two or more copies of some label.  By the following lemma, an
optimal solution for any instance of \textsc{Beaded Tree} can be found in
polynomial time using Algorithm~\ref{alg:greedyParentalHybrid}.
An example of a network produced by this algorithm is shown in
Figure~\ref{fig:algorithm}.

\begin{algorithm}[t]
  \KwIn{A set of MUL-trees $\mathcal{T} = \{T_1, \ldots, T_t\}$}
  \KwOut{A beaded tree $B$ with the minimum number of reticulations
    that weakly displays the MUL-trees in $\mathcal{T}$}
  \BlankLine
  \uIf{$|X| = 1$ and $\max_{1 \le i \le t}|L(T_i)| = 1$}{
    \Return a tree with $1$ leaf on $X$\;
  }
  \Else{
    Calculate the split partition $\{S_1, \ldots, S_s\}$ of $\mathcal{T}$\;
    \For{$i \gets 1$ \KwTo $s$}{
      $T \gets \textsc{Supertree}(\mathcal{T}|_{S_i})$\;
      \If{$T \ne \textsc{None}$}{
        $B' \gets \textsc{Beaded-Tree}(\mathcal{T}\setminus S_i)$\; 
        Construct $B$ by joining $B'$ and $T$\;
        \Return $B$\;
      }
    }
    $B' \gets \textsc{Beaded-Tree}(F_1(\mathcal{T})$)\;
    Construct $B$ by adding a bead whose child is the root of $B'$\;
    \Return $B$\;
  }
  \caption{Algorithm \textsc{Beaded-Tree}($\mathcal{T}$)}
  \label{alg:greedyParentalHybrid}
\end{algorithm}

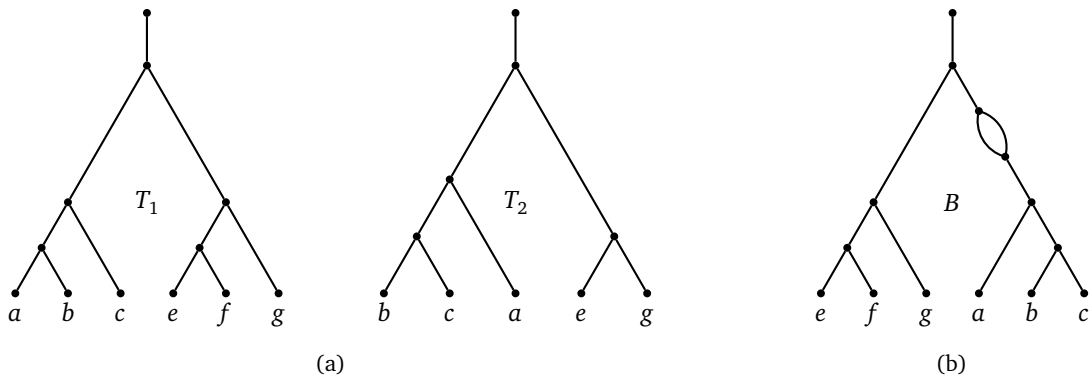
\begin{figure}[t]
  \hspace*{\stretch{1}}%
  \subcaptionbox{}{%
    \begin{tikzpicture}[
      node/.style={fill=black,circle,inner sep=0pt,minimum size=3pt,outer sep=0pt},
      dup/.style={fill=red,circle,inner sep=0pt,minimum size=6pt},
      wide node/.style={fill=black,rectangle,inner sep=0pt,minimum height=3pt,rounded corners=1.5pt},
      edge/.style={draw,thick},
      x=0.7cm,y=0.7cm]
      \begin{scope}
        \foreach \i in {0,...,5} {
          \node [node] (l\i) at (\i,0) {};
        }
        \path (l0) +(60:1) node [node] (p01) {};
        \path (l0) +(60:2) node [node] (p02) {};
        \path (l0) +(60:5) node [node] (p05) {};
        \path (l3) +(60:1) node [node] (p34) {};
        \path (l3) +(60:2) node [node] (p35) {};
        \path (p05) +(90:1) node [node] (r) {};
        \node [anchor=north,text height=height("$f$")] at (l0) {$a$};
        \node [anchor=north,text height=height("$f$")] at (l1) {$b$};
        \node [anchor=north,text height=height("$f$")] at (l2) {$c$};
        \node [anchor=north,text height=height("$f$")] at (l3) {$e$};
        \node [anchor=north,text height=height("$f$")] at (l4) {$f$};
        \node [anchor=north,text height=height("$f$")] at (l5) {$g$};
        \path [edge] (l0) -- (p01) -- (l1) (p01) -- (p02) -- (l2) (l3) -- (p34) -- (l4)
        (p34) -- (p35) -- (l5) (p02) -- (p05) -- (p35) (p05) -- (r);
        \node (T1) at (barycentric cs:p02=0.5,p35=0.5) {$T_1$};
      \end{scope}
      \begin{scope}[xshift=4.9cm,x=0.875cm,y=0.875cm]
        \foreach \i in {0,...,4} {
          \node [node] (l\i) at (\i,0) {};
        }
        \node [anchor=north,text height=height("$f$")] at (l0) {$b$};
        \node [anchor=north,text height=height("$f$")] at (l1) {$c$};
        \node [anchor=north,text height=height("$f$")] at (l2) {$a$};
        \node [anchor=north,text height=height("$f$")] at (l3) {$e$};
        \node [anchor=north,text height=height("$f$")] at (l4) {$g$};
        \path (l0) +(60:1) node [node] (p01) {};
        \path (l0) +(60:2) node [node] (p02) {};
        \path (l0) +(60:4) node [node] (p04) {};
        \path (l3) +(60:1) node [node] (p34) {};
        \path (p04) +(90:0.8) node [node] (r) {};
        \path [edge] (l0) -- (p01) -- (l1) (p01) -- (p02) -- (l2) (l3) -- (p34) -- (l4)
        (p02) -- (p04) -- (p34) (p04) -- (r);
        \node at (T1 -| r) {$T_2$};
      \end{scope}
    \end{tikzpicture}}%
  \hspace*{\stretch{2}}%
  \subcaptionbox{}{%
    \begin{tikzpicture}[
      node/.style={fill=black,circle,inner sep=0pt,minimum size=3pt,outer sep=0pt},
      dup/.style={fill=red,circle,inner sep=0pt,minimum size=6pt},
      wide node/.style={fill=black,rectangle,inner sep=0pt,minimum height=3pt,rounded corners=1.5pt},
      edge/.style={draw,thick},
      x=0.7cm,y=0.7cm]
      \begin{scope}
        \foreach \i in {0,...,5} {
          \node [node] (l\i) at (\i,0) {};
        }
        \path (l0) +(60:1) node [node] (p01) {};
        \path (l0) +(60:2) node [node] (p02) {};
        \path (l0) +(60:5) node [node] (p05) {};
        \path (l3) +(60:2) node [node] (p35) {};
        \path (l4) +(60:1) node [node] (p45) {};
        \path (p05) +(90:1) node [node] (r) {};
        \node [node] (u) at (barycentric cs:p05=0.667,p35=0.333) {};
        \node [node] (v) at (barycentric cs:p05=0.333,p35=0.667) {};
        \path [name path=path1] (u) -- +(-15:1);
        \path [name path=path2] (v) -- +(75:1);
        \path [name intersections={of=path1 and path2}] coordinate (beadcr) at (intersection-1);
        \path [name path=path1] (u) -- +(255:1);
        \path [name path=path2] (v) -- +(165:1);
        \path [name intersections={of=path1 and path2}] coordinate (beadcl) at (intersection-1);
        \node [anchor=north,text height=height("$f$")] at (l0) {$e$};
        \node [anchor=north,text height=height("$f$")] at (l1) {$f$};
        \node [anchor=north,text height=height("$f$")] at (l2) {$g$};
        \node [anchor=north,text height=height("$f$")] at (l3) {$a$};
        \node [anchor=north,text height=height("$f$")] at (l4) {$b$};
        \node [anchor=north,text height=height("$f$")] at (l5) {$c$};
        \path [edge] (l0) -- (p01) -- (l1) (p01) -- (p02) -- (l2) (l3) -- (p35) -- (p45)
        (l4) -- (p45) -- (l5) (p02) -- (p05) -- (u) (v) -- (p35) (p05) -- (r);
        \path [edge] let \p1 = ($(u) - (beadcr)$), \p2 = ($(v) - (beadcr)$),
        \p4 = ($(v) - (beadcl)$), \p5 = ($(u) - (beadcl)$),
        \n1 = {atan2(\y1,\x1)}, \n2 = {360+atan2(\y2,\x2)}, \n3 = {veclen(\x1,\y1)},
        \n4 = {atan2(\y4,\x4)}, \n5 = {atan2(\y5,\x5)}, \n6 = {veclen(\x4,\y4)} in
        (u) arc (\n1:\n2:\n3) arc (\n4:\n5:\n6);
        \node at (barycentric cs:p02=0.5,p35=0.5) {$B$};
      \end{scope}
    \end{tikzpicture}}%
  \hspace*{\stretch{1}}%
  \caption{(a) An instance $\mathcal{T} = \{T_1, T_2\}$ of \textsc{Beaded Tree}.  (b)
    The beaded tree $B$ constructed by running algorithm \textsc{Beaded-Tree} on
    $\mathcal{T}$. Initially, the split partition is $\{a,b,c\}, \{e,f,g\}$. As
    \textsc{Supertree} returns a tree on $\{e,f,g\}$, the top tree node of $B$ has
    that tree as one of its children. To construct the other side of $B$, we
    run \textsc{Beaded-Tree} on $\mathcal{T}|_{\{a,b,c\}}$, and \textsc{Supertree} does
    not return a tree on this set. Therefore, this side of $B$ begins with a
    bead.}
  \label{fig:algorithm}
\end{figure}

\begin{lemma}\label{lem:beadedTreeRunTime}
  Let $\mathcal{T} = \{T_1, \ldots, T_t\}$ be an instance of \textsc{Beaded
    Tree}, let $n = \sum_{i = 1}^t|T_i|$, and let $k$ be the reticulation
  number of an optimal solution for $\mathcal{T}$.
  Algorithm~\ref{alg:greedyParentalHybrid} finds an optimal solution for
  $\mathcal{T}$ in $O((|X|^2 + |X|k)n)$ time.
\end{lemma}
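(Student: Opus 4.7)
The plan is to establish correctness and the running-time bound separately, each by a short induction/accounting argument.

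Correctness follows from Lemma~\ref{lem:structureFromSpitPartition} by induction on $|X| + n$. In the base case $|X| = 1$ with every $T_i$ having one leaf, the algorithm returns the correct single-leaf tree. For the inductive step, if \textsc{Supertree} returns a tree $T$ for some $\mathcal{T}|_{S_i}$, then by induction the recursive call returns an optimal $B'$ for $\mathcal{T} \setminus S_i$; joining $B'$ with $T$ is optimal by the first case of Lemma~\ref{lem:structureFromSpitPartition}. If \textsc{Supertree} fails for every $S_i$, then (since \textsc{Supertree} succeeds iff a weakly-displaying tree exists) no $\mathcal{T}|_{S_i}$ is displayed by a tree, so the second case of the lemma applies and adding a bead above the optimal solution for $F_1(\mathcal{T})$ (which is returned by the recursive call, by induction) yields an optimal solution.

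For the running-time bound I would bound the work per invocation (excluding the single recursive call it makes) and the recursion depth. The $F_1$ forest and the split partition can be built in $O(n)$ time using union-find on the labels of each tree in $F_1(\mathcal{T})$. The crucial per-invocation accounting is the total cost of the \textsc{Supertree} calls: the call on $\mathcal{T}|_{S_i}$ costs $O(|S_i|\,n_i)$ where $n_i = \sum_j |T_j|_{S_i}|$. Since $|T_j|_{S_i}|$ is bounded by a constant times the number of leaves of $T_j$ labelled in $S_i$ (after suppressing degree-$1$ nodes), and since the $S_i$ partition $X$, we have $\sum_i n_i = O(\sum_j |L(T_j)|) = O(n)$. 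Therefore
\[
\sum_i |S_i|\,n_i \;\le\; |X| \sum_i n_i \;=\; O(|X|\,n),
\]
and one invocation runs in $O(|X|\,n)$ time, where in recursive calls $|X|$ and $n$ are no larger than their original values.

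For the depth, I use the potential $\phi = |X| + k$, where $k$ is the reticulation number of an optimal solution for the current instance. In the join branch $|X|$ drops by $|S_i| \ge 1$ while $k$ cannot increase (an optimal solution for $\mathcal{T} \setminus S_i$ is no larger than the restriction of an optimal solution for $\mathcal{T}$); in the bead branch $|X|$ is unchanged and $k$ decreases by exactly one, by Lemma~\ref{lem:topBead}. Hence $\phi$ strictly decreases at each call, so there are at most $|X| + k$ recursive calls. Multiplying by the per-call cost gives the claimed $O((|X|^2 + |X|k)\,n)$ bound.

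The main obstacle, and the step that deserves the most care in the write-up, is the bookkeeping argument that the total \textsc{Supertree} cost per invocation is $O(|X|\,n)$ rather than, say, $O(|X|\,n)$ per $S_i$; this requires observing that the $|T_j|_{S_i}|$ are controlled by leaf counts so that $\sum_i n_i = O(n)$. The depth analysis via Lemma~\ref{lem:topBead} is then short.
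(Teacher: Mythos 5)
Your proposal is correct and follows essentially the same route as the paper: correctness via Lemma~\ref{lem:structureFromSpitPartition} (the paper leaves the induction implicit), an $O(|X|n)$ bound per invocation from the split-partition computation plus the \textsc{Supertree} calls, and at most $|X|+k+O(1)$ recursive calls because each call either shrinks $X$ without increasing the optimal reticulation number or keeps $X$ and decreases it. The only cosmetic difference is that your finer accounting via $\sum_i n_i = O(n)$ is not actually needed; the paper simply uses $\sum_i |S_i| \cdot n \le |X| n$, which already suffices.
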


\begin{proof}
  The correctness of the algorithm follows from
  Lemma~\ref{lem:structureFromSpitPartition}.  To analyze the running time,
  observe that each recursive call of \textsc{Beaded-Tree} acts on an instance
  $\mathcal{T'}$ on leaf set $X'$ such that either $|X'| < |X|$ and an optimal
  solution for $\mathcal{T'}$ has at most as many reticulations as an optimal
  solution for $\mathcal{T}$, or $X' = X$ and an optimal solution for
  $\mathcal{T'}$ has fewer reticulations than an optimal solution for
  $\mathcal{T}$.  It follows that the algorithm makes at most $k + |X| + 1$
  recursive calls of \textsc{Beaded-Tree}, where $k$ is the reticulation number
  of an optimal solution to $\mathcal{T}$.

  To determine the cost of a single invocation of \textsc{Beaded-Tree}, observe
  that line~14 clearly takes constant time and line~4 takes $O(n)$ time.
  Indeed, it takes $O(|X|) = O(n)$ time to construct a graph $G = (X,
  \emptyset)$.  Then, for each tree~$T_i$, we compute the connected components
  of its depth-1 forest in $O(|T_i|)$ time.  For each such component~$C$, we
  choose one of its leaves as the ``representative leaf'' $\ell$ of the
  component and add an edge $(\ell, x)$ to $G$ for every leaf $x$ in $C$.  This
  also takes $O(|T_i|)$ time.  Doing this for all trees in $\mathcal{T}$ takes
  $O(\sum_{i=1}^t |T_i|) = O(n)$ time.  The split partition of $\mathcal{T}$ is
  now easily seen to be the partition of $X$ into the vertex sets of the
  connected components of~$G$, which can be computed in $O(n)$ time.  Each
  iteration of the for-loop in lines 5--12, excluding lines~6 and~8 takes
  constant time.  In line~6, the construction of $\mathcal{T}|_{S_i}$ is easily
  accomplished in $O(|S_i|n)$ time and the call to \textsc{Supertree} takes
  $O(|S_i|n)$ time.  Thus, excluding the cost of line~8, the total cost of all
  iterations of the for-loop is $O(\sum_{i=1}^s |S_i|n) = O(|X|n)$ and the
  entire invocation of \textsc{Beaded-Tree} takes $O(|X|n)$ time.

  Since the algorithm makes at most $k + |X| + 1$ invocations, its total cost is
  thus $O((k + |X| + 1) |X|n) = O((|X|^2 + |X|k)n)$.
\end{proof}

\section{Minimizing Bead Depth}\label{sec:depth}

Lemma~\ref{lem:almostAllOneLineage} implies that any optimal solution to an
instance of \textsc{Beaded Tree} has a very restrictive structure.  Informally
speaking, there is a single path in the beaded tree that may contain any number
of reticulations, and any ``branch'' coming off this path can contain at most
one reticulation.  Because of the close relationship between \textsc{Beaded
  Tree} and \textsc{Unrestricted Minimal Episodes Inference} (described in Lemma~\ref{lem:WGDequivalence}), the same
structural properties apply to optimal solutions for the latter problem: there
is one main path containing any number of duplication episodes, and any path
branching off from the main path contains at most one duplication episode.

This structure is quite unusual.  Furthermore, it is not  clear why, from a
biological perspective, it should be the case that most duplications occur on a
single path.  For this reason, we now consider the
problems \textsc{Unrestricted Minimal Episodes Depth Inference} and
\textsc{Beaded Tree Depth}.

\medskip

\noindent \textsc{Unrestricted Minimal Episodes Depth Inference}\\
\noindent \textbf{Input}: A set $\mathcal{T} = \{T_1, \ldots, T_t\}$ of
MUL-trees with label sets $X_1, \ldots, X_t \subseteq X$.\\
\noindent \textbf{Output}:  A duplication tree $D$ on $X$ with the minimum
number of duplication nodes on any path from the root to a leaf and such that
$D$ is consistent with each of~$T_1,\ldots, T_t$.

\medskip

\noindent \textsc{Beaded Tree Depth}\\
\noindent \textbf{Input}: A set $\mathcal{T} = \{T_1, \ldots, T_t\}$ of
MUL-trees with label sets $X_1, \ldots, X_t \subseteq X$.\\
\noindent \textbf{Output}:  A beaded tree $B$ on $X$ with the minimum number of
beads on any directed path and such that $B$ weakly displays each
of~$T_1,\ldots, T_t$.

\medskip

By a similar argument to the proof of Lemma~\ref{lem:WGDequivalence}, these
two problems are equivalent.

\textsc{Unrestricted Minimal Episodes Depth Inference} is loosely based on the
following two assumptions: separate lineages accumulate duplications
independently; there is a maximal duplication rate that does not vary too much
between lineages. Given that $d$ duplication episodes happened on one path,
these assumptions make it reasonable to expect at most $d$ duplication episodes
on any other path disjoint from it (with same evolutionary length). In
particular, this holds for all paths (lineages) starting at the root, which
justifies the maximum depth formulation.  These assumptions seem close to those
of evolutionary models. However, this does not make the \textsc{Unrestricted
  Minimal Episodes Depth Inference} problem model-based.  The problem is still
one of parsimony: we \emph{minimize} the maximum depth or, equivalently, the
duplication rate.

Note that solutions to \textsc{Unrestricted Minimal Episodes Depth Inference}
explicitly do not contain unnecessarily highly placed duplications: where the
proof of Lemma~\ref{lem:allOneLineage} ``zipped'' duplication episodes at much
as possible, we now ``unzip'' them to avoid ``stacking'' duplications as in the
proof of Lemma~\ref{lem:allOneLineage}.  Hence, this new problem is
biologically motivated and it has more reasonable solutions than
\textsc{Unrestricted Minimal Episodes Inference}.

Fortunately, it turns out that a similar algorithm to that for \textsc{Beaded
  Tree} can be used to solve \textsc{Beaded Tree Depth}. The difference between
the two algorithms may be summed up as follows: Both algorithms begin by
considering the split partition of the set of MUL-trees under consideration. If
any set in this partition can be ``solved'' using a tree, then for both
problems it is always optimal to assume that the solution does not start with a
bead, but instead includes such a tree as a child of the top tree node.  If the
split partition consists of a single set (and there is more than one leaf),
then any possible solution (even a non-optimal solution) must begin with a
bead.  For the remaining cases, we essentially have a choice; there exist
solutions that begin with a bead and solutions that don't.  The algorithm for
\textsc{Beaded Tree} always introduces a bead in these cases; the algorithm for
\textsc{Beaded Tree Depth} never does.  The following lemma is the basis for
our algorithm to solve \textsc{Beaded Tree Depth} and establishes its
correctness.

\begin{lemma}\label{lem:depthStructureFromSpitPartition}
  Let $\mathcal{T}$ be an instance of \textsc{Beaded Tree Depth}, and let
  $\{S_1, \ldots, S_s\}$ be the split partition of $\mathcal{T}$.  If $|X|= 1$
  and $\max_{1 \le i \le t} |L(T_i)| = 1$, then the optimal solution is the
  tree with a single leaf on $X$. Otherwise, if $s = 1$, then every optimal
  solution $B$ has a bead $(u,v)$ at the root and the child of $v$ is the root
  of an optimal solution for $F_1(\mathcal{T})$.  If $s > 1$, then any network~$B$ obtained by joining an optimal solution for~$\mathcal{T}|_{S_s}$ with an optimal solution for $\mathcal{T}\setminus S_s$ is optimal for~$\mathcal{T}$. Such a network~$B$ always exists.
\end{lemma}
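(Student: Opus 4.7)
The plan is to dispose of the three cases in the order stated. The first case, $|X|=1$ with $\max_i|L(T_i)|=1$, is immediate: the single-leaf tree on $X$ weakly displays every single-leaf input and has bead depth $0$, which is optimal.

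For the case $s=1$ with $|X|>1$ or $\max_i|L(T_i)|>1$, I will first show that every optimal solution $B$ must have a bead at its root. Assume otherwise; then the child $a$ of the root of $B$ is a tree node whose two children induce a partition of $X$ into nonempty subsets $L$ and $R$. For each $T\in\mathcal{T}$, fix a weak embedding $h$ of $T$ into $B$, let $x$ be the child of the root $r_T$ of $T$, and let $y_l,y_r$ be the children of $x$. Because the root of $B$ has out-degree $1$, $h(x)$ cannot equal the root; if $h(x)=a$, the requirement that $h(xy_l)$ and $h(xy_r)$ leave $h(x)$ via distinct out-edges forces $L(T_l)$ and $L(T_r)$ to split across $\{L,R\}$; if $h(x)$ lies strictly below $a$, then $L(T_l)\cup L(T_r)$ is contained in a single side. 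Hence each tree in $F_1(\mathcal{T})$ has its labels inside a single block of $\{L,R\}$, so the split partition refines $\{L,R\}$ and $s\ge 2$, contradicting $s=1$. For the second assertion I will use a depth analogue of Lemma~\ref{lem:topBead} whose proof transfers verbatim, since adding or removing a bead at the root changes every root-to-leaf bead count by exactly one: this gives a solution for $F_1(\mathcal{T})$ rooted at the child of $v$ with depth one less than that of $B$, and any strictly better solution for $F_1(\mathcal{T})$ would lift to a strictly better solution for $\mathcal{T}$, contradicting the optimality of $B$.

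For the case $s>1$, optimal solutions $B_1$ for $\mathcal{T}|_{S_s}$ and $B_2$ for $\mathcal{T}\setminus S_s$ exist because the set of achievable bead depths is a nonempty set of nonnegative integers, so the join $B$ exists. Validity of $B$ follows from the minimality of the split partition: for each $T\in\mathcal{T}$ the two trees of $F_1(T)$ each sit inside a single block, so either $L(T)\subseteq S_s$, $L(T)\subseteq X\setminus S_s$, or one of $T_l,T_r$ sits in $S_s$ and the other in $X\setminus S_s$; in the latter case, a weak embedding of $T$ into $B$ is assembled by sending the child of $r_T$ to the identified root of $B_1$ and $B_2$ and routing the two sub-embeddings down the two sides. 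For optimality, every root-to-leaf path in $B$ enters exactly one side at the identified root, so $B$ has bead depth $\max(d_1,d_2)$ where $d_i$ is the bead depth of $B_i$. Conversely, for any solution $B^*$ of $\mathcal{T}$, the restrictions $B^*|_{S_s}$ and $B^*\setminus S_s$ are beaded trees that weakly display $\mathcal{T}|_{S_s}$ and $\mathcal{T}\setminus S_s$ respectively (obtained by restricting weak embeddings), and each has bead depth at most that of $B^*$, so $d_1,d_2$ are bounded by the depth of $B^*$ and $B$ is optimal.

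The main obstacle will be the case analysis in the $s=1$ argument, particularly the careful reasoning about where $h(r_T)$ and $h(x)$ land relative to the top tree node $a$ of $B$ and about how the distinct-out-edges condition is forced. A secondary technical point, used in the $s>1$ optimality argument, is checking that $B^*|_{S_s}$ (and similarly $B^*\setminus S_s$) is genuinely a beaded tree whose bead depth does not exceed that of $B^*$: one must verify that leaf-deletion followed by suppression can never strand a reticulation outside a bead (a bead $(u,v)$ disappears cleanly exactly when all descendants of $v$ are deleted, and otherwise survives intact), and that every bead on a root-to-$\ell$ path in $B^*$ for $\ell\in S_s$ is preserved on the corresponding path in $B^*|_{S_s}$.
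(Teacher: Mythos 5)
Your proposal is correct and follows essentially the same route as the paper: the same three-case split, the same argument that a trivial split partition forces a bead at the root (you phrase it as the contrapositive of the paper's "pick a straddling tree in $F_1(T)$ and derive a contradiction with weak displayability", together with the depth analogue of Lemma~\ref{lem:topBead}), and the same join-and-restrict argument for $s>1$, where you are slightly more explicit than the paper about why $B^*|_{S_s}$ and $B^*\setminus S_s$ remain beaded trees of no greater bead depth. No substantive differences or gaps.
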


\begin{proof}
  If $|X| = 1$ and $\max_{1 \le i \le t}|L(T_i)| = 1$, then the optimal
  solution clearly is the tree with a single leaf on~$X$. So assume that $|X| >
  1$ and assume first that the split partition of $\mathcal{T}$ is trivial ($s
  = 1$).  Consider an optimal solution $B$.  We prove first that $B$ must have
  a bead at the root.  Assume the contrary.  Since either $|X| > 1$ or $\max_{1
    \le i \le t}|L(T_i)|>1$, $B$ is not a tree with a single leaf.  Therefore,
  the child of the root of $B$ is a \emph{split node} $a$, that is, a tree node
  that is not in a bead. Let $b_1$ and $b_2$ be the children of $a$ and let $S$
  and $X \setminus S$ be the disjoint leaf sets descended from $b_1$ and $b_2$,
  respectively.  Since both $b_1$ and $b_2$ have non-empty sets of descendant
  leaves, $S$ is a non-empty proper subset of $X$.

  Since the split partition is trivial, there exists at least one MUL-tree $T
  \in \mathcal{T}$ such that some MUL-trees $T' \in F_1(T)$ has a leaf $\ell_1
  \in S$ and a leaf $\ell_2 \in X \setminus S$.  Let $r_T$ be the root of $T$,
  $x$ the child of $r_T$, and $y_l$ and $y_r$ the children of $x$. Without loss
  of generality, $T'$ is the tree obtained from $T$ by deleting $y_r$ and all
  its descendants, and suppressing $x$.  We show that $B$ does not weakly
  display $T$, which is the desired contradiction.  So consider any weak
  embedding $h$ of $T$ into $B$.  If $h(y_l)$ is a proper descendant of $a$,
  then either $h(\ell_1)$ or $h(\ell_2)$ is not a descendant of $h(y_l)$, a
  contradiction because $y_l$ is an ancestor of both $\ell_1$ and $\ell_2$ in
  $T$.  Thus, $h(y_l) \in \{r, a\}$.  $h(y_l) = r$ is impossible because $h(x)$
  must be a proper ancestor of $h(y_l)$.  Thus, $h(y_l) = a$ and $h(x) = r$.
  Since $a$ is the only child of $r$, this implies that both paths $h(xy_l)$
  and $h(xy_r)$ start with the edge $ra$, again a contradiction.  This proves
  that $B$ must have a bead at the root.

  The part of $B$ descended from this bead must be an optimal solution
  to $F_1(\mathcal{T})$ because otherwise we could obtain a solution
  for $\mathcal{T}$ that is better than $B$ by constructing an optimal solution for
  $F_1(\mathcal{T})$ and adding a bead at its root.
  This proves the lemma for the case when $s = 1$.

  Finally, assume that $\mathcal{T}$ does not have a trivial split partition,
  that is, $s > 1$.  For any collection $\mathcal{T}'$ of MUL-trees, let
  $\OPT(\mathcal{T'})$ denote an optimal solution to $\mathcal{T}'$.
  For any beaded tree $B$, let $d(B)$ be the maximum number of beads along
  any root-to-leaf path in $B$.
  We show first that the beaded tree $B$ defined in the lemma weakly displays
  all trees in $\mathcal{T}$.

  Any MUL-tree $T \in \mathcal{T}$ with no leaves in $S_s$ is weakly displayed
  by $\OPT(\mathcal{T} \setminus S_s)$ and therefore by $B$.  Similarly, if all
  leaves of $T$ belong to $S_s$, then $T$ is weakly displayed by
  $\OPT(\mathcal{T}|_{S_s})$ and therefore by $B$.  So suppose $T$ has leaves
  in both $S_s$ and $X \setminus S_s$.  Since $F_1(T)$ consists of two
  MUL-trees and $\{S_1, \ldots, S_s\}$ is a split partition of $\mathcal{T}$,
  we must have $F_1(T) = \{T|_{S_s}, T \setminus S_s\}$.  Since $T|_{S_s} \in
  \mathcal{T}_{S_s}$ and $T \setminus S_s \in \mathcal{T} \setminus S_s$, the
  former is weakly displayed by $\OPT(\mathcal{T}_{S_s})$ and the latter is
  weakly displayed by $\OPT(\mathcal{T} \setminus S_s)$.  Thus, $T$ is once
  again weakly displayed by $B$.  This shows that $B$ weakly displays all trees
  in $\mathcal{T}$.

  Now, since $\OPT(\mathcal{T})$ weakly displays all MUL-trees in
  $\mathcal{T}|_{S_s}$ and $\mathcal{T} \setminus S_s$ and $B$ is obtained by
  joining $\OPT(\mathcal{T}|_{S_s})$ and $\OPT(\mathcal{T} \setminus S_s)$, we
  have $d(B) = \max(d(\OPT(\mathcal{T}|_{S_s})), d(\OPT(\mathcal{T} \setminus
  S_s))) \le d(\OPT(\mathcal{T}))$, that is, $B$ is an optimal solution
  for~$\mathcal{T}$.
\end{proof}

\begin{algorithm}[t]
  \KwIn{A set of MUL-trees $\mathcal{T} = \{T_1, \ldots, T_t\}$}
  \KwOut{A beaded tree $B$ with the minimum bead depth that weakly displays all
    MUL-trees in $\mathcal{T}$}
  \BlankLine
  \uIf{$|X|= 1$ and $\max_{1 \le i\le t}|L(T_i)|=1$}{
    \Return a tree with $1$ leaf on $X$\;
  }
  \Else{
    Calculate the split partition $\{S_1, \ldots, S_s\}$ of $\mathcal{T}$\;
    \uIf{$s = 1$}{
      $B' \gets \textsc{Bead-Depth}(F_1(\mathcal{T})$)\;
      Construct $B$ by adding a bead whose child is the root of $B'$\;
    }
    \Else{
      $B' \gets \textsc{Bead-Depth}(\mathcal{T}|_{S_s})$\;
      $B'' \gets \textsc{Bead-Depth}(\mathcal{T}\setminus S_s)$\;
      Construct $B$ by joining $B'$ and $B''$\;
    }
    \Return $B$\;
  }
  \caption{Algorithm \textsc{Bead-Depth}($\mathcal{T}$)}
  \label{alg:greedyBeadDepth}
\end{algorithm}

\begin{lemma}
  Let $\mathcal{T} = \{T_1, \ldots, T_t\}$ be an instance of \textsc{Beaded
    Tree Depth}.  Algorithm~\ref{alg:greedyBeadDepth} finds an optimal solution
  for $\mathcal{T}$ in $O((|X|^2 + |X|k)n)$ time, where $n = \sum_{i =
    1,}^t|T_i|$ and $k$ is the reticulation number of the computed solution.
\end{lemma}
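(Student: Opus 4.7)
The plan is to parallel the proof of Lemma~\ref{lem:beadedTreeRunTime}: derive correctness from the structural characterization provided by Lemma~\ref{lem:depthStructureFromSpitPartition}, then analyze the recursion tree of \textsc{Bead-Depth} to bound the running time.

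For correctness, I would argue by induction on the pair $(|X|, \sum_i |T_i|)$, ordered lexicographically. The base case $|X|=1$ with $\max_i |L(T_i)|=1$ is handled directly in line~1. For the inductive step, each branch of the algorithm mirrors one case of Lemma~\ref{lem:depthStructureFromSpitPartition}. In the $s=1$ branch, $F_1(\mathcal{T})$ is strictly smaller in $\sum_i|T_i|$, so the recursive call returns an optimal solution for $F_1(\mathcal{T})$ by induction, and prepending a bead gives an optimal solution for $\mathcal{T}$. In the $s>1$ branch, both $\mathcal{T}|_{S_s}$ and $\mathcal{T}\setminus S_s$ have strictly smaller label sets, so by induction the recursive calls return optimal solutions and joining them yields an optimal solution for $\mathcal{T}$, as asserted by the lemma.

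For the running time, I would analyze the recursion tree of \textsc{Bead-Depth}. Each internal node is either a \emph{bead node} (the $s=1$ branch, one recursive child) or a \emph{split node} (the $s>1$ branch, two recursive children on disjoint label sets). The leaves of the recursion tree are base cases, each corresponding to a distinct species of $X$, so there are at most $|X|$ of them and hence at most $|X|-1$ split nodes. Each bead node contributes exactly one bead to the output, so there are exactly $k$ bead nodes, where $k$ is the reticulation number of the computed solution. Altogether, the recursion tree has $O(|X|+k)$ invocations.

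It remains to bound the cost of a single invocation. Using essentially the same arguments as in the proof of Lemma~\ref{lem:beadedTreeRunTime}, the split partition can be computed in $O(n)$ time via connected components of the auxiliary graph on $X$, and each of $F_1(\mathcal{T})$, $\mathcal{T}|_{S_s}$, and $\mathcal{T}\setminus S_s$ can be built in $O(n)$ time once we have an $O(|X|)$ membership indicator for $S_s$. Bounding each invocation's cost by $O(|X|n)$ uniformly and multiplying by $O(|X|+k)$ invocations yields the claimed $O((|X|^2+|X|k)n)$ bound. The main obstacle is the bookkeeping of the recursion tree: because split invocations can branch, one must verify that the disjointness of label sets at split nodes is what keeps the total number of invocations linear in $|X|+k$, and that $k$ here is genuinely the reticulation number of the \emph{computed} network (rather than of some other optimum), since \textsc{Bead-Depth} minimizes depth rather than total reticulation count.
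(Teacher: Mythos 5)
Your overall approach matches the paper's: correctness is read off from Lemma~\ref{lem:depthStructureFromSpitPartition}, each invocation costs $O(|X|n)$ by the same analysis as Lemma~\ref{lem:beadedTreeRunTime}, and the number of invocations is bounded by $O(|X|+k)$. Your way of getting the invocation count --- counting recursion-tree leaves (pairwise distinct species, since the lowest common ancestor of two leaves must be a split node whose children have disjoint label sets), hence at most $|X|-1$ split nodes, plus exactly $k$ bead nodes since line~7 is the only place a bead is created --- is a direct combinatorial count where the paper instead sets up and solves the recurrence $S(x,k) \le 1 + \max\bigl(S(x,k-1),\, S(x_1,k_1)+S(x_2,k_2)\bigr) = 2(x+k)-1$; the two arguments are interchangeable, and yours correctly ties $k$ to the \emph{computed} solution, which is what the lemma statement requires.

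There is one step that fails as written: your induction measure for correctness. In the $s=1$ branch you claim $F_1(\mathcal{T})$ is strictly smaller in $\sum_i|T_i|$, but $F_1$ preserves the total node count. If $T$ has root $r$, child $x$, and subtrees rooted at $y_l,y_r$, then $|T| = 2 + |T_{y_l}| + |T_{y_r}|$ while $|T_l| = 1+|T_{y_l}|$ and $|T_r| = 1+|T_{y_r}|$ (the root is retained in both halves and only $x$ is suppressed), so $|T_l|+|T_r| = |T|$. Hence the lexicographic pair $\bigl(|X|, \sum_i|T_i|\bigr)$ does not decrease in that branch and the induction does not terminate on this measure. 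The fix is easy: induct instead on $|X|$ plus the total number of out-degree-$2$ nodes in the trees of $\mathcal{T}$ (each application of $F_1$ removes one such node from every tree with more than one leaf, and at least one such tree exists when $s=1$ and the base case does not apply), or equivalently on $|X|$ plus the number of beads in the computed solution --- which is exactly the quantity your own recursion-tree count, and the paper's recurrence, already rely on.
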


\begin{proof}
  The correctness of the algorithm follows from
  Lemma~\ref{lem:depthStructureFromSpitPartition}.  To analyze the running
  time, the cost per invocation of \textsc{Bead-Depth} is $O(|X|n)$, by the
  same analysis as in the proof of Lemma~\ref{lem:beadedTreeRunTime}.
  To bound the number of recursive calls, observe that the input to the
  recursive call in line~6 has label set $X$ and has an optimal solution
  with $k-1$ reticulations.
  The inputs to the recursive calls in lines~9 and~10 have label sets
  $S_s$ and $X \setminus S_s$ and have optimal solutions with $k_1$ and
  $k_2$ reticulations, respectively, where $k_1 + k_2 = k$.
  Thus, if $S(x, k)$ is the number of recursive calls made on an input
  with $x = |X|$ and having $k$ reticulations in the optimal solution,
  we have $S(x, k) = 1 + \min(S(x, k-1), S(x_1, k_1) + S(x_2, k_2))$,
  where $x_1 + x_2 = x$ and $k_1 + k_2 = k$.
  This recurrence has the solution $S(x, k) = 2(x + k) - 1$.
  Thus, the running time of the algorithm is $O((2|X| + 2k - 1) \cdot |X|n)
  = O((|X|^2 + |X|k)n)$.
\end{proof}

\section{Concluding Remarks}\label{sec:conclusion}

Although we have shown that the \textsc{Unrestricted Minimal Episodes
  Inference} and \textsc{Parental Hybridization} problems are solvable in
polynomial time, we have also shown that the phylogenies produced by solving
these problems have a severely restricted structure.

The optimal phylogenetic network that our algorithm produces for the
\textsc{Parental Hybridization} problem is always a phylogenetic tree with
``beads'', where a bead consists of a speciation directly followed by a
reticulation. Such solutions are not necessarily the most realistic or likely
ones since they contain a lot of ``extra lineages'', that is, multiple lineages
of an input tree travelling through the same branch of the phylogenetic
network.  Minimizing the total number of extra lineages, the \emph{XL-score}~\cite{yu2011coalescent},
irrespective of the reticulation number, is also not ideal, since there always
exists a solution with zero extra lineages and possibly a very high
reticulation number. Therefore, the most relevant open problem that needs to be
solved is to find a phylogenetic network that minimizes a weighted sum of the
XL-score and the reticulation number of the network. Another alternative
problem formulation that seems reasonable is to minimize the total number of
parental trees that the constructed phylogenetic network has in addition to the
input trees.

Another option would be to completely exclude beads in the solutions. However,
although this is an interesting theoretical open problem, we do not see a
reason why the resulting optimal solutions would by any more realistic, or why
it would be reasonable to assume that a speciation cannot be followed by a
reticulation. 

Regarding \textsc{Unrestricted Minimal Episodes Inference}, the situation is in
some sense even worse. We have shown that \emph{all} optimal solutions have
a very specific structure: there is one main path from the root to a taxon
  containing potentially many duplication episodes, while each path branching
  off this main path contains at most one duplication episode. Although such
scenarios are not to be excluded (for example see the eukaryotic species
phylogeny from~\cite{guigo1996reconstruction}), it is unrealistic to expect all
phylogenies to look like this (see for example Figure~\ref{fig:WGDExample} for
a phylogeny where the duplication episodes are significantly more spread out).
Therefore, we have proposed an alternative problem that minimizes the
``duplication depth'': the maximum number of duplication episodes that lie on
any directed path. This problem can also be solved in polynomial time and we
expect it to produce more realistic solutions. Moreover, note that, although the
problem definition does not exclude unnecessary duplication episodes as long as
they do not increase the duplication depth, our algorithm will not create such
redundant duplication episodes. Nevertheless, to properly assess the two
algorithms, it is necessary to implement both algorithms and extensively test
them on simulated and real biological data sets.


Finally, it would be interesting to study more general problem variants, which
simultaneously take different processes into account, such as duplication
episodes, hybridization, and gene loss and transfers. Although such problems
have been studied in a reconciliation setting where the species tree is
(assumed to be) known, there has been less work on variants where the species
tree or network needs to be inferred. Although such problems seem daunting, we
have shown here that not knowing the species tree can actually make
computational problems easier.

\bibliographystyle{alpha}
\bibliography{parentalHybrid}

\end{document}